\documentclass[11pt]{article}

\usepackage[centertags]{amsmath}
\usepackage{amsfonts}
\usepackage{amssymb}
\usepackage{amsthm}
\usepackage{newlfont}
\usepackage{epsfig}
\usepackage{amscd}
\usepackage{mathrsfs}

\newcommand{\RR}{{\mathbb R}}

\newcommand{\CC}{{\mathbb C}}
\newcommand{\beq}{\begin{equation}}
\newcommand{\eeq}{\end{equation}}
\newcommand{\ba}{\begin{array}}
\newcommand{\ea}{\end{array}}
\newcommand{\bea}{\begin{eqnarray}}
\newcommand{\eea}{\end{eqnarray}}

\usepackage{graphicx}
\usepackage{color}
\newcommand{\corr}[1]{{\color{black}{#1}}}
\usepackage{transparent} 

\newtheorem{theorem}{Theorem}

\newtheorem{lemma}{Lemma}[section]

\newtheorem{definition}{Definition}[section]
\newtheorem{example}{Example}[section]

\numberwithin{equation}{section}
\allowdisplaybreaks

\begin{document}

\begin{center}
{\large   \bf The direct problem for the perturbed  Kadomtsev-Petviashvili II one line solitons} 
 
\vskip 15pt

{\large  Derchyi Wu}

\vskip 5pt

{ Institute of Mathematics, Academia Sinica, 
Taipei, Taiwan}

e-mail: {\tt mawudc@gate.sinica.edu.tw}

\vskip 10pt

{\today}

\end{center}

\begin{abstract}
We provide rigorous analysis for the direct scattering theory   of perturbed  Kadomtsev-Petviashvili II one line solitons. Namely,  for generic small initial data, the existence of the eigenfunction is proved by establishing uniform estimates of the Green function and the Cauchy integral equation for the eigenfunction is justified by analysing the spectral transform.
\end{abstract}

\section{Introduction}\label{S:motivation}
\noindent 

The well-posedness problem of the Kadomtsev-Petviashvili II (KPII) equation 
\begin{equation}\label{E:KPII-intro}
\begin{gathered}
(4u_t+u_{xxx}-6uu_x)_x+3u_{yy}=0, \\
u(x,y,0)=u_N(x)+v_0(x,y),   
\end{gathered}
\end{equation}
where $u_N$ is an $N$ line soliton, has been initiated by Bourgain \cite{Bo93} and solved by Molinet-Saut-Tzvetkov \cite{MST11}. Their results show that the deviation of the KPII solution from the initial (perturbed) line soliton could grow exponentially  unbounded during the evolution which is not consistent with the  {isospectral property} of integrable systems. Excellent $L^2$-{  orbital stability  and $L^2$-                                                    instability  theories were established  by Mizumachi \cite{M15} for perturbed KPII one line solitons. But the approach has difficulties to be generalized to multi line solitons.   

The {\bf inverse scattering theory (IST)} is a powerful method to identify and solve classes of integrable nonlinear PDEs and integrable dynamical systems. Many important nonlinear PDEs, such as KdV, NLS, sine-Gordon, etc, have been studied by this mothod. Integrability of the KPII equation has been known since the beginning of the 1970s. It can be integrated via the {\bf Lax pair}
\begin{equation}\label{E:KPII-lax-1}
\begin{split}
\left\{ 
{\begin{array}{l}
\left(\partial_y-\partial_x^2+u\right)\Psi(x,y,t,\lambda)=0,\\
\left(\partial_t-\left(-\partial_x^3+\frac 32u\partial_x+\frac 34u_x+\frac 34\partial_x^{-1}u_y+(-i\lambda)^3\right)\right)\Psi(x,y,t,\lambda)=0.
\end{array}}
\right.
\end{split}
\end{equation}  The IST of 
 the spectral operator   $
\partial_y-\partial_x^2+u $   was solved, schematically, solvability of the spectral equation is transformed to that of a Cauchy integral equation  or $\overline\partial$-equation, by \cite{W87}, \cite{GN88}, \cite{G97} in case the potential  {$u$  rapidly decays at spatial infinity}. In particular, an $L^2_k$-stability theorem  can be implied by \cite{W87}. As  $u$  is a perturbed  multi line soliton, 
 two research groups  \cite{BPP97}-\cite{BP214}, \cite{VA04}   have   published substantial and important works  on algebraic characterization and formal IST. In particular, the most remarkable characteristic, discontinuities for the Green function and eigenfunction   had been discovered by Boiti, Pempenelli, Pogrebkov, and Prinari (cf \cite{BPPP01-p}, \cite{BP302}, \cite{BP214}). However, with incomplete rigorous analysis for the Green function and spectral data, both groups formulated  Cauchy integral equations which  hold only for non generic initial data \cite{BPPP01-p}, \cite{BP302}, \cite
 {BP214}, \cite
 {VA04}.  
 So the IST for perturbed KPII line solitons is still an important open problem in this field  \cite{P17}, \cite{S17}.  

This report is aimed to a rigorous analysis for the direct scattering theory   of perturbed  KPII one line solitons. Precisely, 
 a uniform estimate of the Green function is established by decomposing the kernel into   Gaussian parts, oscillatory parts, rational functions, and   regular parts and analysing them separately via different techniques. The same proof also characterizes the discontinuities at singularities. Furthermore, due to discontinuities of the eigenfunction, both discrete and continuous scattering data are not meromorphic which make the kernel of the Cauchy integral equation complicated to analyze. Inspired by \cite{BP214}, we provide a regularized eigenfunction to simplify the formula.  
 Still the scattering data  blow up at singularities which, along with  discontinuous, highly oscillating, and not fully symmetric kernels,  cause  difficulties for deriving uniform estimates for the spectral transform  and solving the inverse problem.  In this paper, we  only  present non-uniform estimates of the  spectral transform which is sufficient to imply a Cauchy integral equation provided the initial data satisfying
 \[
 \ba{c}
 (1+|x|+|y|)^2\partial_y^j\partial_x^kv _0\in  {L^1\cap L^\infty},\ \  0\le j,\ k\le 4,\ \ |v_0 |_{{L^1\cap L^\infty}}\ll 1.
 \ea
 \]

The contents of the paper are as follows. In Section \ref{S:IST-kp-eigenfunction}, for generic small initial data $v_0$, we derive a uniform estimate of the Green function and apply the result to prove the existence of the eigenfunction to KPII equation. In Section \ref{S:IST-kp-sd}, we extract the scattering data of eigenfunction $m$, introduce the regularized eigenfunction $\mathfrak m$, define    the scattering operator $T$, derive spectral analysis,  and justify a singular Cauchy integral equation.

{\bf Acknowledgments}. We feel very grateful to  Y. Kodama   for introducing  the  stability problem of KPII line solitons and T. Mizumachi and B. Prinari for their visits to   Taiwan and inputs on the KPII equation.  We would like to pay  respects to the pioneer works done by Boiti, Pempinelli, Pogrebkov, Prinari, Villarroel, and Ablowitz. Special acknowledgement needs to be made to  A. Pogrebkov and P. Grinevich due to numerous inspiring  discussion and valuable suggestion. This research project was  partially supported by NSC  105-2115-M-001-001- and 106-2115-M-001-002-.

\section{The forward problem I: a class of eigenfunctions}\label{S:IST-kp-eigenfunction}

We start the investigation of the inverse scattering theory of the KPII perturbed  one-line solitons. Consider the {\it{\bf spectral equation}}
\begin{equation}\label{E:kp-line-normal}
\ba{c}
(\partial_y-\partial_x^2+u(x,y))\Psi(x,y,\lambda)=0,\\
u(x,y)=u_0(x)+v_0(x,y),\\
u_0(x )=-2\corr{\kappa^2}\textrm{sech}^2\corr{\kappa}x,\, \kappa>0,\ 
v_0(x,y)\in\RR, 
\ea
\end{equation}
with the boundary condition
\begin{equation}\label{E:kp-line-normal-bdry-1}
\ba{c}
\lim_{(x,y)\to\infty}  (\Psi(x,y,\lambda)-\vartheta_-(x,\lambda)e^{(-i\lambda) x+(-i\lambda)^2y} )=0,
\ea
\end{equation}
where $\vartheta_-(x,\lambda)$ is one of eigenfunctions of the Schr$\ddot{\mbox{o}}$dinger operator corresponding to the KdV one soliton $u_0(x)$, i.e., \cite{MS91}
\beq\label{E:eigen-x}
\ba{rl}
\left(-\partial_x^2+u_0(x)-\lambda^2\right)f=0,& f=\phi_\pm(x,\lambda),\,\psi_\pm (x,\lambda),
\\
\phi_\pm(x, \lambda)=\varphi_\pm(x,\lambda) e^{\mp i\lambda x}, & \psi_\pm(x, \lambda)=\vartheta_\pm(x, \lambda) e^{\pm i\lambda x}, \\ \varphi_\pm=1\mp\frac {2i\corr{\kappa}}{\lambda\pm i\corr{\kappa}}\left(\frac { 1} {1 +e^{-2\corr{\kappa}x}} \right), & \vartheta_\pm= 1\mp \frac {  2i \corr{\kappa}} {\lambda\pm i\corr{\kappa}}\left(\frac {1} {1 +e^{2\corr{\kappa}x}}\right)  .
\ea
\end{equation}
 
Introducing the normalizations 
\begin{equation}\label{E:renormalization}
\ba{c}
\Psi(x,y,\lambda)
= \Phi(x,y,\lambda)e^{-\lambda^2 y}= m(x,y,\lambda)e^{(-i\lambda) x+(-i\lambda)^2y}
\ea
\end{equation} the spectral equation \eqref{E:kp-line-normal} turns into 
\beq\label{E:lax-normal}
\ba{c}
 L_{\lambda}\Phi=	\left(\partial_y-\partial_x^2+u_0(x)-\lambda^2\right)\Phi=-v_0(x,y)\Phi,\\
\mathcal L_\lambda m=\left(\partial_y-\partial_x^2+2i\lambda\partial_x+u_0(x)\right)m=-v_0(x,y)m.
\ea
\eeq	
Denote $G$ and $\widetilde G$ as the Green functions 
\beq\label{E:sym-0}
\ba{c}
 L_\lambda G(x,x',y-y',\lambda)=\delta(x-x')\delta(y-y'),\\
 \mathcal L_\lambda
 \widetilde G(x,x',y-y',\lambda)=\delta(x-x')\delta(y-y').
\ea
\eeq

Formula for the Green functions are available in \cite{BP302}, \cite[Eq.(3.1)]{BP214},  \cite[Eq.(17)]{VA04} for instance. For convenience, we sketch the approach in \cite{VA04} to derive $G(x,x',y,\lambda)$ in the following lemma.

\begin{lemma}\label{L:green-heat}    For $y\ne 0$, $\lambda\notin \mathbb R\cup i\mathbb R\cup \{\lambda\in\mathbb C|\lambda\pm i\corr{\kappa}\in\mathbb R\}$, 
\beq\label{E:green-1-d}
\ba{c}
G(x,x',y,\lambda)=G_c(x,x',y,\lambda)+G_d(x,x',y,\lambda),\\
G_{d}(x,x',y,\lambda)= -\ 2\theta(-y)\theta(\corr{\kappa}-|\lambda_I|) e^{(\lambda^2+\corr{\kappa^2})y\pm \corr{\kappa}(x-x')}\mathfrak g(x,x',\pm i\corr{\kappa}),\\
G_c=G_{\mathbb C^+}=G_{\mathbb C^-},\\
G_{\mathbb C^+}(x,x',y,\lambda)
=\int_{\mathbb R}\left(\theta(y)\chi_{-}-\theta(-y)\chi_{+}\right)e^{(\lambda^2-\left[\lambda+ \lambda'\right]^2)y}\\
 \times\frac{\phi_+(x,\lambda+\lambda')\psi_+(x',\lambda+\lambda')}{2\pi a(\lambda+\lambda')}d\lambda' \\
G_{\mathbb C^-}(x,x',y,\lambda)
=\int_{\mathbb R}\left(\theta(y)\chi_{-}-\theta(-y)\chi_{+}\right)e^{(\lambda^2-\left[\lambda+ \lambda'\right]^2)y}\\
 \times\frac{\phi_-(x',\lambda+\lambda')\psi_-(x,\lambda+\lambda')}{2\pi a(\lambda+\lambda')}d\lambda'. 
\end{array}
\end{equation}Here $\phi_\pm$, $\psi_\pm$ are defined by \eqref{E:eigen-x},
\beq\label{E:g-def}
\ba{c}
\mathfrak g(x,x',\lambda )=\left\{ 
{\begin{array}{l}\varphi_+ (x,\lambda )\vartheta_+ (x',\lambda ),\  \lambda\in \CC^+,\\
\varphi_- (x',\lambda )\vartheta_- (x,\lambda ),\  \lambda\in \CC^-,\end{array}}
\right.
\ea\eeq and 
\beq\label{E:chi-a}
\ba{c}
\textit{$\chi_{+}$  the characteristic function for $\{\lambda'|\,\textit{Re}(\lambda^2-\left[\lambda+\lambda'\right]^2)>0\}$,}\\
\textit{$\chi_{-}$  the characteristic function for  $\{\lambda'|\,\textit{Re}(\lambda^2-\left[\lambda+\lambda'\right]^2)<0\}$,}\\
a(\lambda)=\left\{ 
{\begin{array}{l}a_+ ( \lambda ),\  \lambda\in \CC^+,\\
a_- ( \lambda ),\  \lambda\in \CC^- ,\end{array}}
\right.  \ a_+(\lambda)=\frac {\lambda-i\corr{\kappa}}{\lambda+i\corr{\kappa}},\   a_-(\lambda)=\frac {\lambda+i\corr{\kappa}}{\lambda-i\corr{\kappa}},\\
\textit{$\theta(y)=1$  if  $y>0$ and vanishes elsewhere,}\\
\widetilde G(x,x',y,\lambda)=G(x,x',y,\lambda) e^{i\lambda(x-x')},\ \ 
\widetilde G=\widetilde G_c+\widetilde G_d.
\ea
\eeq 
\end{lemma}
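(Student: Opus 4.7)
The plan is to Fourier transform the defining equation \eqref{E:sym-0} in the $y$ variable, so that $L_\lambda$ becomes (after a shift) the resolvent of the Schr\"odinger operator $H=-\partial_x^2+u_0$ associated to the reflectionless KdV one-soliton potential, and then invert the $y$-transform via the residue theorem. Concretely, writing $\widehat G(x,x',k,\lambda)=\int e^{-iky}G(x,x',y,\lambda)\,dy$, the equation $L_\lambda G=\delta(x-x')\delta(y)$ becomes
\[
(-\partial_x^2+u_0(x)-\lambda^2+ik)\widehat G(\cdot,x',k,\lambda)=\delta(x-x'),
\]
so $\widehat G$ is the kernel of the $x$-resolvent $(H-(\lambda^2-ik))^{-1}$. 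First I would use the completeness relation for the Jost basis $\{\phi_+(\cdot,\mu),\psi_+(\cdot,\mu)\}_{\mu\in\mathbb R}$ of \eqref{E:eigen-x}, together with the single bound state (residue at $\mu=\pm i\kappa$ of $1/a(\mu)$ which is exactly $\propto\mathrm{sech}\,\kappa x$), to represent $\widehat G$ as an integral over $\mu\in\mathbb R$ of $\phi_+(x,\mu)\psi_+(x',\mu)/[2\pi a(\mu)(\mu^2-\lambda^2+ik)]$. The parallel expansion in the $\phi_-,\psi_-$ basis gives the alternative form that will become $G_{\mathbb C^-}$.

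Next I would perform the inverse Fourier transform in $y$ by closing the $k$-contour in the upper (resp.\ lower) half-plane when $y>0$ (resp.\ $y<0$). The simple pole at $k=i(\mu^2-\lambda^2)$ contributes precisely the factor $\theta(y)\chi_-\!-\!\theta(-y)\chi_+$ with the advertised exponential $e^{(\lambda^2-\mu^2)y}$; the characteristic functions $\chi_\pm$ merely record whether the pole sits in the upper or lower half $k$-plane, i.e.\ the sign of $\mathrm{Re}(\mu^2-\lambda^2)$. The change of variables $\mu=\lambda+\lambda'$ then reproduces $G_{\mathbb C^+}$ (and $G_{\mathbb C^-}$) verbatim, and the identity $\widetilde G=G\,e^{i\lambda(x-x')}$ is immediate from the gauge relation $\mathcal L_\lambda=e^{-i\lambda x}L_\lambda e^{i\lambda x}$ implicit in \eqref{E:lax-normal}.

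The discrete term $G_d$ emerges from the same inversion when one tracks the residue of $1/a(\mu)$ at $\mu=\pm i\kappa$ generated by interchanging the $\mu$-integration with the contour required for the $k$-inversion. The horizontal contour $\mu=\lambda+\mathbb R$ crosses the pole precisely when $|\lambda_I|<\kappa$, producing the factor $\theta(\kappa-|\lambda_I|)$; evaluating the $k$-residue at $k=i(\kappa^2+\lambda^2)$ yields the exponential $e^{(\lambda^2+\kappa^2)y\pm\kappa(x-x')}$, and requiring the resulting kernel to remain bounded in $y$ selects $\theta(-y)$. Multiplying by the bound-state factor $\mathfrak g(x,x',\pm i\kappa)$ assembled from \eqref{E:g-def} gives exactly the stated $G_d$.

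The main obstacle will be the bookkeeping needed to verify that the two representations $G_{\mathbb C^+}=G_{\mathbb C^-}$ genuinely coincide, and that the discrete residues produced from the $\phi_+,\psi_+$ and from the $\phi_-,\psi_-$ expansions both match the single formula for $G_d$. This reduces to the Wronskian and reflectionless-scattering identities for the Jost solutions of $u_0$ (in particular $\phi_\pm=a_\pm\psi_\mp$, together with the symmetry between $a_+$ and $a_-$), and to a careful deformation-of-contour argument; the generic hypothesis $\lambda\notin\mathbb R\cup i\mathbb R\cup\{\lambda\pm i\kappa\in\mathbb R\}$ is precisely what guarantees that no accidental coincidence between the continuous contour $\mu=\lambda+\mathbb R$, the discrete poles $\mu=\pm i\kappa$, and the $k$-poles occurs during this deformation.
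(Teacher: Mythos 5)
Your proposal is correct and follows essentially the same route as the paper: the paper's Eq.~\eqref{E:K} is precisely the result of your Fourier inversion in $y$ with contour closing in $k$, and the rest — the completeness relation for the Jost basis of $-\partial_x^2+u_0$ with the bound-state residue of $1/a$ producing the $\theta(\kappa-|\lambda_I|)$ factor, the selection of decaying modes via the sign of $\mathrm{Re}(\lambda^2-[\lambda+\lambda']^2)$, and the identities $\phi_\pm/a_\pm=\psi_\mp$ for $G_{\mathbb C^+}=G_{\mathbb C^-}$ — matches the paper's argument step for step.
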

\begin{proof} First of all, note that if the operator  $P=P(\frac{\partial^n}{\partial x^n})$ admits a complete set of eigenfunctions $\{\phi(x,\lambda)\}$,  i.e.,
\beq \nonumber
\ba{c}
P\phi(x,\lambda)=\lambda\phi(x,\lambda) ,\ \lambda\in\mathbb R, \\ 
\int_\RR\phi(x,\lambda)\phi(x',\lambda)d\lambda=\delta(x-x'),
\ea
\eeq
 then 
\beq\label{E:K}
\ba{c}
\left(\frac \partial{\partial y}+P\right)K(x,x',y-y')=\delta(x-x')\delta(y-y'), \\
K(x,x',y)=\int_\RR \left[\theta(y)\theta(\lambda)-\theta(-y)\theta(-\lambda)\right]e^{-y\lambda}\phi(x,\lambda)\phi(x',\lambda) d\lambda .
\ea
\eeq 

Secondly, from \eqref{E:eigen-x},
\beq\label{E:jost-adj}
\ba{c}
( -\partial_x^2+u_0-\lambda^2)\phi_\pm(x,\lambda+\lambda')=[(\lambda+\lambda')^2-\lambda^2]\phi_\pm(x,\lambda+\lambda'),\\
( -\partial_x^2+u_0-\lambda^2)\psi_\pm(x,\lambda+\lambda')=[(\lambda+\lambda')^2-\lambda^2]\psi_\pm(x,\lambda+\lambda'),\\
( -\partial_x^2+u_0-\lambda^2)\phi_\pm(x,\pm i\corr{\kappa})=-({\lambda }^2+\corr{\kappa^2})\phi_\pm(x,\pm i\corr{\kappa}),\\
( -\partial_x^2+u_0-\lambda^2)\psi_\pm(x,\pm i\corr{\kappa})=-({\lambda}^2+\corr{\kappa^2})\psi_\pm(x,\pm i\corr{\kappa}),
\ea
\eeq and by the residue theorem,  
\beq\label{E:c-+}
\begin{array}{r}
\frac 1{2\pi i}\int _{\mathbb R}\left[ \frac {\phi_+(x,\lambda+\lambda')\psi_+(x',\lambda+\lambda')}{a_+(\lambda+\lambda')} - e^{i(\lambda+\lambda')(x'-x)}\right]d\lambda'\\
=\textit{Res}_{\eta\in\mathbb C,\eta_I>\lambda_I>0}\left[\frac{\phi_+(x,\eta)\psi_+(x',\eta)}{a_+(\eta)}\right],\\
\frac 1{2\pi i}\int_{\mathbb R} \left[ \frac {\phi_-(x',\lambda+\lambda')\psi_-(x ,\lambda+\lambda')}{a_-(\lambda+\lambda')} - e^{i(\lambda+\lambda')(x'-x)}\right]d\lambda'\\=-\textit{Res}_{\eta\in\mathbb C,\eta_I<\lambda_I<0}\left[\frac{\phi_-(x',\eta)\psi_-(x,\eta)}{a_-(\eta)}\right].
\end{array}
\eeq  
Plugging
\[\begin{array}{l}
\frac 1{2\pi }\int_{\lambda+\lambda'\in\mathbb R}e^{i(\lambda+\lambda')(x'-x)}d\lambda'=\delta(x-x')
\end{array}
\]into \eqref{E:c-+}, 
we prove the following orthogonality and completeness identities  
\beq\label{E:jost-adj-new}
\begin{array}{ll}
& \delta({x-x'})\\
	=&	\left\{ 
{\begin{array}{l} \frac 1{2\pi}  \int_{\mathbb R} \frac {\phi_+(x,\lambda+\lambda')\psi_+(x',\lambda+\lambda')}{a_+(\lambda+\lambda')}d\lambda'+2\theta(\corr{\kappa}-\lambda_I) \phi_+(x,i\corr{\kappa})\psi_+(x',i\corr{\kappa}),\\
 	  \frac 1{2\pi}   \int_{\mathbb R} \frac {\phi_-(x',\lambda+\lambda')\psi_-(x,\lambda+\lambda')}{a_-(\lambda+\lambda')}d\lambda'+2 \theta(\corr{\kappa}+\lambda_I)\phi_-(x',-i\corr{\kappa})\psi_-(x,-i\corr{\kappa}) \end{array}}
\right.
	\end{array}
\eeq for $\lambda\in\CC^\pm$. Therefore the formula of $G(x,x',y,\lambda)$ follows from    \eqref{E:K}, \eqref{E:jost-adj}, and \eqref{E:jost-adj-new}.

The property $G_{\mathbb C^+}=G_{\mathbb C^-}$ follows from 
\beq\label{E:pm-sym-eigen}
\ba{c}
\frac{\phi_+}{a_+}(x, \lambda)=\psi_-(x, \lambda),\quad \frac{\phi_-}{a_-}(x, \lambda)=\psi_+(x, \lambda),
\ea
\eeq and 
\beq\label{E:a-1}
\ba{rl}
\textit{If $\lambda_R> 0$, then}
&\chi_{+}(\lambda')=\left\{ 
{\begin{array}{ll}
1,&-2\lambda_R<\lambda'< 0,\\
0, &\lambda'\ge 0,\,\lambda'\le -2\lambda_R;
\end{array}}
\right.\\
& \chi_{-}(\lambda')=\left\{ 
{\begin{array}{ll}
1,&\lambda'> 0,\,\lambda'< -2\lambda_R,\\
0,&-2\lambda_R\le\lambda'\le 0 ;
\end{array}}
\right.\\
\textit{If $\lambda_R< 0$, then}
&\chi_{+}(\lambda')=\left\{ 
{\begin{array}{ll}
1,&0<\lambda'< -2\lambda_R,\\
0, &\lambda'\le 0,\,\lambda'\ge -2\lambda_R;
\end{array}}
\right.\\
&\chi_{_-}(\lambda')=\left\{ 
{\begin{array}{ll}
1,&\lambda'< 0,\,\lambda'> -2\lambda_R,\\
0, &0\le\lambda'\le -2\lambda_R,
\end{array}}
\right.
\ea
\eeq which are independent of $\lambda\in\CC^+$ or $\CC^-$.
\end{proof}

 The basic existence theorem for this section is
\begin{theorem}\label{T:KP-eigen-existence} 
If $\partial_y^j\partial_x^kv_0\in L^1\cap L^\infty$, $0\le j,\,k\le 2$, $ {|v|_{L^1\cap L^\infty}}\ll 1$, then for fixed $\lambda\neq \pm i\corr{\kappa}$,   there is a unique solution $\Psi(x,y,\lambda)=m(x,y,\lambda)e^{(-i\lambda) x+(-i\lambda)^2y}$ to the problem \eqref{E:kp-line-normal}, \eqref{E:kp-line-normal-bdry-1} and $\partial_y^j\partial_x^km\in L^\infty$.
\end{theorem}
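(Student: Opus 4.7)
The plan is to recast the spectral problem as a linear integral equation driven by the free solution $\vartheta_-(x,\lambda)$ and then invert it via a Neumann series based on the smallness of $v_0$. First, note that $\vartheta_-(x,\lambda)$ is independent of $y$, and from \eqref{E:eigen-x} satisfies $(-\partial_x^2 + 2i\lambda \partial_x + u_0)\vartheta_- = 0$, so $\mathcal L_\lambda \vartheta_- = 0$. Under the normalization $\Psi = m\, e^{(-i\lambda)x + (-i\lambda)^2 y}$, the boundary condition \eqref{E:kp-line-normal-bdry-1} becomes $m(x,y,\lambda) \to \vartheta_-(x,\lambda)$ as $(x,y) \to \infty$, and the equation $\mathcal L_\lambda m = -v_0 m$ is equivalent, using the Green function $\widetilde G$ of \eqref{E:sym-0}, to the integral equation
\[
m(x,y,\lambda) = \vartheta_-(x,\lambda) - \int_{\RR^2} \widetilde G(x,x',y-y',\lambda)\, v_0(x',y')\, m(x',y',\lambda)\, dx'dy'.
\]

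Second, I would invoke the uniform estimate of $\widetilde G$ that is the main technical result of this section. The decomposition of $G = G_c + G_d$ from Lemma \ref{L:green-heat} into Gaussian, oscillatory, rational, and regular pieces (analyzed separately, as advertised in the introduction) yields a bound on $\widetilde G$ in a mixed norm under which the operator
\[
T_\lambda[f](x,y) := \int \widetilde G(x,x',y-y',\lambda)\, v_0(x',y')\, f(x',y')\, dx'dy'
\]
maps $L^\infty(\RR^2)$ to itself with $\|T_\lambda\|_{L^\infty \to L^\infty} \le C(\lambda)\, |v_0|_{L^1 \cap L^\infty}$ whenever $\lambda \ne \pm i\kappa$. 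Since $\vartheta_- \in L^\infty_x$, the hypothesis $|v_0|_{L^1\cap L^\infty} \ll 1$ yields $\|T_\lambda\| < 1$, so $m = (I + T_\lambda)^{-1}\vartheta_-$ is defined by a convergent Neumann series in $L^\infty(\RR^2)$, giving existence and uniqueness.

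For the derivative bounds $\partial_y^j \partial_x^k m \in L^\infty$ with $0 \le j, k \le 2$, I would differentiate the integral equation, transferring derivatives onto $v_0 m$ via integration by parts and Leibniz. This produces a closed triangular system of integral equations for the family $(\partial_y^j \partial_x^k m)_{0 \le j,k\le 2}$, each with the same operator $I + T_\lambda$ on the left and an $L^\infty$ forcing built from $\partial_y^{j'}\partial_x^{k'} v_0$ (in $L^1 \cap L^\infty$ by hypothesis) and lower-order derivatives of $m$. Iterating the same Neumann argument yields the required $L^\infty$ bounds inductively in $j + k$.

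The principal obstacle is precisely the boundedness of $T_\lambda$ on $L^\infty$. The discrete part $\widetilde G_d$ carries an exponential factor $e^{\pm \kappa(x-x') + i\lambda(x-x')}$ that is unbounded in $x - x'$, and the continuous part $\widetilde G_c$ has oscillatory and Cauchy-type singular behavior in $\lambda'$; pointwise boundedness of $\widetilde G$ therefore fails, and one must show instead that the convolution-like action against $v_0 \in L^1 \cap L^\infty$ is regularizing. This amounts to proving that the bad exponential factors in $\widetilde G_d$ are compensated by the Jost decay of $\mathfrak g(x,x',\pm i\kappa)$ together with the support restriction $\theta(-y)\theta(\kappa - |\lambda_I|)$, while the oscillations and Cauchy singularities in $\widetilde G_c$ are handled by stationary-phase and $L^p$ boundedness of the Hilbert transform. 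It is exactly at $\lambda = \pm i\kappa$ that these compensations degenerate, which is why those two points must be excluded.
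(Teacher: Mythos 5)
Your overall strategy is the same as the paper's: rewrite the problem as the integral equation $m=\vartheta_--\widetilde G\ast v_0m$ (cf.\ \eqref{E:1-ode}), invert $1+\widetilde G\ast v_0$ on $L^\infty(\RR^2)$ by a Neumann series using the smallness of $|v_0|_{L^1\cap L^\infty}$, and bootstrap derivatives by differentiating the integral equation and integrating by parts. Two points, however, deserve correction. First, your description of the key technical input is off: you assert that ``pointwise boundedness of $\widetilde G$ fails'' because of the exponential factor $e^{\pm\kappa(x-x')}$ in $\widetilde G_d$ and the oscillatory/Cauchy structure of $\widetilde G_c$, and that one must instead work with a mixed norm. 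In fact Lemma \ref{L:eigen-green} establishes the genuinely pointwise bound $|\widetilde G(x,x',y,\lambda)|\le C(1+|y|^{-1/2})$, uniform in $x$, $x'$ \emph{and} $\lambda$ (not merely with a $\lambda$-dependent constant $C(\lambda)$): the exponential in $\widetilde G_d$ is cancelled pointwise by the factor $\bigl[(e^{-\kappa x}+e^{\kappa x})(e^{-\kappa x'}+e^{\kappa x'})\bigr]^{-1}$ hidden in $\mathfrak g(x,x',\pm i\kappa)$, as in \eqref{E:discrete-est}, and the oscillatory and Cauchy-type pieces of $\widetilde G_c$ are bounded via the decomposition \eqref{E:green-decomposition}. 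The only obstruction to a uniform bound is the integrable singularity $|y-y'|^{-1/2}$ from the Gaussian part, and it is precisely to absorb this that the hypothesis $v_0\in L^\infty$ (in addition to $L^1$) is needed when estimating $\|f\mapsto\widetilde G\ast v_0f\|_{L^\infty\to L^\infty}$; your proposal never identifies this, which is the actual crux of the operator bound. Second, solving the integral equation in $L^\infty$ does not by itself produce a solution of the boundary-value problem \eqref{E:kp-line-normal}--\eqref{E:kp-line-normal-bdry-1}: one must still check that $\widetilde G\ast v_0m\to 0$ as $|x|,|y|\to\infty$, which the paper does separately (Lemma \ref{L:Riemann-Lebesque}, via the uniform Green function bound and dominated convergence). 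This verification is absent from your argument and should be supplied.
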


The proof of the theorem follows from the following uniform estimate
\begin{lemma}\label{L:eigen-green}
There exists a uniform constant $C$ such that the Green function $\widetilde G$, defined by \eqref{E:sym-0} - \eqref{E:chi-a}, satisfies
\beq
\ba{c}
|\widetilde G(x,x',y,\lambda)|\le C\left(1+\frac1{\sqrt{|y|}}\right)  \label{E:eigen-green}
\ea
\eeq for $\forall x,\,x',\,y\in\mathbb R$,  $y\ne 0$, $\lambda\notin \mathbb R\cup i\mathbb R\cup \{\lambda\in\mathbb C|\lambda\pm i\corr{\kappa}\in\mathbb R\}$.
\end{lemma}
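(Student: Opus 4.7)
The plan is to prove the bound by the decomposition strategy announced in the introduction: write $\widetilde G=\widetilde G_c+\widetilde G_d$, treat the two parts separately, and within $\widetilde G_c$ further split the rational factor $\varphi_+\vartheta_+/a_+$ into a polynomial, a simple-pole, and a double-pole summand. The $|y|^{-1/2}$ term of the bound will come from a Gaussian integral, while the constant $1$ will absorb residue-type contributions.

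For the discrete part, the cutoffs $\theta(-y)\theta(\kappa-|\lambda_I|)$ in $\widetilde G_d$ restrict to $y<0$ and $|\lambda_I|<\kappa$. On this set $\mathrm{Re}[(\lambda^2+\kappa^2)y]=(\lambda_R^2-\lambda_I^2+\kappa^2)y\le 0$, so $|e^{(\lambda^2+\kappa^2)y}|\le 1$. Inserting the closed forms of $\varphi_\pm,\vartheta_\pm$ at $\pm i\kappa$ from \eqref{E:eigen-x}, the remaining factor $e^{\pm\kappa(x-x')}e^{i\lambda(x-x')}\mathfrak g(x,x',\pm i\kappa)$ reduces to a product of bounded $\mathrm{sech}$-type functions after a short case split on $\lambda\in\mathbb C^\pm$ and on the signs of $x$, $x'$, $x-x'$; the constraint $|\lambda_I|<\kappa$ is used crucially so that the surviving real exponent is nonpositive. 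This yields $|\widetilde G_d|\le C(\kappa)$.

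For the continuous part, use the identity $\phi_+(x,\eta)\psi_+(x',\eta)=\varphi_+(x,\eta)\vartheta_+(x',\eta)e^{i\eta(x'-x)}$ from \eqref{E:eigen-x} and change variables $\eta=\lambda+\lambda'$; multiplication by $e^{i\lambda(x-x')}$ cancels the $\lambda$-dependent part of the phase, leaving
\[
\widetilde G_c=\frac{1}{2\pi}\int_{\mathbb R}\bigl(\theta(y)\chi_--\theta(-y)\chi_+\bigr)\,
e^{-(\lambda'^2+2\lambda\lambda')y+i\lambda'(x'-x)}\,
\frac{\varphi_+\vartheta_+}{a_+}(\lambda+\lambda';x,x')\,d\lambda'.
\]
Partial fractions give
\[
\frac{\varphi_+\vartheta_+}{a_+}(\eta;x,x')=1+\frac{2i\kappa\,[1-B_1(x)-B_2(x')]}{\eta-i\kappa}-\frac{4\kappa^2\,B_1(x)B_2(x')}{(\eta-i\kappa)(\eta+i\kappa)},
\]
with $B_1(x)=(1+e^{-2\kappa x})^{-1}$, $B_2(x')=(1+e^{2\kappa x'})^{-1}\in[0,1]$, producing the split $\widetilde G_c=I_1+I_2+I_3$. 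For $I_1$ the real part of the Gaussian exponent is $-\lambda'(\lambda'+2\lambda_R)y$, which is nonpositive on the $\chi_\pm$ supports described by \eqref{E:a-1}; completing the square and rescaling $s=(\lambda'+\lambda_R)\sqrt{|y|}$ gives $|I_1|\le C|y|^{-1/2}$. For $I_2,I_3$ apply the pole subtraction
\[
\frac{F(\lambda')}{\lambda'-z_0}=\frac{F(\lambda')-F(z_0)}{\lambda'-z_0}+\frac{F(z_0)}{\lambda'-z_0},\qquad z_0=-\lambda\pm i\kappa,
\]
where $F$ collects the cutoff and Gaussian-oscillatory factors. The first summand is regularized by the mean value theorem and is estimated like $I_1$; the second reduces to a Cauchy integral of a truncated Gaussian, uniformly controlled by classical Faddeeva/Hilbert-transform bounds once $|F(z_0)|$ is estimated by the same inequality $|e^{(\lambda^2+\kappa^2)y}|\le 1$ used for $\widetilde G_d$.

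The main obstacle is uniformity in $\lambda$ as $\lambda$ approaches the excluded loci. When $\lambda_I\to\pm\kappa$ (i.e.\ $\lambda\pm i\kappa\in\mathbb R$), the pole $z_0=-\lambda\pm i\kappa$ migrates onto the real integration line and the Cauchy integrals in $I_2,I_3$ acquire Sokhotski--Plemelj jumps of size $\pi|F(z_0)|$; these remain finite and in fact account for the discontinuities of the Green function mentioned in the introduction. When $\lambda_R\to 0$, for the $y<0$ branch of $I_1$ the integration domain $[-2\lambda_R,0]$ shrinks to zero measure and neither the trivial length bound $2\lambda_R$ nor the Laplace-type asymptotic $C/(\lambda_R|y|)$ is individually uniform, but their minimum still yields $C|y|^{-1/2}$; keeping track of this delicate balance is what guarantees the uniform constant in the claimed range.
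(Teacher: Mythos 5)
Your overall architecture matches the paper's: split $\widetilde G=\widetilde G_c+\widetilde G_d$, bound the discrete part by the sech-type cancellation, extract the $|y|^{-1/2}$ from a Gaussian integral (including the observation that for the $y<0$ branch one must take the minimum of the length bound $2|\lambda_R|$ and the Laplace bound $C/(|\lambda_R||y|)$, which is exactly the paper's Step~2), and isolate the pole of $1/a_+$ by a subtraction; your partial-fraction split of $\varphi_+\vartheta_+/a_+$ is correct and is a legitimate variant of the paper's decomposition into $I+II+III+IV+V$.

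The gap is in the treatment of $I_2,I_3$. You subtract $F$ at the \emph{complex} pole $z_0=-\lambda\pm i\kappa$, where $F$ contains the oscillatory factor $e^{i\lambda'(x'-x)}$. At that point $|e^{iz_0(x'-x)}|=e^{(\lambda_I\mp\kappa)(x'-x)}$, which is not uniformly bounded in $x,x'$ for $\lambda_I\neq\pm\kappa$ (and, unlike in $\widetilde G_d$, the coefficients $1-B_1-B_2$ and $B_1B_2$ do not supply compensating decay); hence neither the term $F(z_0)\int d\lambda'/(\lambda'-z_0)$ nor the Sokhotski--Plemelj jump $\pi|F(z_0)|$ is under control. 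The regularized summand also cannot be ``estimated like $I_1$'' by the mean value theorem: $|F'|$ grows like $|x-x'|+|\lambda|\,|y|$, and balancing a near-pole MVT bound against a far-field $1/|\lambda'-z_0|$ bound yields at best $C\log(1+|x-x'|+|y|)$, not a uniform constant. The paper avoids both problems by subtracting at \emph{real} points of the contour in successive stages (oscillatory factor minus $1$, Gaussian factor minus $1$, $\mathfrak g$ minus $\mathfrak g(x,x',\pm i\kappa)$, and then the explicitly computable pure-pole integral $IV$ evaluated via logarithms), by treating the oscillatory difference with a contour deformation whose direction depends on the sign of $(\lambda_I\mp\kappa)(x'-x-2\lambda_Iy)$ (cases (5a)--(5c) of Step~5), so that every exponential evaluated off the real axis has modulus at most $1$, and by handling the Gaussian difference with the rescaling $\omega=\eta\sqrt{|y|}$ under the dichotomy $|\lambda_R|\sqrt{|y|}\gtrless 1$ (Step~4). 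You need these sign-dependent deformations, or an equivalent genuinely oscillatory-integral argument, to obtain uniformity in $x-x'$; as written that uniformity fails.
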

\begin{proof} 

$\underline{\emph{Step 1}}:$ From Lemma \ref{L:green-heat} and \eqref{E:a-1}, 
\beq\label{E:discrete-est}
\ba{c}
|\widetilde G_d|= |- 2\theta(-y)\theta(\corr{\kappa}-|\lambda_I|) e^{(\lambda^2+\corr{\kappa^2})y\pm \corr{\kappa}(x-x')+i\lambda(x-x')}\mathfrak g(x,x',\pm i\corr{\kappa})|\\
= | 2\theta(-y)\theta(\corr{\kappa}-|\lambda_I|) \frac{e^{(\lambda^2_R+[\corr{\kappa}^2-\lambda_I^2])y {-\lambda_I(x-x')}}}{(e^{-\corr{\kappa}x}+e^{\corr{\kappa}x})(e^{-\corr{\kappa}x'}+e^{\corr{\kappa}x'})}|<C,
\ea
\eeq and
\beq\label{E:tilde-g-lambda}
\ba{rl}
&\widetilde G_c (x,x',y,\lambda)\\
=& \theta(\lambda_R)  {\theta(y) }   (\int_{-\infty}^{-2|\lambda_R|}+\int_{0}^\infty)\frac {e^{{-(\lambda'}^2+2\lambda\lambda') y+i\lambda'(x'-x)}\mathfrak g(x,x',\lambda+\lambda')}{2\pi a(\lambda+\lambda')}d\lambda'\\
&+ \theta(-\lambda_R)  {\theta(y) }   (\int_{-\infty}^{0}+\int_{2|\lambda_R|}^\infty)\frac {e^{{-(\lambda'}^2+2\lambda\lambda') y+i\lambda'(x'-x)}\mathfrak g(x,x',\lambda+\lambda')}{2\pi a(\lambda+\lambda')}d\lambda'\\
&- \theta(\lambda_R)  {\theta(-y) } \int_{-2|\lambda_R|}^{0}\frac {e^{{-(\lambda'}^2+2\lambda\lambda') y+i\lambda'(x'-x)}\mathfrak g(x,x',\lambda+\lambda')}{2\pi a(\lambda+\lambda')}d\lambda'\\
&- \theta(-\lambda_R)  {\theta(-y) } \int_{0}^{2|\lambda_R|}\frac {e^{{-(\lambda'}^2+2\lambda\lambda') y+i\lambda'(x'-x)}\mathfrak g(x,x',\lambda+\lambda')}{2\pi a(\lambda+\lambda')}d\lambda'  .
\ea
\eeq
By  a change of variables $\eta=\lambda_R+\lambda'$,
\beq\label{E:tilde-g-eta-new}
\ba{rl}
&\widetilde G_c (x,x',y,\lambda)\\
=& \frac {\theta(y)e^{-i\lambda_R(x'-x-2\lambda_Iy)}}{2\pi}\\
&\times (\int_{-\infty}^{-|\lambda_R|}+\int_{|\lambda_R|}^\infty)\frac {e^{(\lambda^2_R-\eta^2) y+i\eta(x'-x-2\lambda_I y)}\mathfrak g(x,x',\eta+i\lambda_I)}{a(\eta+i\lambda_I)}d\eta\\
&-  
 \frac{\theta(-y)e^{-i\lambda_R(x'-x-2\lambda_Iy)}}{2\pi} \\
 &\times\int_{-|\lambda_R|}^{|\lambda_R|}\frac{e^{(\lambda^2_R-\eta^2) y+i\eta(x'-x-2\lambda_I y)}\mathfrak g(x,x',\eta+i\lambda_I)}{a(\eta+i\lambda_I)}
d\eta\\
=&e^{-i\lambda_R(x'-x-2\lambda_Iy)}\int_{\mathbb R}\frac {\Xi(y,\lambda,\eta)\tilde{\mathfrak F}(x,x',y,\lambda,\eta)}{2\pi a(\eta+i\lambda_I)}d\eta,
\ea 
\eeq where
\beq\label{E:f}
\ba{rl}
&\tilde{\mathfrak F}(x,x',y,\lambda,\eta)= e^{(\lambda^2_R-\eta^2) y+i\eta(x'-x-2\lambda_I y)}\mathfrak g(x,x',\eta+i\lambda_I) ,\\
&\Xi(y,\lambda,\eta)=\theta(y)\chi_{\{-\infty,-|\lambda_R|\}\cup \{|\lambda_R|,\infty\}}-\theta(-y)\chi_{\{-|\lambda_R|,|\lambda_R|\}},
\end{array}
\eeq and $\chi_A$ denotes the characteristic function of the set $A$. Since singularities of  integrands are $
\lambda_I=\pm \kappa$, $ \lambda_R+\lambda'=0$, and   integrands consisting of both gaussian parts $ e^{(\lambda^2_R-\eta^2) y}$ and  oscillatory parts $ e^{ i\eta(x'-x-2\lambda_I y)}$,  we decompose
\beq\label{E:tilde-g-eta}
\ba{cc}
 \widetilde G_c (x,x',y,\lambda)=e^{ -i\lambda_R(x'-x-2\lambda_Iy)}(I+II+III+IV+V),&
 \textit{for $|\lambda_R|\le 1$},\\
 \widetilde G_c (x,x',y,\lambda)=e^{ -i\lambda_R(x'-x-2\lambda_Iy)}(I^\dagger+II^\dagger+III^\dagger),& 
 \textit{for $|\lambda_R|\ge 1$},
 \ea
\eeq with
\beq\label{E:green-decomposition}
\ba{rl}
I=&\pm\frac i \pi  \int_{-1}^1\Xi(y,\lambda,\eta) \frac { [e^{i\eta(x'-x-2\lambda_Iy)}-1]e^{(\lambda_R^2-\eta^2 ) y}\mathfrak g(x,x',\eta+i\lambda_I)}{ \eta+i(\lambda_I\mp \kappa)}d\eta,\\
II=&\pm\frac i \pi  \int_{-1}^1\Xi(y,\lambda,\eta)\frac { [e^{(\lambda_R^2-\eta^2 ) y}-1] \mathfrak g(x,x',\eta+i\lambda_I)}{ \eta+i(\lambda_I\mp \kappa)}d\eta,\\
III=& \pm\frac i \pi  \int_{-1}^1\Xi(y,\lambda,\eta)\frac {\mathfrak g(x,x',\eta+i\lambda_I)-\mathfrak g(x,x',\pm i\corr{\kappa}) }{ \eta+i(\lambda_I\mp \kappa)}d\eta,\\
IV=&\pm\frac i \pi \int_{-1}^1\Xi(y,\lambda,\eta)\frac { \mathfrak g(x,x',\pm i\corr{\kappa}) }{ \eta+i(\lambda_I\mp \kappa)}d\eta,\\
V=& (\int_{-\infty}^{-1}+\int_{1}^\infty)\frac {\Xi(y,\lambda,\eta)\tilde{\mathfrak F}(x,x',y,\lambda,\eta)}{2\pi a(\eta+i\lambda_I)}d\eta+  \int_{-1}^1\frac{\Xi(y,\lambda,\eta)\tilde{\mathfrak F}(x,x',y,\lambda,\eta)}{2\pi }d\eta,
\ea
\eeq and
\beq\label{E:tilde-g-eta-ge}
\ba{rl}
I^\dagger=& \theta(y) (\int_{-\infty}^{-|\lambda_R|}+\int_{|\lambda_R|}^\infty) \frac {  e^{i\eta(x'-x-2\lambda_Iy)+(\lambda_R^2-\eta^2 ) y}\mathfrak g(x,x',\eta+i\lambda_I)}{2\pi a(\eta+i\lambda_I)}d\eta,\\
  II^\dagger=& \mp\theta(-y) \int_{-1}^{1}  \frac {  e^{i\eta(x'-x-2\lambda_Iy)} e^{(\lambda_R^2-\eta^2 ) y}\mathfrak g(x,x',\eta+i\lambda_I)}{ \eta+i(\lambda_I\mp \kappa)}d\eta,\\
III^\dagger=& -\theta(-y) \{(\int_{-|\lambda_R|}^{-1} +\int_{1}^{|\lambda_R|})  \frac {  e^{i\eta(x'-x-2\lambda_Iy)+(\lambda_R^2-\eta^2 ) y}\mathfrak g(x,x',\eta+i\lambda_I)}{2\pi a(\eta+i\lambda_I)}d\eta\\
&+\int_{-1}^1\frac {  e^{i\eta(x'-x-2\lambda_Iy)+(\lambda_R^2-\eta^2 ) y}\mathfrak g(x,x',\eta+i\lambda_I)}{ 2\pi}\}. 
\ea
\eeq   

$\underline{\emph{Step 2  (Estimates for $III$ and $V$)}}:$ First of all, 
\beq\label{E:est-III}
\ba{c}
|III|\le C,
\ea
\eeq
since $\frac {\mathfrak g(x,x',\eta+i\lambda_I)-\mathfrak g(x,x',\pm i \corr{\kappa}) }{ \eta+i(\lambda_I\mp \kappa)}$ is a uniformly bounded function and $|\eta|\le 1$.  
  Moreover, note $|\frac 1{2\pi a(\eta+i\lambda_I)}|<C$ for $|\eta|\ge 1$. Therefore, via   \eqref{E:tilde-g-lambda}, for $\forall \lambda$,
\beq\label{E:gaussian}
\ba{rl}
& |V (x,x',y,\lambda)|\\
\le & C[\theta(\lambda_R)  {\theta(y) }   (\int_{-\infty}^{-2|\lambda_R|}+\int_{0}^\infty)e^{-({\lambda'}^2+2\lambda_R\lambda') y }d\lambda' \\
&+ \theta(-\lambda_R)  {\theta(y) }   (\int_{-\infty}^{0}+\int_{2|\lambda_R|}^\infty)e^{-({\lambda'}^2+2\lambda_R\lambda') y }d\lambda'\\
&+ \theta(\lambda_R)  {\theta(-y) } \int_{-2|\lambda_R|}^{0}e^{-({\lambda'}^2+2\lambda_R\lambda') y }d\lambda'\\
&+ \theta(-\lambda_R)  {\theta(-y) } \int^{2|\lambda_R|}_{0}e^{-({\lambda'}^2+2\lambda_R\lambda') y }d\lambda']\\
\le &  
C(\theta(y) \int_{0}^\infty e^{-y{s}^2}ds+\theta(-y) e^{y\lambda_R^2}\int_{0}^{|\lambda_R|} e^{-ys^2}ds)\\
\le & \frac C{\sqrt {|y|}}.
\ea
\eeq Here we have used 
\[
\ba{c}
  \theta(\lambda_R)  {\theta(y) } \int_{0}^\infty 
e^{-({\lambda'}^2+2\lambda_R\lambda') y }d\lambda'  
\le      {\theta(y) } \int_{0}^\infty 
e^{- {\lambda'}^2  y }d\lambda' \le  \frac C{\sqrt {|y|}},\\
 \theta(\lambda_R)  {\theta(-y) } \int_{-2|\lambda_R|}^{0}e^{-({\lambda'}^2+2\lambda_R\lambda') y }d\lambda'=   {\theta(-y) } e^{\lambda_R^2y}\int_{- |\lambda_R|}^{|\lambda_R|}e^{-\eta^2 y }d\eta\le  \frac C{\sqrt {|y|}},\\
   \theta(\lambda_R)  {\theta(y) }\int_{-\infty}^{-2|\lambda_R|}
e^{-({\lambda'}^2+2\lambda_R\lambda') y }d\lambda' 
= \theta(\lambda_R)  {\theta(y) } \int_{0}^\infty 
e^{-({\xi}^2+2\lambda_R\xi) y }d\xi  \le  \frac C{\sqrt {|y|}}
\ea
\] 
(via $\eta=\lambda+\lambda_R$, $-\xi=\lambda'+2\lambda_R$), and similar arguments for   terms with $\theta(-\lambda_R)$ factors.

$\underline{\emph{Step 3   (Estimates for $IV$) }}:$  
Let us first claim: for $\lambda=\lambda_R+i\lambda_I=\pm i\corr{\kappa}+se^{i\alpha}$,   
 $0<s<1$, $-\frac \pi 2\le \alpha\le \frac{3\pi}2$, we have
\beq\label{E:residue-1}
\ba{c}\lim_{s\to 0}\int_{-1}^{1}\frac { d\eta}{\eta+i(\lambda_I\mp \corr{\kappa})}=
(\pm i\pi)(2\theta(\corr{\kappa} - |\lambda_{I}|)-1), \\
 \int_{-|\lambda_R|}^{|\lambda_R|}\frac{d\eta}{\eta+i(\lambda_I\mp \corr{\kappa})}
= (\mp i)\ [2\pi(1-\theta(\corr{\kappa} - |\lambda_{I}| )) -2\cot^{-1}\frac {\corr{\kappa}-|\lambda_{I}|}{|\lambda_{R}|}],\ea 
\eeq
where 
\beq\label{E:arccot}
\ba{rl}
\cot^{-1}\frac {\corr{\kappa}-|\lambda_{ I}|}{|\lambda_{ R}|}
=&\left\{
{\begin{array}{lcl}
\frac \pi 2+\alpha,&-\frac \pi 2<\alpha\le\frac \pi 2 ,&0<|\lambda-i\corr{\kappa}|\le 1,\\
\frac {3\pi}2-\alpha,&\frac\pi 2\le\alpha<\frac{3\pi}2,&0<|\lambda-i\corr{\kappa}|\le 1,\\
\frac \pi 2-\alpha,&-\frac \pi 2<\alpha\le\frac \pi 2 ,&0<|\lambda+i\corr{\kappa}|\le 1,\\
-\frac {\pi}2+\alpha,&\frac\pi 2\le\alpha<\frac{3\pi}2,&0<|\lambda+i\corr{\kappa}|\le 1.
\end{array}}
\right.
\ea
\eeq

The  claim is carried out by using the logarithmic functions. Precisely,
\beq\label{E:approach} 
\ba{rl}
&\lim_{s\to 0}\int_{-1}^{1}\frac { d\eta}{\eta+i(\lambda_I\mp \corr{\kappa})}= \log\frac{1+i0^+\sin\alpha}{-1+i0^+\sin\alpha}\\
=&  \left\{
{\begin{array}{l }
+i\pi(2\theta(\corr{\kappa}-|\lambda_I|)-1), \lambda\in\mathbb C^+ ,\\
-i\pi(2\theta(\corr{\kappa}-|\lambda_I|)-1), \lambda\in\mathbb C^-, 
\end{array}}
\right.\\
&{}\\
&\int_{-|\lambda_R|}^{|\lambda_R|}\frac{d\eta}{\eta+i(\lambda_I\mp \corr{\kappa})}=\log\frac{|\lambda_R|+i(\lambda_I\mp \corr{\kappa})}{-|\lambda_R|+i(\lambda_I\mp \corr{\kappa})} 
\\
 =&\left\{
{\begin{array}{l }
\log\frac{|\lambda_R|+i(|\lambda_I|-\corr{\kappa})}{-|\lambda_R|+i(|\lambda_I|-\corr{\kappa})},\ \corr{\kappa}<\lambda_I ,\\
 \log\frac{-|\lambda_R|+i(\corr{\kappa}-|\lambda_I|)}{|\lambda_R|+i(\corr{\kappa}-|\lambda_I|)},\ 0<\lambda_I<\corr{\kappa},\\
 \log\frac{|\lambda_R|+i(\corr{\kappa}-|\lambda_I|)}{-|\lambda_R|+i(\corr{\kappa}-|\lambda_I|)},\ -\corr{\kappa}<\lambda_I<0,\\
 \log\frac{-|\lambda_R|+i(|\lambda_I|-\corr{\kappa})}{|\lambda_R|+i(|\lambda_I|-\corr{\kappa})},\ \lambda_I<-\corr{\kappa}
\end{array}}
\right.
\\
=&\left\{
{\begin{array}{l}
-2i\cot^{-1}\frac{|\lambda_I|-\corr{\kappa}}{|\lambda_R|},\ \corr{\kappa}<\lambda_I ,\\
+2i\cot^{-1}\frac{\corr{\kappa}-|\lambda_I|}{|\lambda_R|},\ 0<\lambda_I<\corr{\kappa},\\
-2i\cot^{-1}\frac{\corr{\kappa}-|\lambda_I|}{|\lambda_R|},\ -\corr{\kappa}<\lambda_I<0,\\
+2i\cot^{-1}\frac{|\lambda_I|-\corr{\kappa}}{|\lambda_R|},\ \lambda_I<-\corr{\kappa} 
\end{array}}
\right.  \\
=&\left\{
{\begin{array}{lc}
-i [2\pi(1-\theta(\corr{\kappa}-|\lambda_I|)) 
-2\cot^{-1}\frac {\corr{\kappa}-|\lambda_{ I}|}{|\lambda_{ R}|}],&\lambda\in\mathbb C^+ ,\\
+i [2\pi(1-\theta(\corr{\kappa}-|\lambda_I|)) 
-2\cot^{-1}\frac {\corr{\kappa}-|\lambda_{ I}|}{|\lambda_{ R}|}],&\lambda\in\mathbb C^-. 
\end{array}}
\right. 
\ea
\eeq
Here we have used $0 < \cot^{-1}x < \pi$, $\cot^{-1}(-x)=\pi -\cot^{-1}x$, and the principal integration.

Plugging \eqref{E:residue-1} into $IV$, one has
\beq\label{E:iv-discrete}
\ba{rl}
&\lim_{s\to 0}   IV\\
=&\pm \frac {i}{\pi}\mathfrak g(x,x',\pm i\corr{\kappa})\ [\theta(y)\lim_{s\to 0}\int_{-1}^{1}\frac { d\eta}{\eta+i(\lambda_I\mp \corr{\kappa})}- 
\lim_{s\to 0}\int_{-|\lambda_R|}^{|\lambda_R|}\frac{d\eta}{\eta+i(\lambda_I\mp \corr{\kappa})}]   \\
=&\mathfrak g(x,x',\pm i\corr{\kappa})\ [\theta(y)-2 +2\theta(-y)\theta(\corr{\kappa}-|\lambda_{I}|) +\frac 2\pi \cot^{-1}\frac {\corr{\kappa}-|\lambda_{I}|}{|\lambda_{R}|}],
\ea
\eeq 
and
\beq\label{E:iv-discrete-new} 
\ba{c}| IV|\le C.\ea
\eeq

$\underline{\emph{Step 4    (Estimates for   $II$)}}:$ By \eqref{E:est-III}, the $L^\infty$-estimate of $II$ reduces to that for 
\[
\ba{rl}
\widetilde {II}=& \widetilde {II}_1+\widetilde {II}_2,\\
\widetilde {II}_1=&\pm \frac {i}{\pi}  \theta(y)(\int_{-1}^{-|\lambda_R|}+\int_{|\lambda_R|}^{1})  \frac { e^{(\lambda_R^2-\eta^2 ) y}-1}{ \eta+i(\lambda_I\mp \corr{\kappa} )} d\eta\\
\widetilde {II}_2=&\mp \frac {i}{\pi}  \theta(-y)\int_{-|\lambda_R|}^ {|\lambda_R|}  \frac {e^{(\lambda_R^2-\eta^2 ) y}- 1  }{ \eta+i(\lambda_I\mp \corr{\kappa} )} d\eta.
\ea
\] 
In case of 
\beq\label{E:ii-l}
\ba{c}
(4a)\ |\lambda_R|\sqrt{|y|}\ge 1,
\ea
\eeq 
by the change of variables $\xi=\frac\eta{|\lambda_R|}$,  \eqref{E:iv-discrete-new}, a Gaussian integration,  and using logarithmic functions,
\beq\label{E:ii-l-1-est}
\ba{rl}
&  | \widetilde {II}_{1} | \\
=  & | \pm \frac {i}{\pi} \theta(y)\{(\int_{-\frac 1{|\lambda_R|}}^{-1}+\int_{1}^{\frac 1{|\lambda_R|}})  \frac { e^{ \lambda_R^2y(1-\xi^2 )}   }{ \xi +i\frac{\lambda_I\mp \corr{\kappa}}{|\lambda_R|}} d\xi -(\int_{-1}^{-|\lambda_R|}+\int_{|\lambda_R|}^{1})  \frac {  1}{ \eta+i(\lambda_I\mp \corr{\kappa} )} d\eta\}|\\
\le &C\theta(y)(\int_{-\frac 1{|\lambda_R|}}^{-1}+\int_{1}^{\frac 1{|\lambda_R|}})\left|\frac { e^{ \lambda_R^2y(1-\xi^2 )}   }{ \xi +i\frac{\lambda_I\mp \corr{\kappa}}{|\lambda_R|}}\right| d\xi+C \\
\le &C(\int_{-\frac 1{|\lambda_R|}}^{-1}+\int_{1}^{\frac 1{|\lambda_R|}})   e^{  1-\xi^2 }     d\xi+C=Ce\sqrt {e^{-1}-e^{-\frac 1{|\lambda_R|^2}}}+C\le C.
\ea
\eeq  Besides, by the change of variables $\omega=\eta\sqrt{|y|}$ and \eqref{E:ii-l},
\beq\label{E:ii-l-2-est}
\ba{rl}
 & | \widetilde {II}_{2} | \\
= &  | \mp \frac {i}{\pi}\theta(-y) \int_{-|\lambda_R|\sqrt{|y|}}^ {|\lambda_R|\sqrt{|y|}}   \frac {  e^{ \lambda_R^2y-\omega^2  }-1}{ \omega+i(\lambda_I\mp  {\kappa} )\sqrt{|y|}} d\omega|\\
= &   | \mp \frac {i}{\pi} \theta(-y) \int_{-1}^ {1}   \frac {  e^{ \lambda_R^2y-\omega^2  }-e^{ \lambda_R^2y-1}}{ \omega+i(\lambda_I\mp  {\kappa} )\sqrt{|y|}} d\omega\mp \frac {i}{\pi}  \theta(-y) \int_{-1}^ {1}   \frac {  e^{ \lambda_R^2y-1}}{ \omega+i(\lambda_I\mp  {\kappa} )\sqrt{|y|}} d\omega \\
&\mp \frac {i}{\pi}\theta(-y) (\int_{-|\lambda_R|\sqrt{|y|}}^{-1}+\int_1^ {|\lambda_R|\sqrt{|y|}}  ) \frac {  e^{ \lambda_R^2y-\omega^2  } }{ \omega+i(\lambda_I\mp  {\kappa} )\sqrt{|y|}} d\omega\\
&\pm \frac {i}{\pi}  \theta(-y) \int_{-|\lambda_R| }^{ |\lambda_R| }    \frac { 1}{ \eta+i(\lambda_I\mp \corr{\kappa} )} d\eta|

\ea
\eeq  
Therefore, applying the mean value theorem, \eqref{E:ii-l}, the logarithmic functions, a Gaussian integration,  and  \eqref{E:iv-discrete-new}, 
\beq\label{E:ii-l-2-est-cont}
\ba{rl}
 & | \widetilde {II}_{2} | \\
 \le& C  \theta(-y) \int_{0}^ {1}   |\frac {  e^{ \lambda_R^2y-\omega^2  }-e^{ \lambda_R^2y-1}}{ \omega-1} |d\omega +C |\int_{-1}^ {1}   \frac { 1}{ \omega+i(\lambda_I\mp  {\kappa} )\sqrt{|y|}} d\omega|\\
& +C\theta(-y) (\int_{-|\lambda_R|\sqrt{|y|}}^{-1}+\int_1^ {|\lambda_R|\sqrt{|y|}}  ) e^{ \lambda_R^2y-\omega^2  } d\omega +C\\
\le &C.
\ea
\eeq

Instead of \eqref{E:ii-l}, we now consider
\beq\label{E:ii-s}
\ba{c}
(4b)\ |\lambda_R|\sqrt{|y|}\le 1.
\ea
\eeq Therefore, by the change of variables $\xi=\frac\eta{|\lambda_R|}$ and the mean value theorem,
\beq\label{E:ii-s-1-est}
\ba{l}
   | \widetilde {II}_{2} |  
=   | \mp \frac {i}{\pi} \theta(-y) \int_{-1}^{-1}  \frac { e^{ \lambda_R^2y(1-\xi^2 )}-1}{ \xi +i\frac{\lambda_I\mp \corr{\kappa}}{|\lambda_R|}} d\xi  |\\
\le  C\theta(-y) \int_{-1}^{-1} \left| \frac { e^{ \lambda_R^2y(1-\xi^2 )}-1}{ \xi -|\lambda_R|\sqrt{|y|}}\right| d\xi  \le C.
\ea
\eeq 
On the other hand, by the change of variables $\omega=\eta\sqrt{|y|}$ and \eqref{E:ii-s},
\beq\label{E:reduced-ii-new}
\ba{rl}
 &| \widetilde {II}_{1} | 
=  |  \pm\frac i\pi\theta(y)(\int_{-1}^{-|\lambda_R|\sqrt{y}}+\int_{|\lambda_R|\sqrt{y}}^{1}) \frac {e^{\lambda_R^2y-\omega^2 }-1}{\omega+i(\lambda_I\mp\corr{\kappa})\sqrt{y}}  d\omega \\
&\pm\frac i\pi \theta(y)(\int_{-\sqrt{y}}^{-1}+\int_{1}^{\sqrt{y}})\frac { e^{\lambda_R^2y -\omega^2 }}{ \omega+i(\lambda_I\mp\corr{\kappa})\sqrt{ y } } d\omega \corr{\mp 
\frac i \pi\theta(y)(\int_{-1}^{-\frac 1{\sqrt{y}}}+\int_{\frac 1{\sqrt{y}}}^{1})\frac {1}{\eta+i(\lambda_I\mp\corr{\kappa})} d\eta} |.
\ea
\eeq
By the mean value theorem, a Gaussian integral, \eqref{E:ii-s}, and the logarithmic functions, 
\beq\label{E:the-mean-value-theorem}
\ba{ l}
   |\widetilde {II}_{ 1}|  
\le  C \theta(y) \int_{|\lambda_R|\sqrt{y}}^{1}  \left|\frac {e^{\lambda_R^2y-\omega^2 }-1}{\omega-|\lambda_R|\sqrt{y}} \right| d\omega +C\theta(y)e^{\lambda_R^2y}\sqrt{e^{-1}-e^{-y}}  +C\le C.
\ea
\eeq
Combining cases (4a) and (4b), we obtain 
\beq\label{E:2-est}
\ba{c}
| \widetilde {II}_{2} |\le  C.
\ea
\eeq

$\underline{\emph{Step 5    (Estimates for $I$)}}:$ By the estimate  of $III$ and $II$, the estimate of $I$ reduces to considering   $\widetilde I=\widetilde I_{ 1}+\widetilde I_{ 2} $, with
\beq\label{E:1-1-bdd}
\ba{rl}
\widetilde {I}_{1} =& (\int_{-1}^{-|\lambda_R|}+\int_{|\lambda_R|}^{1})\frac {  e^{i\eta(x'-x-2\lambda_Iy)} -1     }{ \eta+i(\lambda_I\mp\corr{\kappa}  )} d\eta\\
=&  \int_{-1}^ 1 \frac {  e^{i\eta(x'-x-2\lambda_Iy)} -1     }{ \eta+i(\lambda_I\mp\corr{\kappa}  )} d\eta-   \int_ {-|\lambda_R|}^{|\lambda_R|}\frac {  e^{i\eta(x'-x-2\lambda_Iy)} -1     }{ \eta+i(\lambda_I\mp\corr{\kappa}  )} d\eta ,\\
\widetilde {I}_{2} =&\int_ {-|\lambda_R|}^{|\lambda_R|}\frac {  e^{i\eta(x'-x-2\lambda_Iy)} -1     }{ \eta+i(\lambda_I\mp\corr{\kappa}  )} d\eta. 
\ea
\eeq 
Observe, in either case
\beq\label{E:residue-domain}
\ba{l}
(5a)\,(\lambda_I\mp \kappa)(x'-x-2\lambda_Iy)> 0,\\
(5b)\,(\lambda_I\mp \kappa)(x'-x-2\lambda_Iy)\le 0,\textit{ and }\left|\, |\lambda_R|-|\lambda_I\mp\corr{\kappa} |\,\right|\ge \frac 12|\lambda_R|,
\ea
\eeq   by the residue theorem \cite[\S 6, Chapter III]{C63},
\beq\label{E:deform-residue}
\ba{rl}
&|\int_ {-|\lambda_R|}^{|\lambda_R|}\frac {  e^{i\eta(x'-x-2\lambda_Iy)} -1     }{ \eta+i(\lambda_I\mp\corr{\kappa}  )} d\eta|\\
\le &  |\int_{-\frac \pi 2}^{\frac{3\pi}2}\theta([x'-x-2\lambda_Iy]\sin\beta)
\frac {  e^{i|\lambda_R|(\cos\beta+i\sin\beta)(x'-x-2\lambda_Iy)} -1     }{ |\lambda_R|e^{i\beta}+i (\lambda_I\mp\corr{\kappa}  )}|\lambda_R|ie^{i\beta} d\beta|\\
&+2\pi \theta(-(\lambda_I\mp \kappa)(x'-x-2\lambda_Iy))|e^{(\lambda_I\mp \kappa)(x'-x-2\lambda_Iy)}-1|\\
\le & C\int_{-\frac \pi 2}^{\frac{3\pi}2}\theta([x'-x-2\lambda_Iy]\sin\beta)\\
&\times |  e^{i |\lambda_R|(x'-x-2\lambda_Iy)\cos\beta -|\lambda_R|(x'-x-2\lambda_Iy) \sin\beta} -1      | d\beta  +C\\
\le & C.
\ea
\eeq Instead of \eqref{E:residue-domain}, if
\beq\label{E:right}
\ba{c}
(5c)\ (\lambda_I\mp \kappa)(x'-x-2\lambda_Iy)\le 0, \textit{ and }\left|\, |\lambda_R|-|\lambda_I\mp\corr{\kappa} |\,\right|\le \frac 12|\lambda_R| 
\ea
\eeq holds, we will show the integral $\widetilde I_2$ is basically a Hilbert transform. Precisely, by deforming the contours \cite[\S 6, Chapter III]{C63},
\beq\label{E:deform-contour}
\ba{rl}
&|\int_ {-|\lambda_R|}^{|\lambda_R|}\frac {  e^{i\eta(x'-x-2\lambda_Iy)} -1     }{ \eta+i(\lambda_I\mp\corr{\kappa}  )} d\eta|\\
\le &|\int_{\Gamma_-}\frac {  e^{i\eta(x'-x-2\lambda_Iy)} -1     }{ \eta+i(\lambda_I\mp\corr{\kappa}  )} d\eta|+|\int_{\Gamma_+}\frac {  e^{i\eta(x'-x-2\lambda_Iy)} -1     }{ \eta+i(\lambda_I\mp\corr{\kappa}  )} d\eta|\\
&+|\int_{\Gamma}\frac {  e^{i\eta(x'-x-2\lambda_Iy)} -1     }{ \eta+i(\lambda_I\mp\corr{\kappa}  )} d\eta|+2\pi |e^{(\lambda_I\mp \kappa)(x'-x-2\lambda_Iy)}-1|
\ea
\eeq where
\beq\label{E:gamma-pm-m}
\ba{l}
\Gamma_-=\{\eta=-|\lambda_R|-i(\lambda_I\mp\corr{\kappa}  )t: \ \ 0\le t\le 1\},\\
\Gamma_+=\{\eta=+|\lambda_R|-i(\lambda_I\mp\corr{\kappa}  )t: \ \ 0\le t\le 1\},\\
\Gamma=\{\eta=t- i(\lambda_I\mp\corr{\kappa}  ) : \ \ -|\lambda_R|\le t\le |\lambda_R|\}.
\ea
\eeq
Due to \eqref{E:right} and \eqref{E:gamma-pm-m}, one has
\beq\label{E:deform-contour-pm}
\ba{rl}
 &|\int_{\Gamma_\pm}\frac {  e^{i\eta(x'-x-2\lambda_Iy)} -1     }{ \eta+i(\lambda_I\mp\corr{\kappa}  )} d\eta|\\
 =&|\int_0^1\frac { ^{\pm i|\lambda_R|(x'-x-2\lambda_Iy)+(\lambda_I\mp\corr{\kappa}  )(x'-x-2\lambda_Iy)t} -1    }{ \pm|\lambda_R|+i(\lambda_I\mp\corr{\kappa})(1-t  )} (\lambda_I\mp\corr{\kappa})dt|\\
 \le & \int_0^1|e^{\pm i|\lambda_R|(x'-x-2\lambda_Iy)+(\lambda_I\mp\corr{\kappa}  )(x'-x-2\lambda_Iy)t} -1|dt\le C.
\ea
\eeq On the other hand,
\beq\label{E:gamma-kappa}
\ba{rl}
&|\int_{\Gamma}\frac {  e^{i\eta(x'-x-2\lambda_Iy)} -1     }{ \eta+i(\lambda_I\mp\corr{\kappa}  )} d\eta|\\
=&|\int_ {-|\lambda_R|}^{|\lambda_R|}\frac {  e^{ (\lambda_I\mp\corr{\kappa}  )(x'-x-2\lambda_Iy)+it(x'-x-2\lambda_Iy)} -1     }{ t  } dt|\\
\le& |\int_ {-|\lambda_R|}^{|\lambda_R|}\frac {  e^{ it(x'-x-2\lambda_Iy)}       }{ t  } dt|\\
\le &C
\ea
\eeq
by using symmetries and the residue theorem \cite[\S 6, Chapter III]{C63}. 
Therefore, combining cases (5a), (5b), and (5c), $|\widetilde {I}_{2} | \le C$. The same method can be adapted to proving $| \int_{-1}^ 1 \frac {  e^{i\eta(x'-x-2\lambda_Iy)} -1     }{ \eta+i(\lambda_I\mp\corr{\kappa}  )} d\eta|<C$ which yields 
\beq\label{E:hilbert-bdd}
\ba{c}
 |  \widetilde {I}(x,x',y,\lambda)|\le C 
\ea
\eeq is justified. Combining with results from previous steps, we obtain 
estimate \eqref{E:eigen-green} 
 in case of $|\lambda_R|\le 1$.

$\underline{\emph{Step 6    (Estimates for $I^\dagger$, $II^\dagger$, $III^\dagger$)}}:$ 
For $II^\dagger$, we adapt the argument for (4a) in $\widetilde{II}$ in $\underline{\emph{Step 4}}$ and for $\widetilde{I}$ in $\underline{\emph{Step 5}}$; for $I^\dagger$, $III^\dagger$, we apply Gaussian type estimates \eqref{E:gaussian}.
\end{proof}

Green functions $G$ and $\tilde G$, defined by \eqref{E:sym-0} - \eqref{E:chi-a},  can be extended to $y\ne 0$ and $\lambda\ne \pm i$ by the following lemma.
\begin{lemma}\label{L:green-local-cont} 
For $\forall y\ne 0$, the Green function $  G$, defined by \eqref{E:sym-0} - \eqref{E:chi-a},
\[\ba{c}
G(x,x',y,\lambda_R+i0^+) =  G(x,x',y,\lambda_R+i0^-) , \ \forall\lambda_R ;\\
G(x,x',y,0^++i\lambda_I) = G(x,x',y,0^-+i\lambda_I),\ \forall\lambda_I\ne \pm \corr{\kappa};\\
G(x,x',y,\lambda_R+i(\pm \corr{\kappa}+0^+)) =  G(x,x',y,\lambda_R+i(\pm\corr{\kappa}+0^-)) , \ \forall\lambda_R\ne 0.
\ea\]
\end{lemma}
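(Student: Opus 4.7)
The plan is to use the decomposition $G=G_d+G_c$ from Lemma~\ref{L:green-heat}, with $G_c=G_{\mathbb C^+}=G_{\mathbb C^-}$, and treat each of the three boundary lines separately. On each line I will identify which part (if any) of $G_d$ and $G_c$ has a jump, and then either show that each is individually continuous, or exhibit an exact cancellation between the two.

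\textbf{Continuity across $\lambda_I=0$.} Since $\kappa>0$, one has $\theta(\kappa-|\lambda_I|)=1$ in a neighborhood of $\lambda_I=0$, and $\lambda^2$, $\mathfrak g(x,x',\pm i\kappa)$ are smooth there, so $G_d$ is continuous. For $G_c$ I invoke the identity $G_{\mathbb C^+}=G_{\mathbb C^-}$ established via \eqref{E:pm-sym-eigen} and \eqref{E:a-1}: each representation extends continuously to the real axis from its own half-plane (by dominated convergence using the uniform bounds furnished by Lemma~\ref{L:eigen-green}), and the two formulas coincide on the axis, so the one-sided limits from $\mathbb C^+$ and $\mathbb C^-$ agree.

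\textbf{Continuity across $\lambda_R=0$, $\lambda_I\neq\pm\kappa$.} Both sides lie in the same half-plane, so I will work with the single representation \eqref{E:tilde-g-eta-new}. Its dependence on $\lambda_R$ enters only through $\lambda_R^2$ in the Gaussian exponent, the continuous prefactor $e^{-i\lambda_R(x'-x-2\lambda_Iy)}$, and the endpoints $\pm|\lambda_R|$ of the integration intervals. Since $\lambda_I\neq\pm\kappa$, $1/a(\eta+i\lambda_I)$ stays bounded on compact $\eta$-sets, and combined with the Gaussian decay this lets me pass $\lambda_R\to0^{\pm}$ inside the integrals via dominated convergence, yielding the same limit. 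The piece $G_d$ depends on $\lambda_R$ only through $\lambda^2$, which is entire, so $G_d$ is continuous too.

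\textbf{Continuity across $\lambda_I=\pm\kappa$, $\lambda_R\neq0$.} This is the main case; I treat $\lambda_I=\kappa$, the other sign being symmetric. The step in $\theta(\kappa-|\lambda_I|)$ gives
\[
G_d\big|_{\lambda_I=\kappa+0^-}-G_d\big|_{\lambda_I=\kappa+0^+}=-2\theta(-y)\,e^{(\lambda^2+\kappa^2)y+\kappa(x-x')}\,\mathfrak g(x,x',i\kappa).
\]
Meanwhile the pole of $1/a_+(\lambda+\lambda')$ at $\lambda+\lambda'=i\kappa$, i.e.\ $\lambda'=-\lambda_R+i(\kappa-\lambda_I)$, crosses the real $\lambda'$-axis as $\lambda_I$ crosses $\kappa$. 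By a Sokhotski--Plemelj/contour-deformation argument, using the residue identity \eqref{E:c-+}, this produces a jump in the $\chi_\pm$-restricted integral representing $G_c$. Evaluating with $\phi_+(x,i\kappa)\psi_+(x',i\kappa)=e^{\kappa(x-x')}\mathfrak g(x,x',i\kappa)$, the exponent $e^{(\lambda^2-(\lambda+\lambda')^2)y}\big|_{\lambda'=i\kappa-\lambda}=e^{(\lambda^2+\kappa^2)y}$, and $\operatorname{Res}_{\eta=i\kappa}\,1/a_+(\eta)=2i\kappa$, this jump cancels the jump of $G_d$ exactly. The same bookkeeping has already been carried out (in a different guise) in Step~3 of the proof of Lemma~\ref{L:eigen-green} and is recorded in \eqref{E:iv-discrete}, which I will reuse.

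The main obstacle is the sign bookkeeping in the last case: the pole $\lambda'=-\lambda_R+i(\kappa-\lambda_I)$ sits at $\lambda'=-\lambda_R$ when $\lambda_I=\kappa$, precisely on the boundary of the support of $\chi_\pm$ (endpoints $0$ and $-2\lambda_R$, per \eqref{E:a-1}), so one must track carefully whether the pole crosses the real axis from inside $\chi_+$-support or $\chi_-$-support, and pair this with the correct $\theta(\pm y)$ factor. Once this is done, the cancellation is a direct residue computation.
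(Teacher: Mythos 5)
Your proposal is correct in substance but takes a somewhat different route from the paper, whose entire proof of this lemma is the one line ``Follow from the proof of previous lemma'': there, continuity is meant to be read off from the decomposition $\widetilde G_c=e^{-i\lambda_R(x'-x-2\lambda_I y)}(I+II+III+IV+V)$ of \eqref{E:green-decomposition}, with $I$, $III$, $V$ continuous across $|\lambda_I|=\kappa$ by dominated convergence and the jumps of $II$ and $IV$ (visible in the $\theta(\kappa-|\lambda_I|)$ terms of \eqref{E:residue-1} and \eqref{E:iv-discrete}) cancelling against $\widetilde G_d$. You instead compute the jump directly on the $\lambda'$-integral of \eqref{E:green-1-d} as a Sokhotski--Plemelj/residue contribution from the pole of $1/a_+$ at $\lambda+\lambda'=i\kappa$. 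This is legitimate and arguably cleaner, and it has one concrete advantage: evaluating the full integrand at the pole automatically produces the factor $e^{(\lambda^2+\kappa^2)y}$ that appears in $G_d$. In the paper's decomposition that factor is split between $IV$ (whose jump is only $2\theta(-y)\mathfrak g(x,x',i\kappa)$) and $II$ (whose Plemelj jump at $\eta=0$ contributes the remaining $2\theta(-y)[e^{\lambda_R^2 y}-1]\mathfrak g(x,x',i\kappa)$), so one cannot verify the cancellation against \eqref{E:iv-discrete} alone; your direct computation sidesteps this.

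Two corrections. First, your ``main obstacle'' is not one: at $\lambda_I=\kappa$ the pole sits at $\lambda'=-\lambda_R$, which by \eqref{E:a-1} is the \emph{midpoint} of the support of $\chi_+$ (the interval with endpoints $0$ and $-2\lambda_R$), not a boundary point. Hence only the $-\theta(-y)\chi_+$ term acquires a jump, which is exactly what is needed to match the $\theta(-y)$ factor in $G_d$, and no delicate endpoint bookkeeping is required. Second, with the formulas as printed the cancellation you assert is exact only for $\kappa=1$: the residue $2i\kappa$ gives $G_c$ a jump of $+2\kappa\,\theta(-y)e^{(\lambda^2+\kappa^2)y+\kappa(x-x')}\mathfrak g(x,x',i\kappa)$, whereas $G_d$ in \eqref{E:green-1-d} jumps by $-2\,\theta(-y)e^{(\lambda^2+\kappa^2)y+\kappa(x-x')}\mathfrak g(x,x',i\kappa)$. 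This mismatch reflects an internal normalization inconsistency of the paper for general $\kappa$ (compare the coefficient $2$ in \eqref{E:green-1-d} and \eqref{E:jost-adj-new} with the $2\kappa$ appearing in \eqref{E:G-sym-+-1}); you should state explicitly that you adopt the $2\kappa$ normalization of the discrete term, under which your cancellation is exact for all $\kappa>0$.
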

\begin{proof}  Follow  from the proof of previous lemma.

\end{proof}

\begin{lemma}\label{L:Riemann-Lebesque}
Suppose $f\in L^1\cap L^\infty$. For $\forall \lambda\ne\pm i\corr{\kappa}$, the Green function $\widetilde G$, defined by \eqref{E:sym-0} - \eqref{E:chi-a}, satisfies  
\[
\ba{c}\mbox{$\widetilde G\ast f\to 0$ uniformly as $|x|,\,|y|\to\infty$.}
\ea\]
Here the $\ast$ operator is defined by 
\beq\label{E:ast}
\ba{c}\widetilde G\ast f(x,y,\lambda)=\iint\widetilde G(x,x',y-y',\lambda)f(x',y')dx'dy'.
\ea\eeq
\end{lemma}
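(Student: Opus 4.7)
The plan is to combine a density approximation with the explicit decomposition of $\widetilde G$ into Gaussian, oscillatory, residue, and regular pieces developed in the proof of Lemma~\ref{L:eigen-green}, and then extract decay of the convolution from each piece separately.

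First I would reduce to compactly supported $f$. Given $\epsilon>0$, mollify and truncate to obtain $f_\epsilon\in C_c^\infty(\mathbb R^2)$ with $\|f-f_\epsilon\|_{L^1}$ and $\|f-f_\epsilon\|_{L^\infty}$ both small. To estimate $\widetilde G\ast(f-f_\epsilon)$ uniformly in $(x,y)$ via Lemma~\ref{L:eigen-green}, split the $y'$-integration at $|y-y'|=1$: on $\{|y-y'|\ge 1\}$ the uniform bound gives $\le 2C\|f-f_\epsilon\|_{L^1}$, while on $\{|y-y'|<1\}$ the singular factor $1/\sqrt{|y-y'|}$ lies in $L^p$ for $p<2$, so Young's inequality paired with an $L^p$-interpolation norm of $f-f_\epsilon$ yields a vanishing contribution.

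With $f$ now compactly supported, write $\widetilde G=\widetilde G_d+\widetilde G_c$ and treat each piece. For the discrete part, \eqref{E:discrete-est} gives $\widetilde G_d\equiv 0$ for $y>0$; for $y<0$ with $\theta(\kappa-|\lambda_I|)=1$ the assumption $\lambda\ne\pm i\kappa$ forces $\lambda_R^2+\kappa^2-\lambda_I^2>0$, giving exponential decay in $y$ as $y\to-\infty$, while the sech-like denominator $[(\cosh\kappa x)(\cosh\kappa x')]^{-1}$ provides exponential decay in $|x|$, uniform over the bounded $x'$-support. Hence $\widetilde G_d\ast f\to 0$ exponentially at infinity. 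For $\widetilde G_c$ I would use the decomposition into $I,II,III,IV,V$ (or $I^\dagger,II^\dagger,III^\dagger$) from \eqref{E:tilde-g-eta}--\eqref{E:tilde-g-eta-ge}: the Gaussian piece $V$ already satisfies $|V|\le C/\sqrt{|y|}\to 0$; the oscillatory pieces $I,II^\dagger$ are Fourier integrals in $\eta$ with phase $e^{i\eta(x'-x-2\lambda_Iy)}$ and $\eta$-densities in $L^1$ by Lemma~\ref{L:eigen-green}, so the Riemann--Lebesgue lemma gives pointwise decay as $|x'-x-2\lambda_Iy|\to\infty$, uniform as $|x|+|y|\to\infty$ by compactness of $\operatorname{supp}f$; the residue pieces $III,IV,III^\dagger$ are bounded but carry the prefactor $e^{-i\lambda_R(x'-x-2\lambda_Iy)}$, so Riemann--Lebesgue on the $(x',y')$-integral produces the required decay.

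The hard part will be the residue piece $IV$ in the degenerate regime $\lambda_R=0$, where the oscillatory prefactor ceases to oscillate. There I would extract decay instead from the one-sided exponential decay of $\mathfrak g(x,x',\pm i\kappa)=\varphi_\pm(x,\pm i\kappa)\vartheta_\pm(x',\pm i\kappa)$ in $x$ (for instance $\varphi_+(x,i\kappa)=1/(1+e^{2\kappa x})$ decays like $e^{-2\kappa x}$ as $x\to+\infty$), combined with the compact $x'$-support of $f$ to handle the $x\to-\infty$ direction through the corresponding one-sided decay of $\vartheta_\pm$; symmetric considerations close out the remaining boundary cases $\lambda_I=0$, $|\lambda_R|=1$, and $|\lambda_I|=\kappa$.
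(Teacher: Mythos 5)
Your overall architecture --- reduce to compactly supported $f$, then show pointwise decay of the kernel --- is reasonable, and your treatment of $\widetilde G_d$ and of the Gaussian piece $V$ is correct. But the central step, extracting decay of the $II$, $III$, $IV$ contributions from ``Riemann--Lebesgue on the $(x',y')$-integral'' via the prefactor $e^{-i\lambda_R(x'-x-2\lambda_I(y-y'))}$, fails. That prefactor oscillates in $(x',y')$ at the \emph{fixed} frequency $(\lambda_R,\,2\lambda_R\lambda_I)$ and contributes only the unimodular factor $e^{i\lambda_Rx+2i\lambda_R\lambda_Iy}$ in the external variables; Riemann--Lebesgue produces decay as the frequency tends to infinity, not as the evaluation point $(x,y)$ does. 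Indeed these pieces individually do \emph{not} have vanishing convolutions: from \eqref{E:green-decomposition}, for $y-y'>0$ one has $IV=c_\pm(\lambda)\,\mathfrak g(x,x',\pm i\kappa)$ with $c_\pm(\lambda)=\pm\frac i\pi\bigl(\int_{-1}^{-|\lambda_R|}+\int_{|\lambda_R|}^{1}\bigr)\frac{d\eta}{\eta+i(\lambda_I\mp\kappa)}$ generically nonzero, so for compactly supported $f$ and $y\to+\infty$ the $IV$-contribution tends in modulus to $|c_\pm(\lambda)|\,|\varphi_\pm(x,\pm i\kappa)|\,\bigl|\iint e^{-i\lambda_Rx'-2i\lambda_R\lambda_Iy'}\vartheta_\pm(x',\pm i\kappa)f(x',y')\,dx'dy'\bigr|$, which is generically nonzero (and, since $\varphi_+(x,i\kappa)=1/(1+e^{2\kappa x})\to1$ as $x\to-\infty$, is not rescued by decay of $\mathfrak g$ in $x$ either --- the fix you sketch for the ``degenerate'' case appeals to one-sided decay of $\vartheta_\pm$, but $\vartheta_\pm$ is a function of $x'$, not of $x$). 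The same happens for $II$, whose integrand $e^{(\lambda_R^2-\eta^2)y}-1$ tends to $-1$ pointwise as $y\to+\infty$, and for $III$, which is independent of $y$ on $\{y>0\}$. The decay of $\widetilde G_c$ is a cancellation among $I,\dots,V$: they reassemble into the single $\eta$-integral \eqref{E:tilde-g-eta-new}, and it is to that integral that one applies the dominated convergence theorem (the factor $e^{(\lambda_R^2-\eta^2)(y-y')}\to0$ for a.e.\ $\eta$ in the support of $\Xi$ as $y-y'\to\pm\infty$) and Riemann--Lebesgue in $\eta$ (for $|x|\to\infty$ with $y-y'$ in a compact set away from $0$, the phase $e^{i\eta(x'-x-2\lambda_I(y-y'))}$ has diverging frequency against an $L^1(d\eta)$ density). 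That is exactly the paper's one-line proof; the decomposition into $I$--$V$ is designed for the uniform bound of Lemma \ref{L:eigen-green}, not for the asymptotics, and should not be used termwise here.

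A secondary, more technical point: in the density reduction, Young's inequality does not apply as stated. The singular kernel $|y-y'|^{-1/2}\chi_{\{|y-y'|<1\}}$ is independent of $x'$, hence lies in no $L^q(\RR^2)$, and the one-dimensional quantity you actually need to control, $\bigl\|\int|f-f_\epsilon|(x',\cdot)\,dx'\bigr\|_{L^{p'}(dy')}$ with $p'>2$, is not bounded by $\|f-f_\epsilon\|_{L^1\cap L^\infty}$ (take $\delta\chi_{[0,M]\times[0,1/M]}$ and let $M\to\infty$: the $L^1$ and $L^\infty$ norms stay $\le\delta$ while $\iint|y'|^{-1/2}|h|\,dx'dy'\sim\delta\sqrt M$). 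This can be repaired under a mild extra hypothesis such as $\sup_{y'}\int|f(x',y')|\,dx'<\infty$, and the paper itself elides the same point in the proof of Theorem \ref{T:KP-eigen-existence}, but as written this step of your reduction is not justified.
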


\begin{proof} Since $\lambda\ne\pm i\corr{\kappa}$ fixed, both  $x$-asymptotics   and $y$-asymptotics can be obtained by Lemma \ref{L:eigen-green},   $f\in {L^1\cap L^\infty}$, and  the dominated convergence theorem.
\end{proof}

\begin{proof}\textbf{\textit{of Theorem \ref{T:KP-eigen-existence}. }} From   Lemma \ref{L:eigen-green} and $v_0\in L^1\cap L^\infty$, for $\lambda\ne \pm i\corr{\kappa}$,   if $f(x,y)\in L^\infty(\mathbb R^2)$, then the map $f\mapsto \widetilde G\ast v_0f$ is bounded from $L^\infty(\mathbb R^2)$ to $L^\infty(\mathbb R^2)$ and has a norm bounded by $ C|v_0|_{L^1\cap L^\infty}$. Consequently, if $ {|v_0|_{L^1\cap L^\infty}}\ll 1$, then for $\lambda\ne \pm i\corr{\kappa}$ the integral equation
\begin{equation}\label{E:1-ode}
 m=\vartheta_--\widetilde G\ast v_0m
\end{equation} is uniquely solved for $m\in L^\infty(\mathbb R^2)$.   Moreover,  $\partial_y^j\partial_x^k m(x,y,\lambda)\in  L^\infty(\mathbb R^2)$,   $0\le j,\,k\le 2$, can be proved via the formula
\[
\ba{rl}
& \partial_x  m(x,y,\lambda)\\
=&(1-\widetilde G\ast v_0)^{-1}\left[ \partial_x  \widetilde G\right]\ast v_0 (1-\widetilde G\ast v_0)^{-1}\vartheta_-\\
=&(1-\widetilde G\ast v_0)^{-1}\left[ \partial_x  \widetilde G\right]\ast v_0 m,
\ea
\] integration by parts, $\partial_y^j\partial_x^kv_0\in L^1\cap L^\infty$, $0\le j,\,k\le 2$, and an induction argument (Here we make remarks on $\partial_x \widetilde G$. To take $x$-derivatives of exponential terms, we need to use the antisymmetries in $x$, $x'$  and apply integration by parts). 
From Lemma \ref{L:Riemann-Lebesque},  $\Psi(x,y,\lambda)=m(x,y,\lambda)e^{(-i\lambda) x+(-i\lambda)^2y}$ is a solution  to the problem \eqref{E:kp-line-normal}, (\ref{E:kp-line-normal-bdry-1}).

\end{proof}

\section{The forward problem II: the forward scattering transform}\label{S:IST-kp-sd}

The scattering data can be defined by the non-holomorphic part of the eigenfunctions, i.e., $\overline\partial m=\partial_{\bar\lambda}m$ \cite{BC81}, \cite{FA83}, \cite{ABF83}.   Classically, $\overline\partial m$ can often be computed in terms of $m$ by noting that both $m$ and $\overline\partial m$ satisfy the same equation. Thus $m$ can be reconstructed from a knowledge of this relationship by solving a $\overline\partial$-problem, namely, a Cauchy integral equation \cite{BC81}, \cite{W85}, \cite{W87}. The main goals of this section are to compute $\overline\partial m$, to define the scattering transform and to characterize its algebraic and analytical constraints, and to formulate a Cauchy integral equation.

\subsection{Discrete scattering data of $m(x,y,\lambda)$}  

Major distinction of non-localized KPII equation from other integrable systems is the occurrence of non-meromorphic  discrete scattering data. In the following lemma, we will prove the discontinuities  at $\lambda=\pm i\corr{\kappa}$ of the Green function $\widetilde G$, defined by \eqref{E:sym-0} - \eqref{E:chi-a}, which yield the non-meromorphic properties of the discrete scattering data.

\begin{definition}\label{D:terminology} 
For $z\in \mathfrak Z=\{\pm i\corr{\kappa},\, \iota\}$, $\iota =  2i\corr{\kappa}$, define 
\beq\label{E:dz}
\ba{c}
D_{z}=\{\lambda\in\CC\, :\, |\lambda-z|< 1\},\ 
   D_z^\times=\{\lambda\in\CC\, :\, 0<|\lambda-z|<1\} ;\\
D_{z,a}=\{\lambda\in\CC\, :\, |\lambda-z|< a\},\ 
   D_{z,a}^\times=\{\lambda\in\CC\, :\, 0<|\lambda-z|< a\} ,   
 \ea
 \eeq  and characteristic functions  
\beq\label{E:chi}
\begin{array}{c}
\textit{ $E_z(\lambda)\equiv 1$ on $D_z$, $E_z (\lambda)\equiv 0$ elsewhere};\\
\textit{ $E_{z,a}(\lambda)\equiv 1$ on $D_{z,a}$, $E_{z,a} (\lambda)\equiv 0$ elsewhere}.  
\end{array}
\eeq Moreover, define the polar coordinate for   $D_{z,a}^\times$ to be $\{(s,\alpha)|\lambda=z+se^{i\alpha},\ 0<s<a,\,-\frac \pi 2\le\alpha \le\frac{3\pi}2\}$.
     
\end{definition}

\begin{lemma}\label{L:green-local} 
For $y\ne 0$, $\lambda=\lambda_R+i\lambda_I\in D_{\pm i\corr{\kappa}}^\times$, 
\beq\label{E:g-asym-pm-i-prep-limit}
\begin{array}{c}
 \widetilde G(x,x',y,\lambda)
= {\mathfrak G}_\pm(x,x',y)+\frac 2\pi
{\mathfrak g(x,x',\pm i\corr{\kappa})\cot^{-1}\frac{\corr{\kappa}-|\lambda_I|}{|\lambda_R|}+\omega_\pm(x,x',y,\lambda)},  
\end{array}
\eeq with
\beq\label{E:g-asym-pm-i-prep-limit-1}
\ba{rl}
 {\mathfrak G}_\pm(x,x',y)
 =  \int_{|\eta|\ge 1}   \theta(y )\frac{ e^{-y{\eta}^2+i\eta(x'-x\mp 2\corr{\kappa}y)}\mathfrak g(x,x',\eta\pm i\corr{\kappa}) }{2\pi a(\eta\pm i\corr{\kappa})}d\eta & 
 \\
 +  \int_{|\eta|\le 1}  \theta(y )\frac{ e^{-y{\eta}^2+i\eta(x'-x\mp 2\corr{\kappa}y)}\mathfrak g(x,x',\eta\pm i\corr{\kappa})- \mathfrak g(x,x', \pm i\corr{\kappa})}{2\pi a(\eta\pm i\corr{\kappa})}d\eta &\\
 +\mathfrak g(x,x',\pm i\corr{\kappa})\ [\theta(y)(1+\frac 1\pi)-2]&\in\RR,
 \ea\eeq 
and  
\beq\label{E:omega-definition-new}
\ba{rl}
&\omega_\pm(x,x',y,\lambda)\\
&=\theta(y) \int_{|\eta|\ge 1} [\frac{e^{-i\lambda_R(x'-x-2\lambda_Iy)}\tilde{\mathfrak F}(x,x',y,\lambda,\eta) }{2\pi a(\eta+i\lambda_I)}-\frac{\tilde{\mathfrak F}(x,x',y,\pm i\corr{\kappa},\eta) }{2\pi a(\eta\pm i \corr{\kappa})}  ]d\eta 
\\
&+\int_{|\eta|\le 1} \Xi(y,\lambda,\eta) e^{-i\lambda_R(x'-x-2\lambda_Iy)}  \frac{ \tilde{\mathfrak F}(x,x',y,\lambda,\eta )-  \mathfrak g(x,x', \pm i\corr{\kappa})}{{2\pi}a(\eta+i\lambda_I)} d\eta\\
&-\int_{|\eta|\le 1}  d\eta\ \theta(y)   \frac{ e^{ -\eta^2 y+i\eta(x'-x\mp 2\corr{\kappa} y)}\mathfrak g(x,x',\eta\pm i\corr{\kappa})- \mathfrak g(x,x', \pm i\corr{\kappa})}{{2\pi}a(\eta\pm i\corr{\kappa})}\\
&+\int_{|\eta|\le 1} \Xi(y,\lambda,\eta)\frac {[e^{-i\lambda_R(x'-x-2\lambda_Iy)}-1] \mathfrak g(x,x',\pm i\corr{\kappa}) }{2\pi a(\eta+i\lambda_I)}d\eta -\frac {\mbox{sgn}(\lambda_R)\lambda_R}{ \pi}\mathfrak g(x,x',\pm i\corr{\kappa}) \\
&-2\theta(-y)\theta(\corr{\kappa}^2- \lambda_I^2) [e^{(\lambda^2+\corr{\kappa}^2)y {+i\lambda(x-x')}\pm\corr{\kappa}(x-x')}-1]\mathfrak g(x,x'\pm i\corr{\kappa})\\
&\pm\frac i\pi\theta(y)\mathfrak g(x,x',\pm i\corr{\kappa}) [\log\frac{1+is\sin\alpha}{-1+is\sin\alpha}-\log\frac{1+i0^+\sin\alpha}{-1+i0^+\sin\alpha} ],
\ea
\eeq
where $a(\lambda)$, $\mathfrak g(x,x',\lambda)$, $\cot^{-1}\frac{\corr{\kappa}-|\lambda_I|}{|\lambda_R|}$, $\Xi(y,\lambda,\eta)$, and $\tilde{\mathfrak F}(x,x',y,\lambda,\eta)$ are defined by \eqref{E:chi-a}, \eqref{E:g-def}, \eqref{E:arccot},   \eqref{E:f}, and $(s,\alpha)$ denotes the polar coordinates for $D_{\pm i\corr{\kappa}}^\times$. Moreover, 
\begin{gather}
\ba{c}
 |\omega_\pm  |_{L^\infty(D_{\pm i\corr{\kappa}})}\le C ,\ |\frac \partial{\partial s}\omega_\pm  |_{L^\infty(D_{\pm i\corr{\kappa}})}\le C(1+|x-x'\pm 2\corr{\kappa}y|).
\ea \label{E:green-variation-derivative}
\end{gather} 
\end{lemma}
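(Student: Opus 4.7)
The plan is to refine the decomposition $\widetilde G_c = I + II + III + IV + V$ from \eqref{E:green-decomposition} that was used in Lemma \ref{L:eigen-green}, but now tracking the singular contribution at $\lambda = \pm i\kappa$ explicitly rather than simply bounding it. Since every $\lambda \in D_{\pm i\kappa}^\times$ automatically satisfies $|\lambda_R| \le 1$, the five-term decomposition applies. The key observation is that the singularity at $\lambda = \pm i\kappa$ enters only through $IV$, whose evaluation via \eqref{E:approach}--\eqref{E:iv-discrete} produces exactly $\tfrac{2}{\pi}\mathfrak g(x,x',\pm i\kappa)\cot^{-1}\tfrac{\kappa-|\lambda_I|}{|\lambda_R|}$ together with an explicit bounded remainder plus a logarithmic correction that vanishes as $s \to 0^+$.

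Next I would identify $\mathfrak G_\pm$ in \eqref{E:g-asym-pm-i-prep-limit-1} as the regular limit $\lim_{\lambda\to\pm i\kappa}[\widetilde G_c(x,x',y,\lambda) - \text{singular part}]$. The $|\eta|\ge 1$ piece of $\mathfrak G_\pm$ comes from $V$ evaluated at $\lambda = \pm i\kappa$; the regularized $|\eta|\le 1$ integral comes from combining $III$ with the piece of $IV$ where $\mathfrak g(x,x',\pm i\kappa)$ is subtracted to cancel the pole at $\eta + i(\lambda_I\mp\kappa) = 0$; and the constant $\mathfrak g(x,x',\pm i\kappa)[\theta(y)(1+1/\pi) - 2]$ collects the boundary values from $IV$ and the $\theta(-y)$ contribution. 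Defining $\omega_\pm$ as the difference and writing out $\widetilde G_d$ by \eqref{E:discrete-est} yields \eqref{E:omega-definition-new}. The $L^\infty$-bound on $\omega_\pm$ then reduces to applying the tools of Lemma \ref{L:eigen-green}: Gaussian decay on $|\eta|\ge 1$ with $1/a(\eta+i\lambda_I)$ bounded there, allowing Taylor expansion in $\lambda - (\pm i\kappa)$ uniformly in $\eta$; the mean value theorem on the $|\eta|\le 1$ subtracted integrands as in \eqref{E:est-III}; and the residue/contour-deformation argument of Step 5 for the Hilbert-transform-type term $[e^{-i\lambda_R(x'-x-2\lambda_I y)}-1]/[\eta+i(\lambda_I\mp\kappa)]$.

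The main obstacle is the derivative estimate $|\partial_s\omega_\pm|_{L^\infty(D_{\pm i\kappa})} \le C(1+|x-x'\pm 2\kappa y|)$. Differentiating with respect to $s = |\lambda\mp i\kappa|$ produces two kinds of potentially dangerous factors: a factor $(x'-x-2\lambda_I y)$ from the oscillatory phases $e^{-i\lambda_R(x'-x-2\lambda_I y)}$ and the one inside $\tilde{\mathfrak F}$, and a factor $\eta y$ from differentiating $e^{(\lambda_R^2-\eta^2)y}$. At $\lambda = \pm i\kappa + se^{i\alpha}$ one has $x'-x-2\lambda_I y = x'-x\mp 2\kappa y + O(s)$, so the first factor is precisely the source of the $|x-x'\pm 2\kappa y|$ on the right-hand side; the second is absorbed by the Gaussian weight on $|\eta|\ge 1$ and by the mean value estimate on $|\eta|\le 1$. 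The subtractions engineered into \eqref{E:omega-definition-new} are designed so that after differentiation no $\cot^{-1}$ or $\log$ singularity survives, so the contour-deformation estimates \eqref{E:deform-residue}--\eqref{E:gamma-kappa} can be reapplied uniformly in $s$ to deliver the stated linear growth in $|x-x'\pm 2\kappa y|$.
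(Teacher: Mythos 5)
Your plan for the decomposition itself follows the paper's route: both restrict to $|\lambda_R|\le 1$ so the five-term splitting $I+II+III+IV+V$ of \eqref{E:green-decomposition} applies, both identify $IV$ (via \eqref{E:approach}--\eqref{E:iv-discrete}) as the sole source of the $\cot^{-1}$ singularity, both define ${\mathfrak G}_\pm$ as the radial limit of the remaining terms (continuity of $III$, $V$, $\widetilde{II}$, $\widetilde I$ at $\pm i\kappa$ being a dominated-convergence statement for fixed $y\ne 0$), and both obtain $|\omega_\pm|_{L^\infty}\le C$ directly from Lemma \ref{L:eigen-green}. Up to that point your proposal is essentially the paper's Step 1.

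The gap is in the derivative estimate. You correctly locate the source of the factor $|x-x'\pm 2\kappa y|$ in the phases, but your claim that ``after differentiation no $\cot^{-1}$ or $\log$ singularity survives, so the contour-deformation estimates can be reapplied uniformly in $s$'' skips the central difficulty. Differentiating the Cauchy kernel directly gives $\partial_s\bigl[\eta+i(\lambda_I\mp\kappa)\bigr]^{-1}=-i\sin\alpha\,\bigl[\eta+i(\lambda_I\mp\kappa)\bigr]^{-2}$, a double pole approaching the contour as $s\to 0$; the numerator subtraction $\tilde{\mathfrak F}-\mathfrak g(x,x',\pm i\kappa)$ supplies only one order of vanishing at $\eta=0$, $\lambda=\pm i\kappa$, so what remains is a principal-value--type integrand, not an absolutely convergent one, and the contour-deformation bounds of Step 5 do not apply to it as written. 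The paper avoids this by the identity $\partial_s\bigl[\eta+i(\lambda_I\mp\kappa)\bigr]^{-1}=i\sin\alpha\,\partial_\eta\bigl[\eta+i(\lambda_I\mp\kappa)\bigr]^{-1}$, which converts the $s$-derivative of the kernel into an $\eta$-derivative that is then integrated by parts onto the smooth numerator (see \eqref{E:decomp-omega-ds}--\eqref{E:123-ds}). This integration by parts is unavoidable and produces boundary contributions you do not account for: the cutoff $\Xi(y,\lambda,\eta)$ has endpoints at $\eta=\pm|\lambda_R|=\pm s|\cos\alpha|$ that move with $s$, so both $\partial_\eta\Xi$ and $\partial_s\Xi$ generate delta terms $\delta_{\eta=\pm 1}$, $\delta_{\eta=\pm\lambda_R}$, which assemble into the extra bounded pieces of \eqref{E:green-variation-derivative-ans} (e.g.\ the $-\frac1\pi\mbox{sgn}(\lambda_{0,R})\cos\alpha\,\mathfrak g$ term). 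Without the $\partial_s\to\partial_\eta$ conversion and the bookkeeping of these moving-endpoint terms, the bound $|\partial_s\omega_\pm|\le C(1+|x-x'\pm 2\kappa y|)$ is not established.
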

\begin{proof} $\underline{\emph{Step 1 } \textit{(Proof of \eqref{E:g-asym-pm-i-prep-limit} - \eqref{E:omega-definition-new})}} :$ Let $\lambda_0$ be a radial limit $\lambda$ at $\pm i\corr{\kappa}$. Write 
\[
\ba{c}
\widetilde G=e^{ -i\lambda_R(x'-x-2\lambda_Iy)}(I+II+III+IV+V+ \widetilde G_d^\flat),\\
\widetilde G_d^\flat=-2\theta(-y)\theta(\kappa-|\lambda_I|)\mathfrak g(x,x',\pm i\kappa)e^{(\lambda_I\mp \kappa)(x'-x)}e^{[\lambda_R^2+(\kappa^2-\lambda_I^2)]y }
\ea
\]where $I$-$V$ are defined as in the proof of Lemma \ref{L:eigen-green}. By the 
dominated convergence theorem \cite[\S 6, Chapter III]{C63}, 
\[
\ba{c}
\textit{$ {III}$, $ {V}$, $\widetilde {II}$ (cf. \eqref{E:ii-s-1-est}, \eqref{E:reduced-ii-new} in case (4b)), $ \widetilde {I}$ }\\
\textit{are continuous at $\lambda=\pm i\corr{\kappa}$, for  fixed $y\ne 0$, $x\in\RR$.} 
\ea
\]  Together with \eqref{E:iv-discrete} and 
\[
\ba{c}
\widetilde G_d(x,x',y,\lambda_0)=\widetilde G_d^\flat(x,x',y,\lambda_0)=-2\theta(-y)\theta(\corr{\kappa}-|\lambda_{0,I}|)\mathfrak g(x,x',\pm i\corr{\kappa}),
\ea\]  we then prove $\widetilde G$ can be written as \eqref{E:g-asym-pm-i-prep-limit} where $\mathfrak G_\pm$, $\omega_\pm$ are defined by \eqref{E:g-asym-pm-i-prep-limit-1} and \eqref{E:omega-definition-new}.

 $\underline{\emph{Step 2 } \textit{(Proof of \eqref{E:green-variation-derivative})}} :$ 
The first inequality in \eqref{E:green-variation-derivative} follows from Lemma \ref{L:eigen-green}. For simplicity, we only give the proof of the second inequality at $s=0$ (i.e. $\lambda=\lambda_0$) because the computation is similar for $s\ne 0$. From $\underline{\emph{Step 4}}$  of the proof for Lemma \ref{L:eigen-green},
\beq\label{E:IV-derivative}
\ba{rl}
&\lim_{s\to  0}\frac {(IV+\widetilde G_d^\flat)(x,x',y,\lambda)-(IV+\widetilde G_d^\flat)(x,x',y,\lambda_0) }{s}\\
=&\frac{\partial}{\partial s}|_{s=0}\{-2\theta(-y)\theta(\corr{\kappa}-|\lambda_{0,I}|) [ e^{(\lambda_I\mp \kappa)(x'-x)}e^{[\lambda_R^2+(\kappa^2-\lambda_I^2)]y }-1]\\
&\times\mathfrak g(x,x'\pm i \corr{\kappa}) 
 \pm\frac i \pi \theta(y)[\log\frac{1+is\sin\alpha}{-1+is\sin\alpha}-\log\frac{1+i0^+\sin\alpha}{-1+i0^+\sin\alpha}]\mathfrak g(x,x',\pm i \corr{\kappa})\}\\
=&\mathfrak g(x,x',\pm i\corr{\kappa})\{(x'-x\mp 2\corr{\kappa}y)  \sin\alpha\ \widetilde G_d^\flat(x,x',y,\lambda_0)\\
&\mp\frac 2\pi\theta(y) \sin\alpha \mathfrak g(x,x',\pm i\corr{\kappa})\}.
\ea
\eeq
Furthermore, by 
\[
\begin{array}{c}
 \frac {\partial f}{\partial s}= \cos\alpha\frac {\partial f}{\partial \lambda_R}+\sin\alpha\frac {\partial f}{\partial \lambda_I} ,\ 
\frac{\partial}{\partial s}\frac{1}{ \eta+i(\lambda_I\mp\kappa)}=i\sin\alpha \frac \partial{\partial{\eta}}\frac{1}{\eta+i(\lambda_I\mp\kappa)}, \\
\frac{\partial}{\partial \eta}\chi_{[-1,1]}(\eta)\Xi(y,\lambda,\eta)=\theta(y)(\delta_{\eta= 1 }-\delta_{\eta=- 1})-\mbox{sgn}(\lambda_R)(\delta_{\eta= \lambda_R }-\delta_{\eta=- \lambda_R }), \\
\frac{\partial}{\partial s}\Xi(y,\lambda,\eta)=-\mbox{sgn}(\lambda_R)\cos\alpha(\delta_{\eta= \lambda_R }+\delta_{\eta=- \lambda_R }),\\
\textit{$\chi_A$,  the characteristic function of the set $A$},
\end{array}
\]
and integration by parts, 
\beq\label{E:decomp-omega-ds}
\ba{rl}
&\lim_{s\to 0} \frac{(I+II+III)(x,x',y,\lambda)-(I+II+III)(x,x',y,\lambda_0)}s\\
=&\pm\frac i\pi \frac {\partial}{\partial s} |_{s=0}\int_{-1}^1 \frac {\Xi(y,\lambda,\eta)[ \tilde{\mathfrak F}(x,x',y,\lambda,\eta)- \mathfrak g(x,x',\pm i\corr{\kappa}) ]}{ \eta +i(\lambda_I\mp\kappa)}d\eta \\
=& \pm\frac i\pi\lim_{s\to 0}\{\int_{-1}^1\frac {\Xi(y,\lambda,\eta)}{\eta+i(\lambda_I\mp\kappa)}(\frac {\partial}{\partial s}-i\sin\alpha \frac{\partial}{\partial \eta})\tilde{\mathfrak F}(x,x',y,\lambda,\eta)d\eta\\
&\pm\frac 1\pi\theta(y)\sin\alpha\int_{-1}^1\frac{\tilde{\mathfrak F}(x,x',y,\lambda,\eta)- \mathfrak g(x,x',\pm i\corr{\kappa})}{\eta +i(\lambda_I\mp\kappa)}[\delta_{\eta=1}-\delta_{\eta=-1}]d\eta   
\\
&\mp\frac 1\pi\mbox{sgn}(\lambda_R)\sin\alpha\int_{-1}^1\frac{\tilde{\mathfrak F}(x,x',y,\lambda,\eta)- \mathfrak g(x,x',\pm i\corr{\kappa})}{\eta +i(\lambda_I\mp\kappa)} [\delta_{\eta=\lambda_R}-\delta_{\eta=-\lambda_R}]  d\eta \}.
\ea
\eeq
Since
\[
\ba{rl}
\blacktriangleright\quad  &\lim_{s\to 0}(\frac {\partial}{\partial s}
-i\sin\alpha \frac{\partial}{\partial \eta})\tilde{\mathfrak F}(x,x',y,\lambda,\eta) \\
=&\lim_{s\to 0}\{\cos\alpha\frac{\partial}{\partial{\lambda_R}} e^{\lambda_R^2y+(-\eta^2-2i\lambda_I\eta) y+i\eta(x'-x)} \mathfrak g(x,x',\eta+i\lambda_I) \\
&+  \sin\alpha\frac \partial{\partial{\lambda_I}} e^{ \lambda_R^2 y +(-\eta^2-2i\lambda_I\eta) y+i\eta(x'-x)} \mathfrak g(x,x',\eta+i\lambda_I)
\ea
\]
\beq
\ba{rl}\label{f-ds}
&-i\sin\alpha    \frac{\partial}{\partial \eta} e^{(-\eta^2\mp 2i\kappa\eta) y+i\eta(x'-x)}\mathfrak g(x,x',\eta\pm i\corr{\kappa})\} \\
=& (x'-x\mp 2\corr{\kappa}y)\sin\alpha\ \tilde{\mathfrak F}(x,x',y,\pm i\corr{\kappa},\eta), \\
\blacktriangleright\quad
&\lim_{s\to 0}
\int_{-1}^1\frac{\tilde{\mathfrak F}(x,x',y,\lambda,\eta)- \mathfrak g(x,x',\pm i\corr{\kappa})}{\eta +i(\lambda_I\mp\kappa)} \delta_{\eta=  \lambda_R }  d\eta   \\
=&\lim_{s\to 0} \frac{e^{i\lambda_R(x'-x-2\lambda_Iy)}\mathfrak g(x,x',\lambda)- \mathfrak g(x,x',\pm i\corr{\kappa})}{\lambda_R +i(\lambda_I\mp\kappa)}\\
 =&   \partial_\lambda[\mathfrak g(x,x',\lambda)]_{s=0}  + \mathfrak g(x,x',\pm i\corr{\kappa})\partial_\lambda[e^{i\lambda_R(x'-x-2\lambda_Iy)}]_{s=0} , \\
\blacktriangleright\quad
&\lim_{s\to 0}\int_{-1}^1\frac{\tilde{\mathfrak F}(x,x',y,\lambda,\eta)-\mathfrak g(x,x',\pm i\corr{\kappa})}{ \eta +i(\lambda_I\mp\kappa)} \delta_{\eta= - \lambda_R }  d\eta 
\\
=&\lim_{s\to 0} \frac{e^{-i\lambda_R(x'-x-2\lambda_Iy)}\mathfrak g(x,x', -\overline\lambda)- \mathfrak g(x,x',\pm i\corr{\kappa})}{-\lambda_R +i(\lambda_I\mp\kappa)}
\\
=& - \partial _{\overline\lambda}[ \mathfrak g(x,x',-\overline\lambda)]_{s=0}-\mathfrak g(x,x',\pm i\corr{\kappa}) \partial _{\overline\lambda}[ e^{-i\lambda_R(x'-x-2\lambda_Iy)}]_{s=0},
\ea
\eeq
we obtain
\beq\label{E:123-ds}
\ba{rl}
&\lim_{s\to 0} \frac{(I+II+III)(x,x',y,\lambda)-(I+II+III)(x,x',y,\lambda_0)}s \\
=&  (x'-x\mp 2\corr{\kappa}y)\sin\alpha(I+II+III+IV)(x,x',y,\lambda_0)\\
&\pm\frac 1\pi\theta(y)\sin\alpha\int_{-1}^1\frac{\tilde{\mathfrak F}(x,x',y,\pm i\corr{\kappa},\eta)-\mathfrak g(x,x',\pm i\corr{\kappa})}{ \eta  }[\delta_{\eta=1}-\delta_{\eta=-1}]d\eta.
\ea
\eeq 
Similarly,
\beq\label{E:5-ds}
\ba{rl}
  &\lim_{s\to 0}\frac{ V(x,x',y,\lambda)-V(x,x',y,\lambda_0)}s\\
   = &(x'-x\mp 2 {\kappa}y)\sin\alpha V(x,x',y,\lambda_0)\\
&\pm\frac 1\pi\theta(y)\sin\alpha  (\int_{-\infty}^{-1}+\int^\infty_1) \frac{\tilde{\mathfrak F}(x,x',y,\pm i\corr{\kappa},\eta)}{\eta }[-\delta_{\eta=1}+\delta_{\eta=-1}]d\eta\\
&-\frac 1{2\pi}\lim_{s\to 0}\int_{-\infty}^\infty\tilde{\mathfrak F}(x,x',y,\pm i\corr{\kappa},\eta) \mbox{sgn}(\lambda_R)\cos\alpha (\delta_{\eta=\lambda_R}+\delta_{\eta=-\lambda_R})d\eta\\
&+\frac i{2\pi}\sin\alpha\theta(y)\int_{-\infty}^\infty\{(-2\eta y)e^{-\eta^2y+i\eta(x'-x\mp 2\kappa y)}\mathfrak g(x,x',\eta\pm i\kappa)\\
& +e^{-\eta^2y+i\eta(x'-x\mp2\kappa y)} \partial_\eta\mathfrak g (x,x',\eta\pm i\kappa)\} d\eta. 
\ea
\eeq Assembling \eqref{E:IV-derivative}, \eqref{E:123-ds}, \eqref{E:5-ds},   we obtain
\beq\label{E:green-variation-derivative-ans}
 \ba{rl}
 &\lim_{s\to  0}\frac { \omega_\pm}{s} 
 =  (x'-x\mp 2\corr{\kappa}y)\sin\alpha\widetilde G (x,x',y,\lambda_0)\\
 &-\frac 1{\pi}\mbox{sgn}(\lambda_{0,R})\cos\alpha\mathfrak g(x,x',\pm i\kappa)\\
&+\frac i{2\pi}\sin\alpha\theta(y)\int_{-\infty}^\infty\{(-2\eta y)e^{-\eta^2y+i\eta(x'-x\mp 2\kappa y)}\mathfrak g(x,x',\eta\pm i\kappa)\\
& +e^{-\eta^2y+i\eta(x'-x\mp2\kappa y)} \partial_\eta\mathfrak g (x,x',\eta\pm i\kappa)\} d\eta. 
\ea 
\eeq Therefore \eqref{E:green-variation-derivative} is justified for $s=0$. The case for $s\ne 0$ can be proved by the same method.

\end{proof}

\begin{theorem}\label{T:m-lambda}  Let $\partial_y^j\partial_x^kv_0\in L^1\cap L^\infty$, $0\le j,\,k\le 2$, $|v_0|_{L^1\cap L^\infty}\ll 1$, $v_0(x,y)\in\RR$, and    $\lambda=\lambda_R+i\lambda_I\in D_{\pm i\corr{\kappa}}^\times$. Then  
\beq\label{E:define-remainder-m}
\ba{c}
  m(x,y,\lambda)=
\left\{ 
{\begin{array}{ll}
 \frac{m_{+,-1}(x,y,\lambda)}{\lambda-i\corr{\kappa}} +m_{+,r}(x,y,\lambda),
&\lambda\in D_{+i\corr{\kappa}}^\times,\\
m_{-,0} (x,y,\lambda) +m_{-,r}(x,y,\lambda), &
\lambda\in D_{-i\corr{\kappa}}^\times,
\end{array}}
\right.
\ea
\eeq 
(cf. \cite[Eq.(5.7), (5.8)]{BP302}) with 
\begin{gather}
\begin{array}{l}
m_{+, -1}(x,y,\lambda)= {\frac {2i\corr{\kappa}\Theta_{+}(x,y)}{1+\gamma_{+} \cot^{-1}\frac{\corr{\kappa}-|\lambda_I|}{|\lambda_R|}}},\\
\Theta_{+}(x,y)=(1+{\mathfrak G}_+\ast v_0 )^{-1}\varphi_+(x,+i\corr{\kappa}),
\\
\gamma_{+} =\frac 2\pi\iint v_0(x,y)\vartheta_+(x,+i\corr{\kappa})\Theta_{+}(x,y)dxdy;
\end{array}
\label{E:gamma}\\
\begin{array}{l}
m_{-,0} (x,y,\lambda)= {\frac {\Theta_{-}(x,y)}{1+\gamma_{-} \cot^{-1}\frac{\corr{\kappa}-|\lambda_I|}{|\lambda_R|}},}
\\
\Theta_{-}(x,y)=(1+{\mathfrak G}_-\ast v_0 )^{-1}\vartheta_-(x,-i\corr{\kappa}),
\\
\gamma_{-} =\frac 2\pi\iint v_0(x,y)\varphi_-(x,-i\corr{\kappa})\Theta_{-}(x,y)dxdy,
\end{array}\label{E:gamma-1}
\end{gather}
where $\varphi_\pm$, $\vartheta_\pm$, ${\mathfrak G}_\pm$, $\cot^{-1}\frac{\corr{\kappa}-|\lambda_I|}{|\lambda_R|}$ are defined by \eqref{E:eigen-x},   \eqref{E:g-asym-pm-i-prep-limit}, \eqref{E:arccot}, and
\beq\label{E:m-pm-i-0-+} 
 \ba{c}  {\left|  m_{+,r}(x,y,\lambda)  \right|_{L^\infty(D_{+i\corr{\kappa}})}
  \le  C (1+|x|+|y| )} , \\ \left|(\lambda-i\corr{\kappa})  m_{+,r}(x,y,\lambda)  \right|_{L^\infty(D_{\pm i\corr{\kappa}})}
  \le  C   ,  \\
  \left|  m_{-,r}(x,y,\lambda)  \right|_{L^\infty(D_{- i\corr{\kappa}})}
  \le  C   ,  
    \   m_{-,r}(x,y,-i\corr{\kappa}+0^+e^{i\alpha})  =0   , \\
    \left|\frac\partial{\partial s}  m_{-,r}(x,y,\lambda)  \right|_{L^\infty(D_{-i\corr{\kappa}})}
  \le  C (1+|x|+|y| ),
\ea 
\eeq
\beq
\ba{c} \gamma_+=\gamma_-\ \in\mathbb R,\quad\quad
|\gamma_\pm|\le C|v_0|_{{L^1}}.\ea\label{E:gamma-pm-est}
\eeq
 Moreover,    
\begin{gather}
\ba{c}
m(x,y,\lambda)=\overline{m(x,y,-\overline\lambda)},  \quad \textit{ for }\lambda\ne\pm i\corr{\kappa},
\ea\label{E:m-reality-1}\\ 
\ba{c}
m_{+,-1}(x,y,i\corr{\kappa}+0^+e^{i\alpha}) 
=   s_d e^{-2\corr{\kappa}x} m _{-,0}(x,y,-i\corr{\kappa}+0^+e^{i(\pi+\alpha)}) \  \in i\mathbb R ,
\ea\label{E:residue} 
\end{gather} 
 with the {\bf normalization constant} $  s_d=2i\corr{\kappa}$.
 
\end{theorem}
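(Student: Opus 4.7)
The plan is to start from the integral equation $m = \vartheta_- - \widetilde G \ast v_0 m$ supplied by Theorem \ref{T:KP-eigen-existence} and substitute the local decomposition
\[
\widetilde G = \mathfrak G_\pm + \tfrac{2}{\pi}\mathfrak g(x,x',\pm i\kappa)\cot^{-1}\tfrac{\kappa - |\lambda_I|}{|\lambda_R|} + \omega_\pm
\]
from Lemma \ref{L:green-local}. A first observation I would establish is that the radial limit $\omega_\pm|_{\lambda = \pm i\kappa} = 0$: setting $s = 0$ in \eqref{E:omega-definition-new}, the first integrand collapses, lines two and three cancel, the factor $e^{-i\lambda_R(\cdot)}-1$ together with $\mathrm{sgn}(\lambda_R)\lambda_R$ kills line four, $\theta(0) = 0$ kills line five, and the two logarithms in the last line coincide at $s = 0^+$.

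To extract the singular part at $+i\kappa$ I would then multiply the integral equation by $(\lambda - i\kappa)$ and pass to the radial limit. The boundary term contributes $2i\kappa\,\varphi_+(x,+i\kappa)$ from the pole of $\vartheta_-$ at $+i\kappa$ visible in \eqref{E:eigen-x}, and $\widetilde G \ast v_0[(\lambda - i\kappa) m]$, thanks to $\omega_+|_{+i\kappa}=0$, converges to $[\mathfrak G_+ + \tfrac{2}{\pi}\cot^{-1}(\cdot)\mathfrak g(\cdot,+i\kappa)] \ast v_0 m_{+,-1}$. Exploiting the factorization $\mathfrak g(x,x',+i\kappa) = \varphi_+(x,+i\kappa)\vartheta_+(x',+i\kappa)$ from \eqref{E:g-def}, this reduces to
\[
m_{+,-1} = 2i\kappa\,\varphi_+(x,+i\kappa) - \mathfrak G_+ \ast v_0 m_{+,-1} - \tfrac{2}{\pi}\cot^{-1}(\cdot)\,\varphi_+(x,+i\kappa)\,\beta(\lambda),
\]
with $\beta(\lambda) = \iint \vartheta_+(x',+i\kappa)\, v_0\, m_{+,-1}\,dx'dy'$. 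Inverting $1 + \mathfrak G_+ \ast v_0$ (bounded for $|v_0|_{L^1 \cap L^\infty} \ll 1$) gives $m_{+,-1} = [2i\kappa - \tfrac{2}{\pi}\cot^{-1}(\cdot)\beta]\Theta_+$; pairing against $\vartheta_+ v_0$ reduces $\beta$ to the scalar equation $\beta(1 + \gamma_+ \cot^{-1}(\cdot)) = i\pi\kappa\gamma_+$, whose solution substituted back yields the claimed formula. The analogous derivation at $-i\kappa$, where $\vartheta_-$ is already regular so no multiplication by $(\lambda - i\kappa)$ is needed, produces $m_{-,0}$.

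The remainder bounds in \eqref{E:m-pm-i-0-+} would follow by subtracting the derived singular pieces from $m$ and using $|\omega_\pm| \le C$ together with $|\partial_s \omega_\pm| \le C(1 + |x - x' \pm 2\kappa y|)$ from \eqref{E:green-variation-derivative}; the polynomial factor $1 + |x| + |y|$ bounding $m_{+,r}$ is precisely what is inherited from $\partial_s \omega_\pm$. The reality condition \eqref{E:m-reality-1} follows because $\overline{m(x,y,-\bar\lambda)}$ satisfies the same integral equation (via $v_0 \in \mathbb R$, the identity $\overline{\vartheta_-(x,-\bar\lambda)} = \vartheta_-(x,\lambda)$, and the $\lambda' \mapsto -\lambda'$ conjugation symmetry of $\widetilde G$ visible from Lemma \ref{L:green-heat}), combined with uniqueness from the contraction argument already used in Theorem \ref{T:KP-eigen-existence}. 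Finally, \eqref{E:residue} is extracted from the algebraic identity $\varphi_+(x,+i\kappa) = e^{-2\kappa x}\vartheta_-(x,-i\kappa)$ implicit in \eqref{E:eigen-x}, which transports through the parallel integral equations for $\Theta_\pm$ and simultaneously forces $\gamma_+ = \gamma_-$; reality of $\gamma_\pm$ is then a consequence of \eqref{E:m-reality-1}.

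The main obstacle I expect is the vanishing $\omega_\pm|_{\pm i\kappa} = 0$, on which the clean factored formulas for $m_{+,-1}$ and $m_{-,0}$ rest; although each cancellation in \eqref{E:omega-definition-new} is individually elementary, the bookkeeping and the verification of independence of the polar angle $\alpha$ are delicate. A secondary difficulty is that $m_{+,r}$ admits only the polynomial bound $1 + |x| + |y|$ rather than a uniform $L^\infty$ bound like $m_{-,r}$, because the $(\lambda - i\kappa)^{-1}$ pole of $\vartheta_-$ pushes $\partial_s$-contributions into the remainder equation.
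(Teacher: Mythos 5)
Your overall route is essentially the paper's own: substitute the local decomposition of $\widetilde G$ from Lemma \ref{L:green-local} into the integral equation \eqref{E:1-ode}, isolate the rank-one operator $\frac{2}{\pi}\mathfrak g(x,x',\pm i\kappa)\cot^{-1}\frac{\kappa-|\lambda_I|}{|\lambda_R|}\ast v_0$, invert $1+\mathfrak G_\pm\ast v_0$, and reduce to a scalar equation whose solution produces the factor $(1+\gamma_\pm\cot^{-1}(\cdot))^{-1}$. Your scalar fixed-point equation for $\beta$ is the same computation the paper performs via the Neumann series for $\wp_\pm^{-1}$ in \eqref{E:origin}--\eqref{E:m-pm-asy}, and your identification of the residue $2i\kappa\,\varphi_+(x,+i\kappa)$ of $\vartheta_-=\varphi_+/a_+$ at $+i\kappa$, the vanishing of $\omega_\pm$ in the radial limit, the use of \eqref{E:green-variation-derivative} for the remainder bounds, and the conjugation-symmetry argument for \eqref{E:m-reality-1} all match the paper.

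There is, however, one genuine gap, in your treatment of \eqref{E:residue} and of $\gamma_+=\gamma_-$. You write that the identity $\varphi_+(x,+i\kappa)=e^{-2\kappa x}\vartheta_-(x,-i\kappa)$ ``transports through the parallel integral equations for $\Theta_\pm$,'' but that transport is exactly the nontrivial content of this step and does not follow from the pointwise identity alone: $\Theta_+=(1+\mathfrak G_+\ast v_0)^{-1}\varphi_+(x,+i\kappa)$ and $\Theta_-=(1+\mathfrak G_-\ast v_0)^{-1}\vartheta_-(x,-i\kappa)$ involve two \emph{different} kernels, and to conclude $\Theta_+=e^{-2\kappa x}\Theta_-$ you need the intertwining relation ${\mathfrak G}_+(x,x',y)\,e^{-2\kappa x'}=e^{-2\kappa x}\,{\mathfrak G}_-(x,x',y)$ of \eqref{E:green-exp-1}. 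The paper obtains this from the Green-function symmetry $G(x,x',y,i\kappa+0^+e^{i\alpha})=G(x,x',y,-i\kappa+0^+e^{i(\pi+\alpha)})$, whose proof (equations \eqref{E:G-sym-+}--\eqref{E:G-sym-+-2}) is a separate contour-deformation and residue computation comparing the $\mathbb C^+$ and $\mathbb C^-$ representations of $G_c$ together with $\phi_\pm/a_\pm=\psi_\mp$; nothing in your argument supplies this. Without it you get neither $\gamma_+=\gamma_-$ nor the normalization $s_d=2i\kappa$ in \eqref{E:residue}. (A smaller quibble: reality of $\gamma_\pm$ is most directly read off from the reality of $\mathfrak G_\pm$, $v_0$, $\varphi_+(x,i\kappa)$, and $\vartheta_+(x,i\kappa)$, rather than routed through \eqref{E:m-reality-1}; and in your check that $\omega_\pm$ vanishes radially, the fifth line of \eqref{E:omega-definition-new} dies because the bracket $e^{(\lambda^2+\kappa^2)y+\cdots}-1$ tends to zero, not because of $\theta(0)=0$.)
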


\begin{proof} $\underline{\emph{Step 1 (Proof for \eqref{E:define-remainder-m} - \eqref{E:m-pm-i-0-+})}}:$ To prove the lemma, we will compute the leading terms of $m(x,y,\lambda)$ at $\pm i \corr{\kappa}$. Denote 
\beq\label{E:p-definition}
\ba{l} \wp_\pm(x,x',y,\lambda)  
=   1+[{\mathfrak G}_\pm +\frac {2}\pi\mathfrak g(x,x',\pm i\corr{\kappa})\cot^{-1}\frac{\corr{\kappa}-|\lambda_I|}{|\lambda_R|}
]\ast v_0
\ea
\eeq 
as the leading terms in $(1+ \widetilde G\ast v_0)$ at $\lambda=\pm i\corr{\kappa}$. Hence $\wp_\pm$ and $(1+\mathfrak G_\pm\ast v_0)$ are invertible by Lemma \ref{L:eigen-green}. Together with \eqref{E:1-ode}, Lemma \ref{L:green-local}, and $\frac {\varphi_+(x,\lambda)}{a(\lambda)}=\vartheta_-(x,\lambda)$,  
\beq\label{E:origin}
\ba{rll}
 m(x,y,\lambda) 
=&\frac 1{a(\lambda)}\left(\wp_++\omega_+\ast v_0\right)^{-1}\varphi_+ 
&{}\\
=&\frac 1{a(\lambda)}\sum_{j=0}^\infty(-1)^j\left(\wp_+^{-1}\omega_+\ast v_0\right)^{j}\wp_+^{-1}\varphi_+,&D_{+i\corr{\kappa}}^\times;\\
m(x,y,\lambda) 
=&\left(\wp_-+\omega_-\ast v_0\right)^{-1}\vartheta_- &{}\\
=&\sum_{j=0}^\infty(-1)^j\left(\wp_-^{-1}\omega_-\ast v_0\right)^{j}\wp_-^{-1}\vartheta_-, &D_{-i\corr{\kappa}}^\times.
\ea
\eeq So 
$\frac{2i\corr{\kappa}}{\lambda-i\corr{\kappa}}\wp_+^{-1}\varphi_+(x,y,+i\corr{\kappa})$, $
\wp_-^{-1}\vartheta_-(x,y,-i\corr{\kappa})$ are  leading terms  at $+ i \corr{\kappa}$, $- i \corr{\kappa}$. 
 Defining $\Theta_{\pm}(x,y)$ and $\gamma_{\pm}$ by \eqref{E:gamma}, \eqref{E:gamma-1}, and  
using
\beq\nonumber
\ba{rl}
  &\left((1+{\mathfrak G}_+\ast v_0)^{-1}[\frac {-2}\pi\mathfrak g(x,x',i\corr{\kappa})\cot^{-1}\frac{\corr{\kappa}-|\lambda_I|}{|\lambda_R|}
]\ast v_0\right) \Theta_{+}(x,y) \\
= &-\gamma_+ \cot^{-1}\frac{\corr{\kappa}-|\lambda_I|}{|\lambda_R|}\,\Theta_{+}(x,y);
\\
  &
\left((1+{\mathfrak G}_-\ast v_0)^{-1}[\frac {-2}\pi\mathfrak g(x,x',-i\corr{\kappa})\cot^{-1}\frac{1-|\lambda_I|}{|\lambda_R|}
]\ast v_0\right) \Theta_{-}(x,y) \\
= &-\gamma_- \cot^{-1}\frac{\corr{\kappa}-|\lambda_I|}{|\lambda_R|}\,\Theta_{-}(x,y), \ea
\eeq
we obtain
\beq\label{E:m-pm-asy}
\ba{c}
 \wp_+^{-1}\left[\varphi_+(x,i\corr{\kappa})\right] 
=\frac {\Theta_{+}(x,y)}{1+\gamma_+ \cot^{-1}\frac{\corr{\kappa}-|\lambda_I|}{|\lambda_R|}},\  
 \wp_-^{-1}\left[\vartheta_-(x,-i\corr{\kappa})\right]=\frac {\Theta_{-}(x,y)}{1+\gamma_- \cot^{-1}\frac{\corr{\kappa}-|\lambda_I|}{|\lambda_R|}}.
\ea
\eeq So the remainders are 
\beq\label{E:m-r-def-new}
\begin{array}{rl}
  &m_{+,r}(x,y,\lambda)\\
= &\wp_+^{-1} \varphi_+(x,i\corr{\kappa})
+ \wp_+^{-1}\frac{ \varphi_+(x,\lambda)-\varphi_+(x,i\corr{\kappa})}{a(\lambda)} + \frac {
\sum_{j=1}^\infty  {(-1)^j}\left(\wp^{-1}_+\omega_+\ast v_0\right)^j \wp_+^{-1}\varphi_+(x,\lambda)}{a(\lambda)} ;\\
&m_{-,r}(x,y,\lambda)\\
= & \sum_{j=0}^\infty  {(-1)^j}\left(\wp^{-1}_-\omega_-\ast v_0\right)^j \wp_-^{-1}\vartheta_-(x,\lambda) - \wp_-^{-1}\left[\vartheta_-(x,-i\corr{\kappa})\right] .
\end{array}
\eeq
 Along with Lemma \ref{L:green-local}, we then prove \eqref{E:define-remainder-m} - \eqref{E:m-pm-i-0-+}.  

\vskip.1in
$\underline{\emph{Step 2 (Proof for \eqref{E:m-reality-1})}}:$ 
Condition \eqref{E:m-reality-1} can be shown by the reality of $v_0(x,y)$, a change of variables $\lambda'\mapsto -\lambda'$ in \eqref{E:a-1}, and 
\beq\label{E:pm-sym-eigen-conjugate}
\ba{c}
\overline{\phi_\pm (x,-\overline\lambda)}=\phi_\pm (x, \lambda),\ 
\overline{\psi_\pm (x,-\overline\lambda)}=\psi_\pm (x, \lambda),\ 
\overline{a ( -\overline\lambda)}=a ( \lambda) 
\ea 
\eeq  to prove 
\[\ba{c}\overline{G_c(x,x',y,-\overline\lambda)}=G_c(x,x',y,\lambda),\\ \overline{G_d(x,x',y,-\overline\lambda)}=G_d(x,x',y,\lambda). \ea\]

\vskip.1in
$\underline{\emph{Step 3 (Proof for \eqref{E:gamma-pm-est} and \eqref{E:residue})}}:$ We first claim
\beq\label{E:G-sym}
\ba{c}G(x,x',y, i\corr{\kappa}+0^+e^{i\alpha})=G(x,x',y,-i\corr{\kappa}+0^+e^{i(\pi+\alpha)}).
\ea
\eeq Combining with $\varphi_+(x,i\corr{\kappa})=\vartheta_-(x,-i\corr{\kappa})e^{-2\corr{\kappa}  x}$, $\varphi_-(x,-i\corr{\kappa})=\vartheta_+(x,i\corr{\kappa})e^{-2 \corr{\kappa} x}$, \eqref{E:chi-a}, and \eqref{E:g-asym-pm-i-prep-limit}, we obtain the commutative condition
\begin{equation}\label{E:green-exp-1}
\ba{c}
 {\mathfrak G}_+(x,x',y)e^{-2\corr{\kappa}x'}=e^{-2\corr{\kappa}x}{\mathfrak G}_-(x,x',y). 
\ea
\end{equation}
 Consequently, 
\begin{equation}\label{E:green-exp-2}
\begin{array}{c}
\Theta_+(x,y)=e^{-2\corr{\kappa} x}\Theta_-(x,y),\ \
\gamma_+=\gamma_- 
\end{array}
\end{equation}which,  combining with \eqref{E:arccot}, prove  \eqref{E:gamma-pm-est} and \eqref{E:residue}.

We now exploit the approach in \cite[Proposition 9 (i)] {VA04} to prove \eqref{E:G-sym}. In view of \eqref{E:green-1-d}, \eqref{E:tilde-g-eta-new}, and,  for fixed  $x$, $y\ne 0$, $-\frac \pi 2<\alpha\le \frac{3\pi}2$, $\alpha\ne 0,\,\pi$, let 
\[
\ba{c}
\lambda_+=\lambda_{+,R}+i\lambda_{+,I}=+ i\corr{\kappa}+0^+e^{i\alpha},\\\lambda_-=\lambda_{-,R}+i\lambda_{-,I}=- i\corr{\kappa}+0^+e^{i(\pi+\alpha)}. \ea
\]Then 
\beq\label{E:G-sym-+}
\ba{rl}
& G_{\CC^+}(x,x',y,\lambda_+)   \\
= &\theta(y) \int_{\mathbb R}e^{(\lambda^2_+-\left[\eta+i\lambda_{+,I} \right]^2)y} \frac{\phi_+(x,\eta+i\lambda_{+,I})\psi_+(x',\eta+i\lambda_{+,I})}{2\pi a_+(\eta+i\lambda_{+,I})}d\eta\\
& -  \int_{-|\lambda_{+,R}|}^{|\lambda_{+,R}|} e^{(\lambda^2_+-\left[\eta+i\lambda_{+,I} \right]^2)y} \frac{\phi_+(x,\eta+i\lambda_{+,I})\psi_+(x',\eta+i\lambda_{+,I})}{2\pi a_+(\eta+i\lambda_{+,I})}d\eta;\\
& G_{\CC^-}(x,x',y,\lambda_-)   \\
= &\theta(y) \int_{\mathbb R}e^{(\lambda^2_--\left[\eta+i\lambda_{-,I} \right]^2)y} \frac{\phi_-(x',\eta+i\lambda_{-,I})\psi_-(x,\eta+i\lambda_{-,I})}{2\pi a_-(\eta+i\lambda_{-,I})}d\eta \\
& -  \int_{-|\lambda_{-,R}|}^{|\lambda_{-,R}|} e^{(\lambda^2_--\left[\eta+i\lambda_{-,I} \right]^2)y} \frac{\phi_-(x',\eta+i\lambda_{-,I})\psi_-(x,\eta+i\lambda_{-,I})}{2\pi a_-(\eta+i\lambda_{-,I})}d\eta.
\ea
\eeq Deforming the contour, applying the residue theorem and \eqref{E:pm-sym-eigen}, 
\beq\label{E:G-sym-+-1}
\ba{rl}
&\theta(y) \int_{\mathbb R}e^{(\lambda^2_+-\left[\eta+i\lambda_{+,I} \right]^2)y} \frac{\phi_+(x,\eta+i\lambda_{+,I})\psi_+(x',\eta+i\lambda_{+,I})}{2\pi a_+(\eta+i\lambda_{+,I})}d\eta\\
=&\theta(y) \int_{\mathbb R}e^{ -\eta^2 y} \frac{\phi_+(x,\eta)\psi_+(x',\eta )}{2\pi a_+(\eta)}d\eta\\
&+2\corr{\kappa}\theta(y)[1-\theta(\corr{\kappa}-|\lambda_{+,I}|)] \phi_+(x,i\corr{\kappa})\psi_+(x',i\corr{\kappa})\\
=&\theta(y) \int_{\mathbb R}e^{ -\eta^2 y} \frac{\phi_-(x',\eta)\psi_-(x,\eta )}{2\pi a_-(\eta)}d\eta\\
&+2\corr{\kappa}\theta(y)[1-\theta(\corr{\kappa}-|\lambda_{-,I}|)] \phi_-(x',-i\corr{\kappa})\psi_-(x,-i\corr{\kappa})\\
=&\theta(y) \int_{\mathbb R}e^{(\lambda^2_--\left[\eta+i\lambda_{-,I} \right]^2)y} \frac{\phi_-(x',\eta+i\lambda_{-,I})\psi_-(x,\eta+i\lambda_{-,I})}{2\pi a_-(\eta+i\lambda_{-,I})}d\eta .
\ea
\eeq
 On the other hand, the residue theorem, \eqref{E:residue-1}, \eqref{E:arccot},   \eqref{E:green-1-d}, and the dominated convergence theorem imply
\beq\label{E:G-sym-+-2}
\ba{rl}
&\int_{-|\lambda_{+,R}|}^{|\lambda_{+,R}|} e^{(\lambda^2_+-\left[\eta+i\lambda_{+,I} \right]^2)y} \frac{\phi_+(x,\eta+i\lambda_{+,I})\psi_+(x',\eta+i\lambda_{+,I})}{2\pi a_+(\eta+i\lambda_{+,I})}d\eta\\
=&\frac {i\corr{\kappa}}\pi\phi_+(x,i\corr{\kappa})\psi_+(x',i\corr{\kappa})\int_{-|\lambda_{+,R}|}^{|\lambda_{+,R}|}  \frac{1}{\eta+i(\lambda_{+,I}-\corr{\kappa})}d\eta\\
=&-\frac {i\corr{\kappa}}\pi\phi_-(x,-i\corr{\kappa})\psi_-(x',-i\corr{\kappa})\int_{-|\lambda_{-,R}|}^{|\lambda_{-,R}|}  \frac{1}{\eta+i(\lambda_{-,I}+\corr{\kappa})}d\eta\\
=&\int_{-|\lambda_{-,R}|}^{|\lambda_{-,R}|} e^{(\lambda^2_--\left[\eta+i\lambda_{-,I} \right]^2)y} \frac{\phi_-(x',\eta+i\lambda_{-,I})\psi_-(x,\eta+i\lambda_{-,I})}{2\pi a_-(\eta+i\lambda_{-,I})}d\eta.
\ea
\eeq
Consequently, $ G_{\CC^+}(x,x',y,\lambda_0)= G_{\CC^-}(x,x',y,-\lambda_0)$ and \eqref{E:G-sym}  follow  from \eqref{E:G-sym-+}-\eqref{E:G-sym-+-2}.
\end{proof}

\begin{example}\label{Ex:m-v-0}
If $v_0(x,y)\equiv 0$, then 
\beq\label{E:m-v=0}
\ba{c}
s_d\equiv 2i\kappa,\ s_c(\lambda)\equiv 0,\ \gamma_\pm=0,\\
m(x,y,\lambda)=\frac{\varphi_+(x,\lambda)}{a_+(\lambda)}=
\vartheta_-(x,\lambda),\\  
m_{+,-1}(x,y,i\corr{\kappa}+0^+e^{i\alpha})=2i\corr{\kappa}\Theta_+(x,y)\equiv   \frac {2i\corr{\kappa}}{1+e^{2\corr{\kappa}x}},\\
m_{-,0}(x,y,-i\corr{\kappa}+0^+e^{i\alpha})= \Theta_-(x,y)\equiv  \frac {1}{1+e^{-2\corr{\kappa}x}} .  
\ea
\eeq 
\end{example}

\subsection{Continuous scattering data of $m(x,y,\lambda)$  }

\begin{lemma}\label{E:dbar-green-sd}
For $\lambda_R\ne 0$,
\begin{equation}\label{E:lambda-R-n-0}
\ba{c}
\partial_{\bar\lambda}\widetilde G(x,x',y,\lambda)=
 -\frac {\mbox{sgn}(\lambda_R)}{2\pi a(-\overline \lambda)} e^{(\lambda^2-\overline\lambda^2)y+i(\lambda+\overline\lambda)(x-x')}\mathfrak g(x,x',-\overline\lambda),
\ea
\end{equation}where $a(\lambda)$, $\mathfrak g$ are defined by \eqref{E:chi-a}, \eqref{E:g-def}.
 
 \end{lemma}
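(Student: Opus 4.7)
The plan is to differentiate the explicit representation from Lemma \ref{L:green-heat} directly using $\partial_{\bar\lambda}=\tfrac12(\partial_{\lambda_R}+i\partial_{\lambda_I})$, exploiting the fact that most of the $\lambda$-dependence is holomorphic so that only certain jump/boundary terms contribute. Splitting $\widetilde G=\widetilde G_c+\widetilde G_d$: for the discrete part, only $\theta(\kappa-|\lambda_I|)$ carries non-holomorphic $\lambda$-dependence, while the remaining factor $e^{(\lambda^2+\kappa^2)y+i\lambda(x-x')\pm\kappa(x-x')}\mathfrak g(x,x',\pm i\kappa)$ is entire in $\lambda$. Hence $\partial_{\bar\lambda}\widetilde G_d=0$ off the lines $\lambda_I=\pm\kappa$, which is consistent with the smooth right-hand side of \eqref{E:lambda-R-n-0} for $\lambda_R\ne0$.

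For the continuous part, I would combine the $\phi_+\psi_+$ factor with the outer $e^{i\lambda(x-x')}$ via the identity $\phi_+(x,\mu)\psi_+(x',\mu)=\mathfrak g(x,x',\mu)e^{-i\mu(x-x')}$ (from \eqref{E:eigen-x}, \eqref{E:g-def}), arriving at an integrand
\[
F(\lambda,\lambda')=\frac{\mathfrak g(x,x',\lambda+\lambda')}{2\pi a(\lambda+\lambda')}\,e^{(\lambda^2-[\lambda+\lambda']^2)y-i\lambda'(x-x')}
\]
which is holomorphic in $\lambda$ for each fixed real $\lambda'$, since $\varphi_\pm,\vartheta_\pm,a_\pm$ are rational in their complex argument. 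The $\bar\lambda$-derivative therefore falls only on the characteristic factor $\theta(y)\chi_--\theta(-y)\chi_+$. Writing $\chi_+=H(-\lambda'(\lambda'+2\lambda_R))$ with $H$ the Heaviside function, and computing
\[
\partial_{\bar\lambda}\chi_+=-\lambda'\,\delta\bigl(-\lambda'(\lambda'+2\lambda_R)\bigr)=\operatorname{sgn}(\lambda_R)\,\delta(\lambda'+2\lambda_R)
\]
(where the prospective delta contribution at $\lambda'=0$ is killed by the $\lambda'$ prefactor), and similarly $\partial_{\bar\lambda}\chi_-=-\operatorname{sgn}(\lambda_R)\,\delta(\lambda'+2\lambda_R)$, I would combine with $\theta(y)+\theta(-y)=1$ for $y\ne0$ and evaluate $F$ at $\lambda'=-2\lambda_R$. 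There $\lambda+\lambda'=-\bar\lambda$ and $-i\lambda'(x-x')=i(\lambda+\bar\lambda)(x-x')$, so the integrated delta yields exactly \eqref{E:lambda-R-n-0}.

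The main subtlety I expect is twofold: confirming that the jump at $\lambda'=0$ genuinely drops out (so that only $\lambda'=-2\lambda_R$ feeds into the answer), and tracking signs consistently in the two regimes $\lambda_R>0$ and $\lambda_R<0$. Both can be cross-checked directly from the case-split representation \eqref{E:tilde-g-lambda}: applying Leibniz's rule to the variable endpoints $\pm2|\lambda_R|$ of the four integration intervals gives $\partial_{\bar\lambda}(-2|\lambda_R|)=-1$ in each case, which produces the same $-\operatorname{sgn}(\lambda_R)\,F(\lambda,-2\lambda_R)$ regardless of the sign of $\lambda_R$, while the fixed endpoint $0$ contributes no boundary term -- precisely matching the delta-function calculation above and completing the proof.
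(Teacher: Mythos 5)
Your main computation is correct and is essentially the paper's own: the $\bar\partial$-derivative of the characteristic factor $\theta(y)\chi_--\theta(-y)\chi_+$ produces $\mp\mbox{sgn}(\lambda_R)\,\delta_{\lambda'=-2\lambda_R}$, the weights recombine via $\theta(y)+\theta(-y)=1$ for $y\ne 0$, and evaluating the integrand at $\lambda'=-2\lambda_R$ (where $\lambda+\lambda'=-\overline\lambda$) yields exactly the right-hand side of \eqref{E:lambda-R-n-0}; your cross-check via the moving endpoints $\pm 2|\lambda_R|$ in \eqref{E:tilde-g-lambda} is also sound, as is the observation that the fixed endpoint $\lambda'=0$ contributes nothing.

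The gap lies in the two contributions you discard. The claim that the remaining integrand is ``holomorphic in $\lambda$ for each fixed real $\lambda'$ since $\varphi_\pm,\vartheta_\pm,a_\pm$ are rational'' fails on the lines $\lambda_I=\pm\corr{\kappa}$: for $\lambda\in\CC^+$ the factor $1/a_+(\lambda+\lambda')=(\lambda+\lambda'+i\corr{\kappa})/(\lambda+\lambda'-i\corr{\kappa})$ has a pole at $\lambda=-\lambda'+i\corr{\kappa}$, which as $\lambda'$ ranges over $\RR$ sweeps out the entire line $\lambda_I=\corr{\kappa}$, so $\partial_{\bar\lambda}\bigl(1/a(\lambda+\lambda')\bigr)$ contributes a distribution proportional to $\delta_{\lambda'=-\lambda_R}\,\delta_{\lambda_I=\pm\corr{\kappa}}$ (the term $B$ in the paper's proof, following \cite{VA04}). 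Likewise $\partial_{\bar\lambda}\widetilde G_d$ is not zero but a distribution supported on $\lambda_I=\pm\corr{\kappa}$ coming from the jump of $\theta(\corr{\kappa}-|\lambda_I|)$ (the term $C$). Declaring these to vanish ``off the lines'' only establishes \eqref{E:lambda-R-n-0} on $\{\lambda_R\ne 0\}\setminus\{\lambda_I=\pm\corr{\kappa}\}$, whereas the identity is needed distributionally on all of $\{\lambda_R\ne 0\}$ --- the $\bar\partial$-data is later integrated over the whole plane in the Cauchy integral operator, and any uncancelled line-supported terms would produce extra contour integrals there. The missing step is the explicit verification that $B+C=0$, i.e.\ that the residue of $1/a$ at $\pm i\corr{\kappa}$ exactly cancels the jump of the discrete part $\widetilde G_d$ across $\lambda_I=\pm\corr{\kappa}$; this is the one genuine cancellation in the paper's argument and should be carried out rather than asserted away.
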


\begin{proof}   For convenience, we sketch the proof of \cite[\S 5, Proposition 11]{VA04}.   
 Define
 \[
 \ba{c}
\mathfrak F(x,x',y,\lambda,\lambda')=e^{(\lambda^2-[\lambda+\lambda']^2)y+i(\lambda-[\lambda+\lambda'])(x-x')}\mathfrak g(x,x',\lambda+\lambda') ,
\ea \] where $\mathfrak g$ is defined by \eqref{E:g-def}. So for $\lambda_R\ne 0$,  
\beq\label{E:ABC}
\ba{rl}
&\partial_{\overline\lambda}\widetilde G (x,x',y,\lambda)\\
=&\int_{\mathbb R}\frac{\mathfrak F(x,x',y,\lambda,\lambda')}{ 2\pi a(\lambda+\lambda')}\frac{\partial}{\partial\overline\lambda}\left(\theta(y)\chi_{-}-\theta(-y)\chi_{+}\right)d\lambda'\\
&+\int_{\mathbb R}\left(\theta(y)\chi_{-}-\theta(-y)\chi_{+}\right)\mathfrak F(x,x',y,\lambda,\lambda')\frac{\partial}{\partial\overline\lambda}\frac{1}{2\pi a(\lambda+\lambda')}d\lambda'\\
&-\frac{\partial}{\partial\overline\lambda}\left[2\theta(-y)\theta(1-|\lambda_I|)\frac{e^{(\lambda^2+1)y+i\lambda(x-x')}}{(e^{-x}+e^x)(e^{-x'}+e^{x'})}\right]\\
=&A+B+C.
\ea
\eeq Using
\[
\begin{array}{c}
\partial_{\overline\lambda}\chi_{\pm}=\frac 12\partial_{\lambda_R}\chi_{\pm}=\pm \texttt{sgn}(\lambda_R)\delta_{\lambda'=-2\lambda_R},\\
\frac 1{\pi}\partial_{\bar \lambda}\left(\frac 1{\lambda-a}\right)=\delta_{\lambda_R=a_R}\delta_{\lambda_I=a_I},\ 
\partial_{\overline\lambda}\theta(1\mp\lambda_I)=\frac i2\partial_{\lambda_I}\theta(1\mp\lambda_I)=\mp\frac i2 \delta_{\lambda_I=\pm 1},                                                                                                                                                                                                                                                                                                                                                                                                                                                                                                                                                                                                                                                                                                                                                                                                                                                                                                                                                                                                                                                 \end{array}
\]
one has
\beq
\ba{rl}\label{E:C}
A 
=&\int_{\mathbb R}\frac{\mathfrak F(x,x',y,\lambda,\lambda')}{ 2\pi a(\lambda+\lambda')}\frac{\partial}{\partial\overline\lambda}\left(\theta(y)\chi_{-}-\theta(-y)\chi_{+}\right)d\lambda'\\
=&\int_{\mathbb R}\frac{\mathfrak F(x,x',y,\lambda,\lambda')}{2\pi a(\lambda+\lambda')}\times(-\theta(y)\mbox{sgn}(\lambda_R)\delta_{\lambda'=-2\lambda_R}\\
&-\theta(-y)\mbox{sgn}(\lambda_R)\delta_{\lambda'=-2\lambda_R})d\lambda'\\
=&-\mbox{sgn}(\lambda_R)\frac{\mathfrak F(x,x',y,\lambda,-2\lambda_R)}{2\pi a( -\overline\lambda)}, \\
B
=&\int_{\mathbb R}\left(\theta(y)\chi_{-}-\theta(-y)\chi_{+}\right)\mathfrak F(x,x',y,\lambda,\lambda')\frac{\partial}{\partial\overline\lambda}\frac{1}{ 2\pi a(\lambda+\lambda')}d\lambda'\\
=&\int_{\mathbb R}\left(\theta(y)\chi_{-}-\theta(-y)\chi_{+}\right)\mathfrak F(x,x',y,\lambda,\lambda')(\pm i)\\
&\times\delta_{\lambda'=-\lambda_R}\delta_{\lambda_I=\pm 1}d\lambda'\\
=&\mp i\theta(-y)\mathfrak F(x,x',y,\lambda_R\pm i,-\lambda_R)\delta_{\lambda_I=\pm 1} \\
=& \mp i\theta(-y)\frac{e^{(\lambda^2+1)y+i\lambda(x-x')}}{(e^{-x}+e^x)(e^{-x'}+e^{x'})}\delta_{\lambda_I=\pm 1}, 

\ea
\eeq
\[
\ba{rl}
C
=&\pm   i\theta(-y)\frac{e^{(\lambda^2+1)y+i\lambda(x-x')}}{(e^{-x}+e^x)(e^{-x'}+e^{x'})}\delta_{\lambda_I=\pm 1}.
\ea
\]
Plugging  \eqref{E:C} into \eqref{E:ABC}, we prove the lemma.
\end{proof}

\begin{theorem}\label{T:sd-continuous}   
If $\partial_y^j\partial_x^kv_0\in  {L^1\cap L^\infty}$, $ 0\le j,\, k\le 2$, $ {|v_0|_{L^1\cap L^\infty}}\ll 1$, then
\beq\label{E:dbar-m}  
\begin{gathered}
\partial_{\overline\lambda}m(x,y,\lambda)
=    s_c(\lambda)e^{i(4\lambda_R\lambda_Iy+2\lambda_Rx)}m(x,y,-\overline\lambda),\ \lambda_R\ne 0,
\end{gathered}  
\eeq  (cf. \cite[\S 6, Eq.(46), (51)]{VA04}) where $\mathfrak s_c(\lambda)$ is defined by
\begin{equation}\label{E:alpha-def}
\begin{array}{c}
  s_c(\lambda)=\left\{ 
{\begin{array}{ll}
{\frac {\mbox{sgn}(\lambda_R)}{2\pi}}\widehat{\vartheta_+ v_0m}(\frac {\lambda_R}{\pi} ,\frac{2\lambda_R\lambda_I}\pi;\lambda),&\lambda_R\ne 0,\,\lambda\in\mathbb C^+,\\
{\frac {\mbox{sgn}(\lambda_R)}{2\pi a(-\overline \lambda)}}\widehat{\varphi_- v_0m}(\frac {\lambda_R}\pi,\frac{2\lambda_R\lambda_I}\pi;\lambda),&\lambda_R\ne 0,\,\lambda\in\mathbb C^-,
\end{array}}
\right.\\
\widehat{\vartheta_+ v_0m}(\frac {\lambda_R}\pi,\frac{2\lambda_R\lambda_I}\pi;\lambda)=
\iint e^{-i(4\lambda_R\lambda_Iy+2\lambda_Rx)} {\vartheta_+(x,-\overline\lambda)}v_0m(x,y,\lambda)dxdy,
\\
\widehat{\varphi_- v_0m}(\frac {\lambda_R}\pi,\frac{2\lambda_R\lambda_I}\pi;\lambda)=
\iint e^{-i(4\lambda_R\lambda_Iy+2\lambda_Rx)} {\varphi_-(x,-\overline\lambda)}v_0m(x,y,\lambda)dxdy,
\end{array}
\end{equation}  
and $\varphi_\pm$, $\vartheta_\pm$, $a(\lambda)$  are defined by \eqref{E:eigen-x},   \eqref{E:chi-a}, and \eqref{E:chi-a}.  
\end{theorem}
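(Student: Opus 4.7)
The plan is to $\overline\lambda$-differentiate the integral equation of Theorem \ref{T:KP-eigen-existence}, substitute the explicit $\partial_{\overline\lambda}\widetilde G$ from Lemma \ref{E:dbar-green-sd}, and invert using a scaling symmetry of $\widetilde G$. Since $\vartheta_-(x,\lambda)$ is meromorphic in $\lambda$ with pole only at $\lambda=i\kappa$, differentiating $m = \vartheta_- - \widetilde G\ast v_0 m$ on the locus $\lambda_R\ne 0$ yields
\[
(1+\widetilde G\ast v_0)(\partial_{\overline\lambda}m) = -(\partial_{\overline\lambda}\widetilde G)\ast v_0 m.
\]

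Set $E(x,y) = e^{(\lambda^2-\overline\lambda^2)y + i(\lambda+\overline\lambda)x} = e^{i(4\lambda_R\lambda_I y + 2\lambda_R x)}$, so that Lemma \ref{E:dbar-green-sd} reads $\partial_{\overline\lambda}\widetilde G(x,x',y-y',\lambda) = -\frac{\mbox{sgn}(\lambda_R)}{2\pi a(-\overline\lambda)}\frac{E(x,y)}{E(x',y')}\mathfrak g(x,x',-\overline\lambda)$. For $\lambda\in\mathbb C^+$ I would factor $\mathfrak g(x,x',-\overline\lambda) = \varphi_+(x,-\overline\lambda)\vartheta_+(x',-\overline\lambda)$ and use $\varphi_+/a = \vartheta_-$ from \eqref{E:pm-sym-eigen} to extract $\vartheta_-(x,-\overline\lambda)$, recognising the remaining $(x',y')$-integral as the Fourier coefficient defining $s_c(\lambda)$ in \eqref{E:alpha-def}; the case $\lambda\in\mathbb C^-$ is analogous using $\mathfrak g = \varphi_-(x',-\overline\lambda)\vartheta_-(x,-\overline\lambda)$. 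In both cases one arrives at
\[
-(\partial_{\overline\lambda}\widetilde G)\ast v_0 m = s_c(\lambda)\,E(x,y)\,\vartheta_-(x,-\overline\lambda).
\]

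The key ingredient is then the kernel symmetry
\[
E(x,y)\,\widetilde G(x,x',y-y',-\overline\lambda) = E(x',y')\,\widetilde G(x,x',y-y',\lambda),
\]
equivalent to $G(x,x',Y,\lambda) = e^{(\lambda^2-\overline\lambda^2)Y}G(x,x',Y,-\overline\lambda)$. I would verify this directly on the formula of Lemma \ref{L:green-heat}: in $G_c$ the substitution $\lambda'\mapsto\lambda'-2\lambda_R$ matches the integrands at $\lambda$ and $-\overline\lambda$ up to the exponential factor $e^{(\lambda^2-\overline\lambda^2)y}$, with the characteristic functions $\chi_\pm$ invariant under the shift because $\Re(\overline\lambda^2)=\Re(\lambda^2)$; the discrete piece $G_d$ transforms by the same factor manifestly.

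With this symmetry in hand, multiplying $m(\cdot,-\overline\lambda) + \widetilde G(\cdot,-\overline\lambda)\ast v_0 m(\cdot,-\overline\lambda) = \vartheta_-(\cdot,-\overline\lambda)$ by $E$ and transporting $E$ inside the convolution gives
\[
(1+\widetilde G(\cdot,\lambda)\ast v_0)\bigl[E\,m(\cdot,-\overline\lambda)\bigr] = E\,\vartheta_-(\cdot,-\overline\lambda).
\]
Combining with the preceding displays and invoking the $L^\infty$-invertibility of $1+\widetilde G\ast v_0$ established in Theorem \ref{T:KP-eigen-existence} from $|v_0|_{L^1\cap L^\infty}\ll 1$ produces \eqref{E:dbar-m}. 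The principal obstacle I anticipate is verifying the Green-function scaling symmetry cleanly: the $\lambda_R$-dependent integration domains and the discontinuous factors $\chi_\pm$ in Lemma \ref{L:green-heat} require careful bookkeeping to confirm that the shift $\lambda'\mapsto\lambda'-2\lambda_R$ leaves the integrands aligned.
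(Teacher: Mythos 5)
Your proposal is correct and follows essentially the same route as the paper: differentiate the integral equation \eqref{E:1-ode}, insert the explicit $\partial_{\overline\lambda}\widetilde G$ from Lemma \ref{E:dbar-green-sd}, factor $\mathfrak g(x,x',-\overline\lambda)$ and use $\varphi_+/a=\vartheta_-$ to identify $s_c(\lambda)$, and then commute the exponential $\rho=e^{(\lambda^2-\overline\lambda^2)y+i(\lambda+\overline\lambda)x}$ past $(1+\widetilde G\ast v_0)^{-1}$ to convert $\vartheta_-(\cdot,-\overline\lambda)$ into $m(\cdot,-\overline\lambda)$. The only (minor) divergence is that the paper obtains the key commutation identity $\widetilde G_\lambda\,\rho=\rho\,\widetilde G_{-\overline\lambda}$ (Eq.\ \eqref{E:green-exp}) from the operator conjugation $p_\lambda(D)=\rho\,p_{-\overline\lambda}(D)\,\rho^{-1}$, whereas you verify it directly on the kernel of Lemma \ref{L:green-heat} via the shift $\lambda'\mapsto\lambda'-2\lambda_R$ (which does go through, since $\mathrm{Re}\,\overline\lambda^2=\mathrm{Re}\,\lambda^2$ keeps $\chi_\pm$ aligned) --- an equivalent and, if anything, more explicit justification.
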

\begin{proof}   
  Denote $\widetilde G(x,x',y,\lambda)$ as $\widetilde G_\lambda$ and $\rho(x, y,\lambda,-\overline\lambda)= e^{ (\lambda^2-\overline\lambda^2)y+i (\lambda+\overline\lambda)x}$. Note $\rho(x, y,\lambda,-\overline\lambda)$ is annihilated by the heat operator $p_\lambda(D)\equiv \partial_y-\partial_x^2+2i\lambda\partial_x$. So
  \[\ba{rl}
 &p_\lambda(D)f\\
 =&p_\lambda(D)e^{(\lambda^2-\overline\lambda^2)y+i(\lambda+\overline\lambda)x}e^{-(\lambda^2-\overline\lambda^2)y-i(\lambda+\overline\lambda)x}f\\
=&e^{(\lambda^2-\overline\lambda^2)y+i(\lambda+\overline\lambda)x}p_\lambda(D)e^{-(\lambda^2-\overline\lambda^2)y-i(\lambda+\overline\lambda)x}f\\
&+e^{(\lambda^2-\overline\lambda^2)y+i(\lambda+\overline\lambda)x}(-2i(\lambda+\overline\lambda)\partial_x) e^{-(\lambda^2-\overline\lambda^2)y-i(\lambda+\overline\lambda)x}f\\
=& e^{(\lambda^2-\overline\lambda^2)y+i(\lambda+\overline\lambda)x}p_{-\overline \lambda}(D)e^{-(\lambda^2-\overline\lambda^2)y-i(\lambda+\overline\lambda)x}f  
\ea\] that is, 
\begin{equation}\label{E:green-exp}
\ba{c}
\widetilde G_\lambda\ \rho(x, y,\lambda,-\overline\lambda)=\rho(x, y,\lambda,-\overline\lambda)\ \widetilde G_{-\overline\lambda}.
\ea
\end{equation}
Therefore, for  $\lambda_R\ne 0$, denoting $
\mathfrak e(x,x',y,\lambda,-\overline\lambda)=e^{(\lambda^2-\overline\lambda^2)y+i(\lambda+\overline\lambda)(x-x')}$, (\ref{E:1-ode}), \eqref{E:lambda-R-n-0}, \eqref{E:alpha-def}, and \eqref{E:green-exp},
\beq\nonumber
\ba{rl}
&\partial_{\overline\lambda}m {(x,y,\lambda)}\\
=&-(1+\widetilde G_\lambda\ast v_0)^{-1}\left(\partial_{\bar\lambda}\widetilde G_\lambda\ast v_0\right)m {(x,y,\lambda)} \\
=& 
\left\{ 
{\begin{array}{l}
(1+\widetilde G_\lambda\ast v_0)^{-1} \frac {\texttt{sgn}(\lambda_R)\mathfrak e(x,x',y,\lambda,-\overline\lambda)\varphi_+(x,-\overline\lambda)\vartheta_+(x',-\overline\lambda)}{2\pi a(-\overline \lambda)} \ast v_0m,\, 
\lambda\in\CC^+,\\
(1+\widetilde G_\lambda\ast v_0)^{-1}\frac {\texttt{sgn}(\lambda_R)\mathfrak e(x,x',y,\lambda,-\overline\lambda){\varphi_-(x',-\overline\lambda)\vartheta_-(x,-\overline\lambda)}}{{2\pi}a(-\overline \lambda)} ]\ast v_0m,\, 
\lambda\in\CC^-
\end{array}}
\right. \\
=&  s_c(\lambda)(1+\widetilde G_\lambda\ast v_0)^{-1} e^{(\lambda^2-\overline\lambda^2)y+i(\lambda+\overline\lambda)x}m(x,-\overline\lambda) \\
=&  s_c(\lambda) e^{(\lambda^2-\overline\lambda^2)y+i(\lambda+\overline\lambda)x}(1+\widetilde G_{-\overline\lambda}\ast v_0)^{-1}m(x,-\overline\lambda) \\
=&  s_c(\lambda) e^{(\lambda^2-\overline\lambda^2)y+i(\lambda+\overline\lambda)x}m{(x,y,-\overline\lambda)}.
\ea
\eeq

\end{proof}

\begin{lemma}\label{L:constraint-continuous-discrete}
Suppose $
(1+|x|+|y|)^2\partial_x^j\partial_y^k v_0\in {L^1}\cap L^\infty$, $0\le j,\,k\le 2 $, $
|v_0|_{L^1\cap L^\infty}\ll 1$, and $v_0(x,y)\in\RR$.  
  Then    
\beq
\ba{c}
| (1-E_{+i\corr{\kappa},1/2}(\lambda)-E_{-i\corr{\kappa},1/2}(\lambda) )  s_c(\lambda) |_{   L^2(|\lambda_R| d\overline\lambda \wedge d\lambda)\cap L^\infty}  \le  C|\partial_x^2\partial_y^2v_0|_{ L^1},
\ea\label{E:pm-i-new}  
\eeq
and
\begin{gather}
\ba{c} 
  s_c(\lambda)=
\left\{
{\ba{ll}
 +\frac {i\corr{\kappa}}2\frac{\texttt{sgn}(\lambda_R)}{ \lambda-i\corr{\kappa}}\mathfrak r( \lambda)+\mbox{sgn}(\lambda_R)h^+(\lambda),&\lambda\in D^ \times_{+i\corr{\kappa}},\\
 \corr{-\frac {i\corr{\kappa}}2\frac{\texttt{sgn}(\lambda_R)}{-\overline \lambda+i\corr{\kappa}} }\mathfrak r( \lambda)+\mbox{sgn}(\lambda_R) {h^-}(\lambda),&\lambda\in  D^ \times_{-i\corr{\kappa}},
\ea}
\right.
\ea\label{E:cd-decomposition} 
\end{gather}
where 
\beq
\ba{c}  
\mathfrak r(\lambda) =\frac {\gamma_+}{1+\gamma_+\cot^{-1}\frac{\corr{\kappa}-|\lambda_I|}{|\lambda_R|}}, 
\ea\label{E:kappa}
\eeq and  $E_{z,a}$, $D_z^\times$, $\cot^{-1}\frac{\corr{\kappa}-|\lambda_I|}{|\lambda_R|}$ are defined by  Definition \ref{D:terminology}, $\gamma_+$, \eqref{E:gamma}, \eqref{E:arccot}. Moreover,
\beq \label{E:cd-decomposition-new}
\ba{c}
 |\mathfrak r|_{L^\infty}\le |v_0|_{L^1}, \ 
  {  { \sum_{j=0,1}|\partial_s^j h^\pm|_{L^\infty}\le C |(1+|x|+|y|)^2    v_0|_{L^1}, }  } 
 \\
 \mathfrak r(\lambda)=  {\mathfrak r(-\overline\lambda)}\in\RR,\ \ h^\pm(\lambda)=\overline{h^\pm(-\overline\lambda)} .\ea 
\eeq  

\end{lemma}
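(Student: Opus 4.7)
The plan is to combine the oscillatory-integral representation \eqref{E:alpha-def} of $s_c$ with the singular expansion of $m$ furnished by Theorem~\ref{T:m-lambda}. For the global bound \eqref{E:pm-i-new}, on the complement of $D_{\pm i\kappa,1/2}$ the factors $\vartheta_+(x,-\bar\lambda)$, $\varphi_-(x,-\bar\lambda)$ and $1/a(-\bar\lambda)$ are uniformly bounded, and $m(x,y,\lambda)$ grows at most like $1+|x|+|y|$ by \eqref{E:m-pm-i-0-+}. The $L^\infty$ bound then follows by integrating by parts twice in each of $x$ and $y$ against the oscillatory factor $e^{-i(4\lambda_R\lambda_I y+2\lambda_R x)}$, transferring $\partial_x^2\partial_y^2$ onto $\vartheta_\pm v_0 m$ and producing an $L^1$ estimate dominated by $|\partial_x^2\partial_y^2 v_0|_{L^1}$ with lower-order terms absorbed by the smallness of $|v_0|_{L^1\cap L^\infty}$. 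For the $L^2(|\lambda_R|d\bar\lambda\wedge d\lambda)$ part, the crucial observation is that under $(\xi_1,\xi_2)=(\lambda_R/\pi,2\lambda_R\lambda_I/\pi)$ the Jacobian is proportional to $|\lambda_R|$, so
\[
\int|s_c(\lambda)|^2|\lambda_R|\,d\lambda_R d\lambda_I \;\sim\; \int\bigl|\widehat{\vartheta_\pm v_0 m}(\xi_1,\xi_2)\bigr|^2 d\xi_1 d\xi_2,
\]
and Plancherel reduces everything to the $L^2$-norm of $\vartheta_\pm v_0 m$, again estimated after integration by parts.

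For the decomposition \eqref{E:cd-decomposition} on $D_{+i\kappa}^\times$, I would insert $m=m_{+,-1}/(\lambda-i\kappa)+m_{+,r}$ from \eqref{E:define-remainder-m} into \eqref{E:alpha-def}. The $m_{+,r}$-contribution and its $\partial_s$-derivative are controlled by \eqref{E:m-pm-i-0-+} together with $(1+|x|+|y|)^2 v_0\in L^1$, and are absorbed into $h^+(\lambda)$. The $m_{+,-1}/(\lambda-i\kappa)$-contribution is
\[
\frac{\mathrm{sgn}(\lambda_R)}{2\pi(\lambda-i\kappa)}\iint e^{-i(4\lambda_R\lambda_I y+2\lambda_R x)}\vartheta_+(x,-\bar\lambda)v_0(x,y)m_{+,-1}(x,y,\lambda)\,dxdy.
\]
Using the explicit formula $m_{+,-1}=2i\kappa\Theta_+(x,y)\,[1+\gamma_+\cot^{-1}((\kappa-|\lambda_I|)/|\lambda_R|)]^{-1}$, freezing the $(x,y)$-dependent factors at $\lambda=i\kappa$, and invoking the definition $\gamma_+=\tfrac{2}{\pi}\iint v_0\vartheta_+(x,i\kappa)\Theta_+\,dxdy$, the leading order equals $\tfrac{i\kappa\,\mathrm{sgn}(\lambda_R)}{2(\lambda-i\kappa)}\mathfrak r(\lambda)$, as claimed. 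The Taylor remainders of the phase and of $\vartheta_+(x,-\bar\lambda)-\vartheta_+(x,i\kappa)$ contribute additional smooth-in-$s$ pieces to $h^+$. The case $D_{-i\kappa}^\times$ is strictly parallel, using $\varphi_-(x,-\bar\lambda)/a(-\bar\lambda)$ together with $\varphi_-(x,-i\kappa)=\vartheta_+(x,i\kappa)e^{-2\kappa x}$ and \eqref{E:green-exp-2}.

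The bound $|\mathfrak r|_{L^\infty}\le|v_0|_{L^1}$ follows from $|\gamma_+|\le C|v_0|_{L^1}$ (cf.~\eqref{E:gamma-pm-est}), which forces $|1+\gamma_+\cot^{-1}\tfrac{\kappa-|\lambda_I|}{|\lambda_R|}|\ge 1/2$ under the smallness assumption. The estimate $\sum_{j=0,1}|\partial_s^j h^\pm|_{L^\infty}\le C|(1+|x|+|y|)^2 v_0|_{L^1}$ follows from \eqref{E:m-pm-i-0-+} and \eqref{E:green-variation-derivative}: differentiation in $s$ produces derivatives of $m_{\pm,r}$, of $\vartheta_\pm(x,-\bar\lambda)$, and of the phase factor $e^{-i(4\lambda_R\lambda_I y+2\lambda_R x)}$; the last brings down a factor of $x$ or $y$, and the combined polynomial weight demands exactly $(1+|x|+|y|)^2$ decay on $v_0$. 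Reality $\mathfrak r(\lambda)=\mathfrak r(-\bar\lambda)\in\mathbb R$ is immediate since $\gamma_+\in\mathbb R$ and $\cot^{-1}\tfrac{\kappa-|\lambda_I|}{|\lambda_R|}$ is invariant under $\lambda\mapsto-\bar\lambda$. The identity $h^\pm(\lambda)=\overline{h^\pm(-\bar\lambda)}$ comes from combining $\overline{s_c(-\bar\lambda)}=-s_c(\lambda)$ --- a direct consequence of \eqref{E:m-reality-1}, \eqref{E:pm-sym-eigen-conjugate}, reality of $v_0$, and $\mathrm{sgn}(-\lambda_R)=-\mathrm{sgn}(\lambda_R)$ --- with the analogous identity for the singular part ($\mathfrak r(\lambda)$ real and the sign changes of $\lambda-i\kappa\to i\kappa-\lambda$ canceling the sign flip of $\mathrm{sgn}(\lambda_R)$).

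The principal obstacle is the estimate on $\partial_s h^\pm$. The $\partial_s$-derivative distributes over the integrand of \eqref{E:alpha-def} onto $m$, onto $\vartheta_\pm(x,-\bar\lambda)$, and onto the oscillatory phase; the hardest contribution is the last, where the derivative produces a polynomial factor in $(x,y)$ that must be absorbed by the weight on $v_0$, while the $\cot^{-1}$-denominator inside $\mathfrak r$ is only Lipschitz (not smooth) at $s=0$. Managing these non-smooth dependencies uniformly in $(x,y,\lambda)$ and correctly separating the singular and regular parts is the technical core of the proof.
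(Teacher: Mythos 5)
Your overall route is the same as the paper's: insert the singular expansion of $m$ from Theorem \ref{T:m-lambda} into the representation \eqref{E:alpha-def}, freeze the $(x,y)$-dependent factors at $\lambda=\pm i\corr{\kappa}$ so that the definition of $\gamma_+$ in \eqref{E:gamma} turns the leading coefficient into $\mathfrak r(\lambda)$, collect the Taylor remainders and the $m_{\pm,r}$-contributions into $h^\pm$, and obtain the global bounds by integrating by parts against the phase $e^{-i(4\lambda_R\lambda_I y+2\lambda_R x)}$. The identification of the leading term, the bounds on $\mathfrak r$ and $\partial_s^j h^\pm$, and the symmetry argument via $\overline{s_c(-\overline\lambda)}=-s_c(\lambda)$ all match the paper's Step 2.

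The one step that does not hold as written is the Plancherel reduction for the $L^2(|\lambda_R|\,d\overline\lambda\wedge d\lambda)$ part of \eqref{E:pm-i-new}. You correctly observe that the Jacobian of $(\lambda_R,\lambda_I)\mapsto(\lambda_R/\pi,2\lambda_R\lambda_I/\pi)$ is proportional to $|\lambda_R|$, but $\widehat{\vartheta_+v_0m}(\xi_1,\xi_2;\lambda)$ is not the Fourier transform of a fixed function: the integrand $\vartheta_+(x,-\overline\lambda)v_0\,m(x,y,\lambda)$ itself depends on $\lambda$, hence on $(\xi_1,\xi_2)$, so Plancherel gives nothing. The paper instead squares the pointwise decay estimate that integration by parts actually produces, namely $|(1-E_{+i\corr{\kappa},1/2}-E_{-i\corr{\kappa},1/2})s_c(\lambda)|\le C|\partial_x^{\gamma_1}\partial_y^{\gamma_2}v_0|_{L^1}/(1+|\lambda_R|^{\gamma_1}+|\lambda_R\lambda_I|^{\gamma_2})$ (cf.\ \eqref{E:l2-linfty}), and integrates it directly against $|\lambda_R|\,d\overline\lambda\wedge d\lambda$ as in \eqref{E:high-der-lambda}; you have all the ingredients for this, you just need to keep the $\lambda$-decay explicit rather than discarding it into an $L^1$ bound. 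A small further remark: by \eqref{E:arccot} the function $\cot^{-1}\frac{\corr{\kappa}-|\lambda_I|}{|\lambda_R|}$ depends only on the angular variable $\alpha$ in $D_{\pm i\corr{\kappa}}^\times$, so $\partial_s\mathfrak r=0$; the non-smoothness you worry about in the last paragraph is a discontinuity at the centre across different rays, not an obstruction to the radial derivative estimate on $h^\pm$.
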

\begin{proof} $\underline{\emph{Step 1 (Proof for \eqref{E:pm-i-new})}}:$  We restrict to $\lambda\in\mathbb C^+$ since proofs for $\lambda\in\mathbb C^\pm$ are identical. From  \eqref{E:alpha-def}, the Fourier theory, and Theorem \ref{T:KP-eigen-existence},  
\beq\label{E:l2-linfty}
\ba{rl}
& {|(1-E_{+i\corr{\kappa},1/2}(\lambda)-E_{-i\corr{\kappa},1/2}(\lambda) )  s_c(\lambda)|}\\
\le & C|(1-E_{+i\corr{\kappa},1/2}(\lambda)-E_{-i\corr{\kappa},1/2}(\lambda) )\widehat{\vartheta_+v_0m}(\frac{\lambda_R}\pi,\frac{2\lambda_R\lambda_I}\pi,\lambda)|\\
\le & C\frac {|\partial_x^{\gamma_1}\partial_y^{\gamma_2}v_0|_{L^1}}{1+|\lambda_R|^{\gamma_1}+|\lambda_R\lambda_I|^{\gamma_2}}.
\ea
\eeq 
  Therefore \eqref{E:pm-i-new} follows from 
 \beq\label{E:high-der-lambda}
 \ba{rl}
 &\iint{|(1-E_{+i\corr{\kappa},1/2}(\lambda)-E_{-i\corr{\kappa},1/2}(\lambda) )  s_c(\lambda)|^2}|\lambda_R|d\overline\lambda\wedge d\lambda\\
 \le\ &C\iint\frac{|\partial_x^2\partial_y^2v_0|^2_{L^1}}{(1+|\lambda_R|^2+|\lambda_R\lambda_I|^2)^2}|\lambda_R|d\overline\lambda\wedge d\lambda\le \ C |\partial_x^2\partial_y^2v_0|^2_{L^1}.
  \ea
\eeq 

$\underline{\emph{Step 2 (Proof for \eqref{E:kappa} - \eqref{E:cd-decomposition-new})}}:$ For $\lambda\in D_{\pm i\corr{\kappa}}^\times$, $\lambda_R\ne 0$, from \eqref{E:alpha-def}, Theorem \ref{T:m-lambda},
\[
\ba{ rl}
   &s_c(\lambda)\\
=  &\left\{ 
{\begin{array}{l}
\frac {\mbox{sgn}(\lambda_R)}{2\pi (\lambda-i\corr{\kappa})}\iint \frac {e^{-i(4\lambda_R\lambda_Iy+2\lambda_Rx)}
\vartheta_+ (x,\lambda)v_0(x,y)(2i\corr{\kappa})\Theta_+(x,y)}{1+\gamma_+\cot^{-1}\frac{\corr{\kappa}-|\lambda_I|}{|\lambda_R|}}dxdy+ \frac {\mbox{sgn}(\lambda_R)}{2\pi }\\
\times\iint  {e^{-i(4\lambda_R\lambda_Iy+2\lambda_Rx)}
\vartheta_+ (x,\lambda)v_0(x,y) m_{+,r}(x,y,\lambda)} dxdy,\ \lambda\in\mathbb C^+,\\
\frac {\mbox{sgn}(\lambda_R)}{2\pi \frac{-\overline\lambda+i\corr{\kappa}}{-\overline\lambda-i\corr{\kappa}}}\iint \frac {e^{-i(4\lambda_R\lambda_Iy+2\lambda_Rx)}
\varphi_- (x,\lambda)v_0(x,y) \Theta_-(x,y)}{1+\gamma_-\cot^{-1}\frac{\corr{\kappa}-|\lambda_I|}{|\lambda_R|}}dxdy+ \frac {\mbox{sgn}(\lambda_R)}{2\pi\frac{-\overline\lambda+i\corr{\kappa}}{-\overline\lambda-i\corr{\kappa}} }\\
\times\iint  {e^{-i(4\lambda_R\lambda_Iy+2\lambda_Rx)}
\varphi_-(x,\lambda)v_0(x,y) m_{-,r}(x,y,\lambda)} dxdy,\ \lambda\in\mathbb C^-,
\end{array}}
\right.  

\end{array}
\]
\[
\ba{rl}
=&  \left\{ 
{\begin{array}{ll}
+\frac {i\corr{\kappa}} 2\frac {\mbox{sgn}(\lambda_R)}{\lambda-i\corr{\kappa}} \mathfrak r(\lambda)+\mbox{sgn}(\lambda_R)h^+(\lambda),& \lambda\in\mathbb C^+,\\
-\frac {i\corr{\kappa}} 2\frac {\mbox{sgn}(\lambda_R)}{-\overline\lambda+i\corr{\kappa}} \mathfrak r(\lambda)+\mbox{sgn}(\lambda_R)h^-(\lambda).&\lambda\in\mathbb C^-,
\end{array}}
\right.

\ea
\] where   
\beq\label{E:def-f-r--} 
\begin{array}{rl}
& h^+(\lambda) \\
=& \frac {1}{2\pi}\iint \frac {
\frac{\vartheta_+ (x, \lambda)-\vartheta_+ (x,i\corr{\kappa})}{\lambda-i\corr{\kappa}}v_0(x,y)(2i\corr{\kappa})\Theta_+(x,y)}{1+\gamma_+\cot^{-1}\frac{\corr{\kappa}-|\lambda_I|}{|\lambda_R|}}dxdy\\
&+ \frac {1}{2\pi }\iint  {e^{-i(4\lambda_R\lambda_Iy+2\lambda_Rx)}
\vartheta_+ (x, \lambda)v_0(x,y) m_{+,r}(x,y,\lambda)} dxdy   \\
&+  \frac {1}{2\pi}\iint \frac {
\frac{e^{-i(4\lambda_R\lambda_Iy+2\lambda_Rx)} -1}{\lambda-i\corr{\kappa}}}{1+\gamma_+\cot^{-1}\frac{\corr{\kappa}-|\lambda_I|}{|\lambda_R|}} \vartheta_+ (x,\lambda)v_0(x,y)(2i\corr{\kappa})\Theta_+(x,y)dxdy ,\\
&h^-(\lambda)\\ 
=& \frac {1}{2\pi}\iint \frac {
\frac{-\overline\lambda+i\corr{\kappa} }{-\overline\lambda+i\corr{\kappa}}\varphi_- (x,-i\corr{\kappa})v_0(x,y)\Theta_-(x,y)}{1+\gamma_-\cot^{-1}\frac{\corr{\kappa}-|\lambda_I|}{|\lambda_R|}}dxdy\\
&+ \frac {1}{2\pi}\iint \frac {
\frac{\varphi_-(x, \lambda)-\varphi_- (x,-i\corr{\kappa})}{-\overline\lambda+i\corr{\kappa}}v_0(x,y)(-\overline\lambda-i\corr{\kappa})\Theta_-(x,y)}{1+\gamma_-\cot^{-1}\frac{\corr{\kappa}-|\lambda_I|}{|\lambda_R|}}dxdy \\
&+ \frac {1}{2\pi }\frac{-\overline\lambda-i\corr{\kappa}}{-\overline\lambda+i\corr{\kappa}}\iint  {e^{-i(4\lambda_R\lambda_Iy+2\lambda_Rx)}
\varphi_- (x, \lambda)v_0(x,y) m_{-,r}(x,y,\lambda)} dxdy   \\
&+  \frac {1}{2\pi}\iint \frac {
\frac{e^{-i(4\lambda_R\lambda_Iy+2\lambda_Rx)} -1}{-\overline\lambda+i\corr{\kappa}}}{1+\gamma_-\cot^{-1}\frac{\corr{\kappa}-|\lambda_I|}{|\lambda_R|}} \varphi_- (x,\lambda)v_0(x,y)(-\overline\lambda-i\corr{\kappa})\Theta_-(x,y)dxdy    .
\ea\eeq  
Combining with the reality of $\cot^{-1}\frac{\corr{\kappa}-|\lambda_I|}{|\lambda_R|}$, \eqref{E:gamma-pm-est}, we justify 
 \eqref{E:kappa} - \eqref{E:cd-decomposition-new}.

\end{proof}

\subsection{The  forward scattering
 transform and the eigenfunction space}  

Due to the discontinuities of  $m_{+,-1}$, the kernel of the Cauchy equation ($\overline{\partial}$-equation)  for $m$ is not 
\[
\ba{c}
\partial_{\overline\lambda}m(x,y,\lambda)=   s_c(\lambda)e^{i(4\lambda_R\lambda_Iy+2\lambda_Rx)}m(x,y,-\overline\lambda)
\ea
\]  simply. We need to correct it by subtracting the contribution from  $m_{+,-1}$. 

Another neater alternative is through a regularization to derive the Cauchy equation for the inverse problem. 
From \eqref{E:l2-linfty},  convergence of the  Cauchy integral equation \cite[Eq.(5.10)]{BP214}  can not be achieved for generic initial data $v_0(x,y)$. Nevertheless, \cite{BP214}    
made an important progress in dealing the singular integrals at $\kappa_n$ for the perturbed KPII multi-line solitons equations.  Precisely,
\begin{itemize}
\item the singularities at $\kappa_n$ of the Cauchy integral equation  are regularized by renormalizing $m(x,y,\lambda)$ by holomorphic functions vanishing at $\kappa_n$ \cite[Eq.(2.12)]{BP214};
\item Cauchy integrals of the leading singularities of  the regularized eigenfunction  at $\kappa_n$  are integrated by applying Stokes' theorem, and the reality property  at $\kappa_n$  is proved \cite[Lemma 5.1]{BP214};
\item boundary terms of their Cauchy integral equation \cite[Eq.(5.33)]{BP214} are characterized by the asymptotic properties   at $\kappa_n$ (however, base on the convergence of \cite[Eq.(5.10)]{BP214}).
\end{itemize}

Inspired by these results, we introduce the regularized eigenfunction 
\beq\label{E:define-remainder-m-sf-0}
\ba{c}
\mathfrak m (x,y,\lambda)=\frac {\lambda-i\corr{\kappa}}{\lambda-2i\corr{\kappa}}m(x,y,\lambda),
\ea
\eeq   to tame the singularities at  $+i\corr{\kappa},\, \infty$ and keep the  symmetry. 

\begin{theorem}\label{T:normal-eigen} Suppose 
\[
\ba{c}
u_0(x )=-2\corr{\kappa^2}\textrm{sech}^2\corr{\kappa}x,\, \kappa>0,\\
(1+|x|+|y|)^2\partial_x^j\partial_y^k v_0\in {L^1}\cap L^\infty,\ 0\le j,\,k\le 2,\\
|v_0|_{L^1\cap L^\infty}\ll 1,\ v_0(x,y)\in\RR.
\ea
\]  Then for fixed $\lambda\in\CC\backslash \{\pm i\corr{\kappa},\,
\iota\}$, there exists a unique   eigenfunction $\mathfrak m$ satisfying
\beq\label{E:normal-ef-pde}
\ba{c}
\left(\partial_y-\partial_x^2+2i\lambda\partial_x+u_0(x)\right){\mathfrak m}=-v_0(x,y){\mathfrak m},\\
\lim_{(x,y)\to\infty}  ({\mathfrak m}-\frac {\lambda-i\corr{\kappa}}{\lambda-2i\corr{\kappa}}\vartheta_-(x,\lambda)  )=0,
\ea
\eeq and    
\begin{gather}
  \ba{c}
  |(1- E_{\iota} )\mathfrak m(x,y,\lambda)|\le C;\\
  \\
  {\mathfrak m}(x,y,\lambda)  
 =    \frac{\mathfrak m_{
res}(x,y)}{\lambda-\iota}  +\mathfrak m_{\iota,r}(x,y,\lambda),
 \quad\lambda\in D^ \times_{\iota},\\
\mathfrak m_{res}(x,y)\in i\RR, \ \
|\mathfrak m_{res}(x,y)|_{L^\infty}\le C, \\
|(\lambda-\iota)\mathfrak m_{\iota,r}(x,y,\lambda) |_{L^\infty}\le  C,\ |\mathfrak m_{\iota,r}(x,y,\lambda) |_{L^\infty}\le  C(1+|x|+|y|) ;
\ea \label{E:define-remainder-m-sf-pm2} 
\end{gather}
\begin{gather}
 \ba{c}{\mathfrak m}(x,y,\lambda)  
 = 
 \mathfrak m_{\pm i\corr{\kappa},0}(x,y,\lambda)+\mathfrak m_{\pm i\corr{\kappa} ,r}(x,y,\lambda),
\quad\lambda\in D^\times_{\pm i\corr{\kappa}},\\ 
\mathfrak m_{+i\corr{\kappa},0}(x,y,\lambda)= {\frac{   -2\Theta_+(x,y)}{1+\gamma_+\cot^{-1}\frac {\corr{\kappa}-|\lambda_I|}{|\lambda_R|}}},
\ 
\mathfrak m_{-i\corr{\kappa},0}(x,y,\lambda)= {\frac{  \frac 23 \Theta_-(x,y)}{1+\gamma_+\cot^{-1}\frac {\corr{\kappa}-|\lambda_I|}{|\lambda_R|}}} , \\
\mathfrak m_{+i\corr{\kappa},0}(x,y,+i\corr{\kappa}+0^+e^{i\alpha})=  {\mathfrak s}_de^{-2\corr{\kappa}x}\mathfrak m_{-i\corr{\kappa},0}(x,y,-i\corr{\kappa}+0^+e^{i(\pi+\alpha)}) ,   \\
\mathfrak m_{ \pm i \corr{\kappa},r}(x,y,\pm i\corr{\kappa})=0,\ |\mathfrak m_{ \pm i\corr{\kappa},r}(x,y,\lambda) |_{L^\infty}\le C,\\
|\frac\partial{\partial s} \mathfrak m_{ \pm i\corr{\kappa},r}(x,y,\lambda) |_{L^\infty}\le C(1+|x|+|y|), 
  \ea  \label{E:define-remainder-m-sf-new}
  \end{gather}
where $\vartheta_-(x,\lambda)$ is defined by \eqref{E:eigen-x}, $\lambda=z+se^{i\alpha}$, $z\in\mathfrak Z=\{\pm i\kappa,\iota\}$,  $E_z$,  $D_z^\times$, $\iota$,  are defined by Definition \ref{D:terminology}, $\Theta_\pm$, $ \corr{\mathfrak s_d=-3}$, $\gamma_+$, and $\cot^{-1}\frac{\corr{\kappa}-|\lambda_I|}{|\lambda_R|}$  defined by \eqref{E:gamma}, \eqref{E:gamma-1}, \eqref{E:residue}, \eqref{E:arccot}.  

Moreover,  for $\lambda_R\ne 0$, $\lambda\ne \iota $,
\beq\label{E:dbar-m-sf}
\ba{c}
  \partial_{\overline\lambda}{\mathfrak m}(x,y,\lambda) 
=   {\mathfrak s}_c(\lambda)e^{i(4\lambda_R\lambda_Iy+2\lambda_Rx)}\mathfrak m(x,y,-\overline\lambda),\\
 {\mathfrak s}_c(\lambda)=\frac {(\lambda-i\corr{\kappa})( -\overline\lambda-\iota)}{(\lambda-\iota )( -\overline\lambda-i\corr{\kappa}) } 
   s_c(\lambda),
\ea
\eeq with
\begin{gather}
\ba{c}
{|{(1-E_{+i\corr{\kappa},1/2}(\lambda)-E_{-i\corr{\kappa},1/2} ) {\mathfrak s}_c(\lambda)}|_{   L^2(|\lambda_R| d\overline\lambda \wedge d\lambda)\cap L^\infty} }\le  C|\partial_x^2\partial_y^2v_0|_{ L^1} ;
\ea\label{E:tilde-dbar-m-sf-new-0}\\
\ba{c}
 {\mathfrak s}_c(\lambda)=   \frac i2  {\mbox{sgn}(\lambda_R)}\mathfrak c +\mbox{sgn}(\lambda_R) {\hbar}  _{\iota }(\lambda)  ,\quad\lambda\in D^\times_{\iota },\\
\sum _{0\le k\le 1}|\partial_s^k\hbar_{\iota}(\lambda)  |_{L^\infty}\le C|(1+|x|+|y|)^2   v_0|_{L^1},\\
\textit{$\mathfrak c$ constants, $\hbar_{\iota}(\iota)=0$;}
\ea\label{E:dbar-m-sf-new-1}\\
\ba{c}
 {\mathfrak s}_c(\lambda)=  \pm\frac {i\corr{\kappa}}2 \frac{\mbox{sgn}(\lambda_R)}{-\overline\lambda\mp i\corr{\kappa}}\mathfrak r( \lambda)+\mbox{sgn}(\lambda_R) {\hbar}_{\pm i\corr{\kappa}}(\lambda)  ,\quad\lambda\in D^\times_{\pm i\kappa},\\
\sum _{0\le k\le 1}| \partial_{s}^k  \hbar _{\pm i\corr{\kappa}} |_{L^\infty}\le C|(1+|x|+|y|)^2   v_0|_{L^1}  ,
\ea\label{E:dbar-m-sf-new}
\end{gather} 
 where $  s_c(\lambda)$, $\mathfrak r(\lambda)$ are defined by   \eqref{E:alpha-def}, \eqref{E:kappa}.  Finally, 
\beq\label{E:renormalize-sym}
\ba{c}
{\mathfrak m}(x,y,\lambda)=\overline{{\mathfrak m}(x,y,-\overline\lambda)},\ \
 {\hbar_z }(\lambda)=\overline{ {\hbar_z }( -\overline\lambda)},\ \ z\in \mathfrak Z=\{\pm i\corr{\kappa},\iota\}. 
\ea
\eeq

\end{theorem}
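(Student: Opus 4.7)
My plan is to set up $\mathfrak{m}(x,y,\lambda)=\frac{\lambda-i\kappa}{\lambda-\iota}m(x,y,\lambda)$ via the already constructed eigenfunction $m$ from Theorem \ref{T:KP-eigen-existence}, and then read off every claim by book-keeping the poles and zeroes introduced by the multiplier $q(\lambda)=\frac{\lambda-i\kappa}{\lambda-\iota}$. Since $q(\lambda)$ is a constant in $(x,y)$ and holomorphic in $\lambda$ on $\mathbb C\setminus\{\iota\}$, the PDE in \eqref{E:normal-ef-pde} is inherited from \eqref{E:lax-normal} by multiplying through by $q$, and the boundary behaviour follows from $\lim(m-\vartheta_-)=0$ in Theorem \ref{T:KP-eigen-existence}. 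Uniqueness is likewise inherited (given the integral equation \eqref{E:1-ode} for $m$).

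For the local expansions, I would plug \eqref{E:define-remainder-m} into the definition and expand $q(\lambda)$ around each of the three exceptional points. At $\lambda=+i\kappa$, the factor $(\lambda-i\kappa)$ cancels the pole of $m$ and $q(i\kappa)=\frac{0}{-i\kappa}$ — so the $\frac{m_{+,-1}}{\lambda-i\kappa}$ term becomes $\frac{m_{+,-1}}{\lambda-\iota}$, whose value at $\lambda=i\kappa$ is $\frac{m_{+,-1}(x,y,i\kappa+0^+e^{i\alpha})}{-i\kappa}$; combined with \eqref{E:gamma} this is exactly $\mathfrak{m}_{+i\kappa,0}=-\frac{2\Theta_+}{1+\gamma_+\cot^{-1}(\cdots)}$, and the rest is $\mathfrak m_{+i\kappa,r}$. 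At $\lambda=-i\kappa$ the factor $q(-i\kappa)=\frac{-2i\kappa}{-3i\kappa}=\tfrac23$ is merely a smooth constant, which rescales $m_{-,0}$ to give $\mathfrak m_{-i\kappa,0}=\tfrac23 m_{-,0}$, yielding \eqref{E:define-remainder-m-sf-new}. The normalization $\mathfrak{s}_d=-3$ then falls out of $(-2)=\mathfrak s_d\cdot\tfrac23$ using \eqref{E:residue}, $\Theta_+=e^{-2\kappa x}\Theta_-$ and $\gamma_+=\gamma_-$ from Theorem \ref{T:m-lambda}. At $\lambda=\iota$, $m$ is regular and $q$ contributes a simple pole of residue $i\kappa\,m(x,y,\iota)$; reality \eqref{E:m-reality-1} at the fixed point $-\overline{\iota}=\iota$ forces $m(x,y,\iota)\in\mathbb R$, hence $\mathfrak m_{\mathrm{res}}\in i\mathbb R$ as required by \eqref{E:define-remainder-m-sf-pm2}, with the growth bound $|\mathfrak m_{\iota,r}|\le C(1+|x|+|y|)$ inherited from the $L^\infty$-bound of $m$ away from $\pm i\kappa$ and a first-order Taylor remainder in $\lambda$.

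For the $\overline\partial$-equation, since $q(\lambda)$ is holomorphic off $\lambda=\iota$, $\partial_{\overline\lambda}\mathfrak m=q(\lambda)\partial_{\overline\lambda}m=q(\lambda)\,s_c(\lambda)e^{i(4\lambda_R\lambda_I y+2\lambda_R x)}m(x,y,-\overline\lambda)$ by Theorem \ref{T:sd-continuous}; rewriting $m(x,y,-\overline\lambda)=q(-\overline\lambda)^{-1}\mathfrak m(x,y,-\overline\lambda)$ yields exactly \eqref{E:dbar-m-sf} with $\mathfrak s_c=q(\lambda)\,q(-\overline\lambda)^{-1}\,s_c(\lambda)$. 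The estimate \eqref{E:tilde-dbar-m-sf-new-0} follows from \eqref{E:pm-i-new}, because the prefactor $q(\lambda)q(-\overline\lambda)^{-1}$ is uniformly bounded on the complement of $D_{\pm i\kappa,1/2}\cup D_{\iota}$ (and in $D_\iota$ the $L^2$-integrability with weight $|\lambda_R|$ still holds since $|\lambda_R|$ compensates the simple pole along the axis). For the local formulas \eqref{E:dbar-m-sf-new-1}, \eqref{E:dbar-m-sf-new}, I substitute the decomposition \eqref{E:cd-decomposition} of $s_c$ at $\pm i\kappa$ into $\mathfrak s_c=q(\lambda)q(-\overline\lambda)^{-1}s_c$ and Taylor-expand the smooth factor $q(\lambda)q(-\overline\lambda)^{-1}$ around each singular point: near $+i\kappa$, $q(\lambda)\sim \frac{\lambda-i\kappa}{-i\kappa}$ cancels the $(\lambda-i\kappa)^{-1}$ pole of $s_c$, while $q(-\overline\lambda)^{-1}\sim\frac{-\overline\lambda-i\kappa}{-\overline\lambda-\iota}$ produces the displayed $(-\overline\lambda-i\kappa)^{-1}$ residue structure, etc.; the bounds on $\hbar_z$ come directly from \eqref{E:cd-decomposition-new}. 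Finally, \eqref{E:renormalize-sym} is a one-line check: $\overline{q(-\overline\lambda)}=q(\lambda)$, so reality of $\mathfrak m$ transfers from that of $m$, and reality of $\hbar_z$ transfers from that of $s_c$ and $\mathfrak r$.

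The main technical obstacle I anticipate is in getting the local decomposition of $\mathfrak s_c$ near $\iota$ in the form \eqref{E:dbar-m-sf-new-1}; the prefactor $q(\lambda)q(-\overline\lambda)^{-1}$ is \emph{not} continuous at $\lambda=\iota$ (it has a directional limit $-e^{-2i\alpha}$ as $\lambda\to\iota$ along $\lambda=\iota+se^{i\alpha}$), so extracting a clean leading constant $\tfrac i2\mathrm{sgn}(\lambda_R)\mathfrak c$ with a genuinely vanishing remainder $\hbar_\iota(\iota)=0$ requires combining the directional behaviour of this prefactor with the explicit oscillatory structure of $s_c(\lambda)$ at $\iota$ coming from \eqref{E:alpha-def}, and isolating the piece that is actually constant. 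Otherwise the remaining estimates are straightforward consequences of the already established Lemmas \ref{L:eigen-green}, \ref{L:green-local} and Theorem \ref{T:m-lambda}.
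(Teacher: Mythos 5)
Your proposal follows essentially the same route as the paper: the paper's own proof simply sets $\mathfrak m=\frac{\lambda-i\corr{\kappa}}{\lambda-2i\corr{\kappa}}m$ and cites Theorems \ref{T:KP-eigen-existence}, \ref{T:m-lambda}, \ref{T:sd-continuous} and Lemma \ref{L:constraint-continuous-discrete} for exactly the pole/zero book-keeping you carry out, so your argument is correct and in fact more explicit than the printed one. The directional discontinuity $\frac{-\overline\lambda-\iota}{\lambda-\iota}=-e^{-2i\alpha}$ at $\iota$ that you flag for \eqref{E:dbar-m-sf-new-1} is a genuine subtlety which the paper's proof does not address either, so identifying it counts in your favour rather than as a gap in your argument.
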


\begin{proof} Properties \eqref{E:define-remainder-m-sf-pm2}  can be derived by  the formula $\widetilde G_c$, $\widetilde G_d$, and the integral equation of $m$, namely, \eqref{E:green-1-d}, \eqref{E:chi-a}, \eqref{E:1-ode}, and \eqref{E:define-remainder-m-sf-0}. The others follow from Theorem \ref{T:KP-eigen-existence}, \ref{T:m-lambda}, \ref{T:sd-continuous}, and Lemma \ref{L:constraint-continuous-discrete}, in particular, \eqref{E:define-remainder-m}  - \eqref{E:m-pm-i-0-+},   \eqref{E:dbar-m}, \eqref{E:pm-i-new}, and \eqref{E:cd-decomposition}.

\end{proof}

\begin{example}\label{Ex:mathfrak-m-v-0}
If $v_0(x,y)\equiv 0$, then
\[
\ba{c}
\mathfrak s_d\equiv -3,\ \mathfrak s_c(\lambda)\equiv 0,\ \gamma_\pm=0,\\
 \mathfrak m(x,y,\lambda)= {\varphi_+(x,\lambda)}\frac {(\lambda+i\corr{\kappa}) }{(\lambda-2i\corr{\kappa}) }=
1+\frac{3i\kappa}{\lambda-2i\kappa}(1-\frac{\frac 23}{1+e^{-2\kappa x}}),\\ 
\mathfrak  m_{+i\corr{\kappa},0}(x,y,i\corr{\kappa}+0^+e^{i\alpha})\equiv  \frac {-2}{1+e^{2\corr{\kappa}x}},\ \  
\mathfrak m_{-i\corr{\kappa},0}(x,y,-i\corr{\kappa}+0^+e^{i\alpha})\equiv  \frac { \frac 23}{1+e^{-2\corr{\kappa}x}} ,\\
\mathfrak  m_{ res}(x,y)\equiv 3i\corr{\kappa}  (1-\frac {\frac 23}{1+e^{-2\corr{\kappa}x}}).
\ea
\]
\end{example}

Based Theorem \ref{T:normal-eigen}, we introduce the space of eigenfunctions  ${ W}$ and the spectral operator $T$ as follows.

{\begin{definition}\label{D:quadrature-hat}
  The  eigenfunction space ${ W }\equiv { W}_{x,y}$ is the set of functions 
 \[
\begin{array}{rl}
i. & \phi (x,y,\lambda)=\overline{ \phi (x,y,-\overline\lambda)};\\
ii. & (1-  E_{\iota }   )\phi(x,y,\lambda)\in L^\infty;\\
iii. & \phi(x,y,\lambda)=\frac{\phi_{res}(x,y)}{\lambda-\iota}  +\phi_{\iota,r}(x,y,\lambda),\ \lambda \in D_{\iota}^\times,
\\
&\phi_{res}(x,y), \ (\lambda-\iota)\phi_{\iota,r}(x,y,\lambda),\ 
   \frac{\phi_{\iota,r}(x,y,\lambda)}{1+|x|+|y|} \in L^\infty ;
\ea\]
\[
\ba{rl}
iv. & \phi (x,y,  \lambda)=\phi_{\pm i\kappa,0}(x,y,\lambda)+\phi_{\pm i\kappa,r}(x,y,\lambda),\ \lambda \in D_{\pm i\kappa}^\times,\\
  & \phi_{\pm i\kappa,0}(x,y,\lambda)=  
  \left\{
\ba{l}
 {\frac{  \mathfrak s_de^{-2\kappa x}a(x,y)}{1+\gamma_+\cot^{-1}\frac {\kappa-|\lambda_I|}{|\lambda_R|}}}, \\
{\frac{  a(x,y)}{1+\gamma_+\cot^{-1}\frac {\kappa-|\lambda_I|}{|\lambda_R|}}},
 \ea 
\right.  \\
& \phi_{\pm i\kappa,r}(x,y,\pm i\kappa)=0,\ \phi_{ \pm i\kappa,r} (x,y,\lambda),\  \frac{\frac\partial{\partial s} \phi_{ \pm i\corr{\kappa},r}(x,y,\lambda) }{1+|x|+|y|} \in L^\infty.
\end{array}
\]
\end{definition} 

\begin{definition}\label{D:spectral} 
Define $\{\iota, {\mathfrak s}_d,   {\mathfrak s}_c(\lambda)\}$ as the set of scattering data, 
where the pole $\iota=+2i\kappa$ and the norming constant $ {\mathfrak s}_d=-3$ are the {\bf \sl discrete scattering data};  and  $ {\mathfrak s}_c(\lambda)$, the  {\bf \sl continuous scattering data}, is defined by \eqref{E:dbar-m-sf}. Denote   $T$ as   the {\bf\sl forward scattering transform} by
\beq\label{E:cauchy-operator}
\ba{c}
T(\phi)(x,y,\lambda)  = {\mathfrak s}_c(\lambda)e^{i(4\lambda_R\lambda_Iy+2\lambda_Rx)} \phi(x,y,-\overline\lambda).
\ea
\eeq 
\end{definition}

\begin{definition}\label{D:cauchy}
Let   $\mathcal C$ be the Cauchy integral operator  defined by 
\beq\label{E:cauchy-1}
\ba{c}
\mathcal C(\phi)(x,y, \lambda)=\mathcal C_\lambda(\phi) =-\frac 1{2\pi i}\iint\frac {\phi(x,y, \zeta)}{\zeta-\lambda}d\overline\zeta\wedge d\zeta. 
\ea
\eeq Decompose
\beq\label{E:decomposition-cauchy-def}
\ba{c}
 \mathcal CT\phi=\sum_{z\in\mathfrak Z}\mathcal C  E_z T\phi+ \mathcal C \left[1- \sum_{z\in\mathfrak Z}E_z\right ]T\phi,
\ea
\eeq 
where   $E_z$ and $\mathfrak Z=\{\pm i\corr{\kappa},\iota\}$ are defined by Definition \ref{D:terminology}.  
\end{definition}
 
\subsection{The spectral analysis} 

Due to Theorem \ref{T:normal-eigen}, outside the singularities $\pm i\kappa$, $\iota$, the eigenfunction $\mathfrak m $ and  continuous scattering data ($\overline\partial$-data) $\mathfrak s_c $ possess the same  analytical properties as those for the localized KPII solutions   \cite{W85}, \cite{W87}. As a result, spectral analysis there is the same as that for vacuum background (see Lemma \ref{L:infinity-spectral} in below).   
On the other hand, from \eqref{E:dbar-m-sf-new-1}, \eqref{E:dbar-m-sf-new}, the Cauchy integrals  at $\pm i\kappa$   are two dimensional singular integrals with blowing ups of order two and highly oscillatory, not fully symmetric kernels which cause  difficulties for deriving uniform estimates for the spectral transform. On the other hand,   non-uniform estimates for   these singular integrals can be achieved by applying Stokes' or the Cauchy theorem to integrate the leading singularities ( see Lemma \ref{L:pogrebkov} and \ref{L:leading-iota}).

\begin{lemma}\label{L:pogrebkov}   For $\lambda\in D^\times_{\pm i\corr{\kappa}}$, 
\beq\label{E:pogrebkov}
\ba{c}
  -\frac {1}{2\pi i}\iint\frac { \pm\frac {i}2\mbox{sgn}(\zeta_R) \mathfrak r(\zeta) E_{\pm i\corr{\kappa}}\mathfrak m_{\pm i\corr{\kappa},0}(x,y,\zeta)}{(\zeta-\lambda)(-\overline\zeta\mp i\corr{\kappa})}d\overline\zeta\wedge d\zeta \\
=   \mathfrak m_{\pm i\corr{\kappa},0}(x,y,\lambda)-\frac {1} {2\pi i}\oint_{|\zeta\mp i\corr{\kappa}|=1}\frac{\mathfrak m_{\pm i\corr{\kappa}, 0}(x,y, \zeta)}{\zeta- \lambda}d\zeta 
\ea
\eeq \cite{BP214}, \cite{BP302}.  Here and in the following the circular integration   is taken counterclockwisely and $E_z$, $D_z^\times$, $\mathfrak r$, $\mathfrak m_{\pm i\corr{\kappa},0}$ are defined by Definition \ref{D:terminology}, \eqref{E:kappa},  \eqref{E:define-remainder-m-sf-new}. Moreover,   
\bea
&&\ba{l}
 |  \mathcal C E_{\pm i\corr{\kappa}} T\mathfrak m |_{L^\infty} 
\le  C(1+|x|+|y|),
\ea\label{E:pmi-pm2-iota-h}  
 \\ 
&&\ba{l}
\mathcal C E_{\pm i\corr{\kappa}} T\mathfrak m\to 0  \textit{ uniformly as } |\lambda|\to \infty.
\ea\label{E:pmi-pm2}
\eea  

\end{lemma}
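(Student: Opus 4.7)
The plan for \eqref{E:pogrebkov} is to apply the Cauchy--Pompeiu formula on $D_{\pm i\kappa}$ after first establishing the pointwise identity
\[
\partial_{\overline\zeta}\mathfrak m_{\pm i\kappa,0}(x,y,\zeta)=\frac{\pm\frac{i}{2}\mbox{sgn}(\zeta_R)\,\mathfrak r(\zeta)\,\mathfrak m_{\pm i\kappa,0}(x,y,\zeta)}{-\overline\zeta\mp i\kappa},\qquad \zeta\in D_{\pm i\kappa}^\times.
\]
To verify this I would write the polar angle at $\pm i\kappa$ as $\alpha=\frac{1}{2i}[\log(\zeta\mp i\kappa)-\log(\overline\zeta\pm i\kappa)]$, so that $\partial_{\overline\zeta}\alpha=\frac{i}{2(\overline\zeta\pm i\kappa)}$. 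Differentiating the explicit formula \eqref{E:define-remainder-m-sf-new} and using the piecewise description \eqref{E:arccot} of $\cot^{-1}\frac{\kappa-|\zeta_I|}{|\zeta_R|}$, the branch-wise sign $\partial_\alpha\cot^{-1}\frac{\kappa-|\zeta_I|}{|\zeta_R|}=\pm 1$ combines with $\partial_{\overline\zeta}\alpha$ to produce the $\mbox{sgn}(\zeta_R)$ factor; since $\Theta_\pm(x,y)$ is real and $\overline\zeta$-independent, the identity then follows by direct algebra in both cases. Once in hand, applying the Cauchy--Pompeiu formula
\[
F(\lambda)=\frac{1}{2\pi i}\oint_{\partial D_{\pm i\kappa}}\frac{F(\zeta)}{\zeta-\lambda}\,d\zeta-\frac{1}{2\pi i}\iint_{D_{\pm i\kappa}}\frac{\partial_{\overline\zeta}F(\zeta)}{\zeta-\lambda}\,d\overline\zeta\wedge d\zeta
\]
to $F=\mathfrak m_{\pm i\kappa,0}$ and rearranging yields \eqref{E:pogrebkov}.

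For \eqref{E:pmi-pm2-iota-h}, the strategy is to split $E_{\pm i\kappa}T\mathfrak m$ via \eqref{E:dbar-m-sf-new} into a regular piece carrying $\mbox{sgn}(\zeta_R)\hbar_{\pm i\kappa}(\zeta)$ and a singular piece carrying $\pm\frac{i\kappa}{2}\frac{\mbox{sgn}(\zeta_R)\mathfrak r(\zeta)}{-\overline\zeta\mp i\kappa}$. The regular piece is uniformly bounded by Theorem \ref{T:normal-eigen}, so its Cauchy integral over the compact disk $D_{\pm i\kappa}$ is bounded by standard local integrability of $(\zeta-\lambda)^{-1}$. In the singular piece I would write
\[
e^{i(4\zeta_R\zeta_I y+2\zeta_R x)}\mathfrak m(x,y,-\overline\zeta)=\mathfrak m_{\pm i\kappa,0}(x,y,\zeta)+R(x,y,\zeta),
\]
using the reality \eqref{E:renormalize-sym} together with the $\zeta\mapsto-\overline\zeta$ invariance of $\mathfrak m_{\pm i\kappa,0}$ (visible from \eqref{E:arccot}), so that $R(x,y,\pm i\kappa)=0$ and $\partial_s R$ is bounded by $C(1+|x|+|y|)$ via \eqref{E:define-remainder-m-sf-new}. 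The leading contribution is then exactly the integrand of \eqref{E:pogrebkov}, which by that identity contributes $\mathfrak m_{\pm i\kappa,0}(x,y,\lambda)$ plus a uniformly bounded boundary integral; the remainder $R(x,y,\zeta)/(-\overline\zeta\mp i\kappa)$ is bounded pointwise by $C(1+|x|+|y|)$, and its Cauchy integral over the compact disk inherits the same growth.

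The asymptotic \eqref{E:pmi-pm2} follows from the same decomposition combined with dominated convergence: on the compact disk the kernel $(\zeta-\lambda)^{-1}$ tends to $0$ uniformly as $|\lambda|\to\infty$, and the boundary integral $\frac{1}{2\pi i}\oint_{|\zeta\mp i\kappa|=1}\frac{\mathfrak m_{\pm i\kappa,0}(x,y,\zeta)}{\zeta-\lambda}\,d\zeta$ arising from \eqref{E:pogrebkov} likewise vanishes uniformly in $(x,y)$. The main obstacle will be the singular piece in \eqref{E:pmi-pm2-iota-h}: the product kernel $(\zeta-\lambda)^{-1}(-\overline\zeta\mp i\kappa)^{-1}$ is not absolutely integrable on $D_{\pm i\kappa}$, so a direct $L^\infty$ estimate is unavailable, and the Stokes-type reduction \eqref{E:pogrebkov} is essential to trade the two-dimensional singular integral for a one-dimensional boundary contribution plus the explicit leading singularity $\mathfrak m_{\pm i\kappa,0}(x,y,\lambda)$.
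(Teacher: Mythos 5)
Your proposal is correct and follows essentially the same route as the paper: the same $\overline\partial$-identity for $\mathfrak m_{\pm i\kappa,0}$ obtained from $\partial_{\overline\zeta}\cot^{-1}\frac{\kappa-|\zeta_I|}{|\zeta_R|}=\mp\frac i2\frac{\mbox{sgn}(\zeta_R)}{-\overline\zeta\mp i\kappa}$, the same Stokes/Cauchy--Pompeiu reduction (the paper carries it out by excising $\epsilon$-disks about $\pm i\kappa$ and $\lambda$, which is exactly the justification your direct invocation of Cauchy--Pompeiu needs since $\mathfrak m_{\pm i\kappa,0}$ is only bounded with a locally integrable $\overline\partial$-derivative), and the same splitting of $E_{\pm i\kappa}T\mathfrak m$ into the explicit leading singularity plus a remainder vanishing at $\pm i\kappa$ whose $\partial_s$-bound yields the $C(1+|x|+|y|)$ growth.
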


\begin{proof} From \eqref{E:define-remainder-m-sf-new}, and for $\zeta=\pm i\corr{\kappa}+se^{i\beta}\in D_{\pm i\corr{\kappa}}^\times$,   
\beq\label{E:symbol-continuous}
\begin{array}{c}
 -\frac i 2\frac {1}{-\overline\zeta\mp i\corr{\kappa}}= \partial_{\overline\zeta}\beta,\ 
\partial_{ \overline\zeta}\cot^{-1}\frac {\corr{\kappa}-|\zeta_I|}{|\zeta_R|}=\mp\frac i2\frac {\mbox{sgn}(\zeta_R)}{-\overline\zeta\mp i\corr{\kappa}}.
\end{array}
\eeq Hence
\beq\label{E:pogrebkov-proof}
\begin{array}{c}
\partial_{ \overline\zeta}\mathfrak m_{\pm i\corr{\kappa},0}(x,y,\zeta)
=\frac { \pm\frac {i}2\mbox{sgn}(\zeta_R) \mathfrak r(\zeta) \mathfrak m_{\pm i\corr{\kappa},0}(x,y,\zeta)}{ -\overline\zeta\mp i\corr{\kappa} }.
\end{array}
\eeq    Applying Stokes' theorem,
\beq\label{E:stokes-1}
\ba{rl}
&-\frac 1{2\pi i}\iint_{D_{\pm i\kappa}/(D_{\pm i\kappa,\epsilon }\cup D_{\lambda,\epsilon})}\frac{\pm\frac {i}2\mbox{sgn}(\zeta_R) \mathfrak r(\zeta)E_{\pm i\corr{\kappa}}\mathfrak m_{\pm i\corr{\kappa},0}(x,y,\zeta)} {(\zeta-\lambda)(-\overline\zeta\mp i\corr{\kappa})}d\overline  \zeta\wedge d\zeta\\
=&-\frac 1{2\pi i}\int_{\partial(D_{\pm i\kappa,\epsilon }\cup D_{\lambda,\epsilon})} \frac{\mathfrak m_{\pm i{\kappa},0}(x,y,\zeta)}{\zeta-\lambda}d\zeta\\
=&-\frac {1} {2\pi i}\oint_{|\zeta\mp i\corr{\kappa}|=1}\frac{\mathfrak m_{\pm i\corr{\kappa}, 0}(x,y, \zeta)}{\zeta- \lambda}d\zeta+\frac 1{2\pi i}\int_{\partial D_{\pm i\kappa,\epsilon } } \frac{\mathfrak m_{\pm i{\kappa},0}(x,y,\zeta)}{\zeta-\lambda}d\zeta\\
&+\frac 1{2\pi i}\int_{\partial D_{\lambda,\epsilon} } \frac{\mathfrak m_{\pm i{\kappa},0}(x,y,\zeta)}{\zeta-\lambda}d\zeta.
\ea
\eeq Note,  by $\lambda\ne \pm i\kappa$ and Theorem \ref{T:normal-eigen},  
\beq\label{E:stokes-2}
\ba{rl}
&-\frac 1{2\pi i}\iint_{D_{\pm i\kappa,\epsilon }}\frac{\pm\frac {i}2\mbox{sgn}(\zeta_R) \mathfrak r(\zeta)E_{\pm i\corr{\kappa}}\mathfrak m_{\pm i\corr{\kappa},0}(x,y,\zeta)} {(\zeta-\lambda)(-\overline\zeta\mp i\corr{\kappa})}d\overline  \zeta\wedge d\zeta\ \to 0,\\
&-\frac 1{2\pi i}\iint_{D_{\lambda,\epsilon}}\frac{\pm\frac {i}2\mbox{sgn}(\zeta_R) \mathfrak r(\zeta)E_{\pm i\corr{\kappa}}\mathfrak m_{\pm i\corr{\kappa},0}(x,y,\zeta)} {(\zeta-\lambda)(-\overline\zeta\mp i\corr{\kappa})}d\overline  \zeta\wedge d\zeta\ \to 0,\\
&+\frac 1{2\pi i}\int_{\partial D_{\pm i\kappa,\epsilon } } \frac{\mathfrak m_{\pm i{\kappa},0}(x,y,\zeta)}{\zeta-\lambda}d\zeta\ \to 0,\\
&+\frac 1{2\pi i}\int_{\partial D_{\lambda,\epsilon} } \frac{\mathfrak m_{\pm i{\kappa},0}(x,y,\zeta)}{\zeta-\lambda}d\zeta\ \to m_{\pm i{\kappa},0}(x,y,\lambda),\ \ \textit{as $\epsilon\to 0$}.
\ea
\eeq Therefore \eqref{E:pogrebkov} follows.

Moreover, writing
\[
\ba{rl}
 &\mathcal C_\lambda  E_{\pm i\kappa} T\mathfrak m  \\
 =&-\frac 1{2\pi i}\iint\frac{  E_{\pm i\kappa}(\zeta)e^{i(4\zeta_R\zeta_Iy+2\zeta_R x)}\left[\pm\frac i2\frac{\mbox{sgn}(\zeta_R)}{-\overline\zeta\mp i\kappa}\mathfrak r(\zeta)+\mbox{sgn}(\zeta_R) \hbar_{\pm i\kappa}(\zeta)\right]}{\zeta-\lambda}
  \\
  &\times (\mathfrak m_{\pm i\kappa,0}(x,y,-\overline\zeta)+\mathfrak m_{\pm i\kappa,r}(x,y,-\overline\zeta))d\overline\zeta\wedge d\zeta\\
  =&I_1^\pm(x,y,\lambda)+I_2^\pm(x,y,\lambda),
\ea
\]where
\[
\ba{rl}
I_1^\pm(x,y,\lambda)= &\mathcal C_\lambda(\frac {\pm\frac i2\mbox{sgn}(\zeta_R)E_{\pm i\kappa}(\zeta)\mathfrak r(\zeta)\mathfrak m_{\pm i\kappa,0}(x,y,-\overline\zeta)}{-\overline\zeta\mp i\kappa} );
\ea
\]
\beq\label{E:R-decompose-formula}
\ba{rl}
 I_2^\pm(x,y,\lambda)=&\mathcal C_\lambda(\frac {\pm\frac i2\mbox{sgn}(\zeta_R)E_{\pm i\kappa}(\zeta)F_\pm(x,y,\zeta)}{-\overline\zeta\mp i\kappa}),\\
 F_\pm (x,y,\zeta)=&   [e^{i(4\zeta_R\zeta_Iy+2\zeta_R x)}-1]\mathfrak r(\zeta)\mathfrak m_{\pm i\kappa,0}(x,y,-\overline\zeta)\\
&+  e^{i(4\zeta_R\zeta_Iy+2\zeta_R x)} \mathfrak r(\lambda)\mathfrak m_{\pm i\kappa,r}(x,y,-\overline\zeta)\\
&\pm  e^{i(4\zeta_R\zeta_Iy+2\zeta_R x)}\frac 2i(-\overline\zeta\mp i)\hbar_{\pm i\kappa} (\zeta)\mathfrak m (x,y,-\overline\zeta) ,
\ea
\eeq
Using $\frac{\mathfrak m_{\pm i\kappa,r}(x,y,-\overline\zeta)}{-\overline\zeta\mp i\kappa}=\frac{\mathfrak m_{\pm i\kappa,r}(x,y,-\overline\zeta)-\mathfrak m_{\pm i\kappa,r}  (x,y,\pm i\kappa)}{-\overline\zeta\mp i\kappa}$ and via \eqref{E:define-remainder-m-sf-new}, \eqref{E:pogrebkov},  we conclude
\beq\label{E:pm-i-k-double}
\ba{c}
|E_{\pm i\corr{\kappa}}\mathcal CE_{\pm i\corr{\kappa}}T\mathfrak m|_{L^\infty}\le C(1+|x|+|y|),\\
|(1- E_{\pm i\kappa})\mathcal CE_{\pm i\corr{\kappa}}T\mathfrak m|
_{L^\infty}\le C
\ea
\eeq
which yield \eqref{E:pmi-pm2-iota-h} and \eqref{E:pmi-pm2}. 
\end{proof}

\begin{lemma}\label{L:leading-iota} Let $E_\iota$, $D_\iota$  be defined by Definition \ref{D:terminology}. Then  
\beq\label{E:pogrebkov-iota}
\ba{c}
  -\frac 1 {2\pi i}\iint\frac { \frac i2\mbox{sgn}(\zeta_R) E_{\iota} }{(\zeta-\lambda)(-\overline\zeta-\iota)}d\overline\zeta\wedge d\zeta \in L^\infty(D_{\iota}) 
\ea
\eeq which vanishes at $\iota=2i\kappa$.  Consequently,   
\bea
&&\ba{l}
 |  \mathcal C E_{\iota} T\mathfrak m |_{L^\infty} 
\le  C(1+|x|+|y|),
\ea\label{E:iota-h}  
 \\ 
&&\ba{l}
\mathcal C E_{\iota} T\mathfrak m\to 0 \textit{ uniformly as } |\lambda|\to \infty.
\ea\label{E:iota}
\eea  
\end{lemma}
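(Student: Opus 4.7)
The plan is to follow the blueprint of Lemma \ref{L:pogrebkov}, shifting the base point from $\pm i\corr{\kappa}$ to $\iota=2i\corr{\kappa}$, since the factor $\mathfrak m(x,y,-\overline\zeta)$ inside $\mathcal CE_\iota T\mathfrak m$ now carries a pole at $-\overline\zeta=\iota$, i.e.\ $\zeta=\iota$. Once \eqref{E:pogrebkov-iota} is proved, \eqref{E:iota-h} and \eqref{E:iota} will follow from it after isolating the leading singularity of $E_\iota T\mathfrak m$ at $\zeta=\iota$ from a tame remainder.

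For \eqref{E:pogrebkov-iota} itself, the key identity is the $\iota$-analog of \eqref{E:symbol-continuous},
\[
\frac{i/2}{-\overline\zeta-\iota}=-\partial_{\overline\zeta}\arg(\zeta-\iota),
\]
which is read off from $\arg(\zeta-\iota)=\frac{1}{2i}[\log(\zeta-\iota)-\log(\overline\zeta+\iota)]$ together with $\overline\iota=-\iota$. To cope with the $\mbox{sgn}(\zeta_R)$ prefactor I would split $D_\iota=D_\iota^+\cup D_\iota^-$ along $\zeta_R=0$; on each half-disk the sign is constant and the potential $u^\pm(\zeta)=\mp\arg(\zeta-\iota)$ can be chosen bounded. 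Applying Cauchy--Pompeiu on each piece after the customary $\epsilon$-excisions at $\zeta=\iota$ and $\zeta=\lambda$ (mirroring \eqref{E:stokes-1}--\eqref{E:stokes-2}) converts the area integral into uniformly bounded contour integrals $\oint_{\partial D_\iota^\pm}u^\pm(\zeta)/(\zeta-\lambda)\,d\zeta$ plus the trace $u^\pm(\lambda)\mathbf 1_{\lambda\in D_\iota^\pm}$, which together place the integral in $L^\infty(D_\iota)$. Vanishing at $\lambda=\iota$ is then read off from the antisymmetry of $\mbox{sgn}(\zeta_R)|\zeta-\iota|^{-2}$ under $\zeta_R\mapsto-\zeta_R$, which makes the two half-disk contributions cancel in the limit.

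For the consequences I would use \eqref{E:dbar-m-sf-new-1} together with \eqref{E:define-remainder-m-sf-pm2} to decompose, on $D_\iota^\times$,
\[
E_\iota(\zeta)T\mathfrak m(x,y,\zeta)=\frac{(i/2)\mbox{sgn}(\zeta_R)\,\mathfrak c\,\mathfrak m_{res}(x,y)}{-\overline\zeta-\iota}E_\iota(\zeta)+R(x,y,\zeta),
\]
where $R$ collects (i) the $\mbox{sgn}(\zeta_R)\hbar_\iota(\zeta)$ piece of $\mathfrak s_c$, bounded and vanishing at $\zeta=\iota$; (ii) the $\mathfrak m_{\iota,r}$ piece of $\mathfrak m(\cdot,-\overline\zeta)$, bounded with at most linear growth in $(x,y)$ and non-singular at $\zeta=\iota$; and (iii) the oscillation correction $(e^{i(4\zeta_R\zeta_Iy+2\zeta_Rx)}-1)$ multiplying the leading singularity, controlled by $|e^{i\Phi}-1|\le|\Phi|\le C(1+|x|+|y|)|\zeta-\iota|$, which gains one factor of $|\zeta-\iota|$ and thus neutralizes the $(-\overline\zeta-\iota)^{-1}$ blow-up. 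The leading term then produces $\mathfrak c\,\mathfrak m_{res}(x,y)\,J(\lambda)$ with $J$ the bounded function supplied by \eqref{E:pogrebkov-iota}, while each piece of $R$ contributes a Cauchy integral uniformly bounded by $C(1+|x|+|y|)$; this gives \eqref{E:iota-h}. The decay \eqref{E:iota} follows by dominated convergence: the same decomposition yields $|E_\iota T\mathfrak m|_{L^1(d\overline\zeta\wedge d\zeta)}\le C(1+|x|+|y|)$, so for fixed $(x,y)$ and $|\lambda|$ large the bound $|\zeta-\lambda|^{-1}\le(|\lambda|-1)^{-1}$ on $\mbox{supp}(E_\iota)$ forces $\mathcal CE_\iota T\mathfrak m\to 0$.

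The main delicacy is the distributional derivative $\partial_{\overline\zeta}(\mbox{sgn}(\zeta_R)\arg(\zeta-\iota))$ concentrated on the segment $\zeta_R=0\cap D_\iota$; the half-disk splitting circumvents treating it globally, but one has to verify that the two extra boundary contributions along that segment cancel, which they do because the branch conventions for $u^+$ and $u^-$ agree there.
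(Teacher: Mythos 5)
Your treatment of \eqref{E:iota-h} and \eqref{E:iota} coincides with the paper's: the decomposition of $E_\iota T\mathfrak m$ into the leading term $\frac{(i/2)\mbox{sgn}(\zeta_R)\,\mathfrak c\,\mathfrak m_{res}(x,y)}{-\overline\zeta-\iota}$ plus a remainder collecting the $\hbar_\iota$ piece, the $\mathfrak m_{\iota,r}$ piece, and the oscillation correction $e^{i(4\zeta_R\zeta_Iy+2\zeta_Rx)}-1$ (which gains a factor $|\zeta-\iota|=|-\overline\zeta-\iota|$ since $\zeta_R$ vanishes at $\iota$) is exactly the splitting $I_1'+I_2'$ with the function $F(x,y,\zeta)$ in \eqref{E:extra-factor}. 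Where you genuinely diverge is \eqref{E:pogrebkov-iota}. The paper does not use Stokes' theorem here: it passes to polar coordinates $\zeta=\iota+se^{i\beta}$, uses $-\overline\zeta-\iota=-se^{-i\beta}$ and $\mbox{sgn}(\zeta_R)=\mbox{sgn}(\cos\beta)$ to fold the integral into the explicit iterated integral $\frac{1}{2\pi i}\int_{-\pi/2}^{\pi/2}d\beta\int_{-1}^{1}\frac{ds}{s-(\lambda-\iota)e^{-i\beta}}$, and reads off boundedness, H\"older continuity, and the vanishing at $\iota$ from this truncated-Hilbert-transform form. You instead transplant the mechanism of Lemma \ref{L:pogrebkov}, writing $\frac{i/2}{-\overline\zeta-\iota}=-\partial_{\overline\zeta}\arg(\zeta-\iota)$ and applying Cauchy--Pompeiu on the two half-disks where $\mbox{sgn}(\zeta_R)$ is constant. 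That identity is correct and the route can be closed, with the virtue of treating $\iota$ and $\pm i\kappa$ uniformly; the paper's computation is more explicit and avoids contour bookkeeping.

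The one step you should not wave through is the claim that the half-disk contour integrals $\oint_{\partial D_\iota^\pm}u^\pm(\zeta)/(\zeta-\lambda)\,d\zeta$ are individually ``uniformly bounded.'' They are not: with the branch $-\frac\pi2\le\arg(\zeta-\iota)\le\frac{3\pi}2$ the cut lies on the lower half of the shared diameter, so each boundary density has a jump of size $\pi$ at $\zeta=\iota$ and each contour integral blows up like $\log|\lambda-\iota|$; moreover the combined diameter contribution equals $-\pi$ times an explicit logarithm (since $u^+-u^-\equiv-\pi$ on the diameter) whose singularities at the endpoints $\iota\pm i$ must in turn cancel against jump singularities of the glued semicircle density there. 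All of these cancellations do occur, but the stated reason (``the branch conventions for $u^+$ and $u^-$ agree there'') is not the mechanism --- $u^+=-\arg$ and $u^-=+\arg$ do not agree on the diameter; it is the oppositeness of the jumps together with the opposite orientations that kills the logarithms. If you prefer your route, you should track the three cancellation points $\iota$, $\iota+i$, $\iota-i$ explicitly; otherwise the polar-coordinate computation \eqref{E:pogrebkov-Stokes-iota} gets you to $L^\infty(D_\iota)$ and the vanishing at $\iota$ (by oddness of the kernel in $s$) in two lines.
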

\begin{proof} 
 By using the polar coordinates $\zeta= \iota +se^{i\beta}\in D_{ \iota  }^\times$,
\beq\label{E:pogrebkov-Stokes-iota}
\begin{array}{rl}
& -\frac 1 {2\pi i}\iint\frac {  \frac i2\mbox{sgn}(\zeta_R) E_{ \iota} }{(\zeta-\lambda)(-\overline\zeta-\iota)}d\overline\zeta\wedge d\zeta\\
=&
    \frac {1}{ 2\pi i }\int_{-\frac \pi 2}^{\frac{3\pi}2}d\beta \int_0^1\frac { \mbox{sgn}(\zeta_R)    }{(se^{i\beta}-(\lambda-\iota))se^{-i\beta }} sds\\
    = & \frac {1}{ 2\pi i }\int_{-\frac \pi 2}^{\frac{ \pi}2}d\beta \int_{-1}^1\frac { 1    }{s -(\lambda-\iota)e^{-i\beta}} ds,
\end{array}
\eeq 
which is a composition of the Hilbert transform. Therefore the H$\ddot{\mbox o}$lder continuity  and vanishing at $ \iota$ can be proved and \eqref{E:pogrebkov-iota} follows.  Writing
\beq\label{E:extra-factor}
\ba{rl}  
 \mathcal C  E_{\iota} T\mathfrak m  
    =&-\frac 1{2\pi i}\iint\frac{  E_{\iota}(\zeta)e^{i(4\zeta_R\zeta_Iy+2\zeta_R x)}\left[  \frac i2 \mbox{sgn}(\zeta_R)\mathfrak c+\mbox{sgn}(\zeta_R) \hbar_{\iota}(\zeta)\right]}{\zeta-\lambda}
  \\
  &\times (\frac {\mathfrak m_{res}(x,y)}{-\overline\zeta-\iota}+\mathfrak m_{\iota,r}(x,y,-\overline\zeta))d\overline\zeta\wedge d\zeta \\
  =& I _1'(x,y,\lambda)+  I _2'(x,y,\lambda),\\
 I _1'(x,y,\lambda)= &\mathcal C_\lambda(\frac { \frac i2\mbox{sgn}(\zeta_R)\mathfrak c E_{\iota}(\zeta)\mathfrak m_{res}(x,y)}{-\overline\zeta-\iota} )\\ 
I_2'(x,y,\lambda)=&\mathcal C_\lambda( \frac {\frac i2\mbox{sgn}(\zeta_R)E_{\iota}(\zeta)F (x,y,\zeta)}{-\overline\zeta-\iota}),\\
 F (x,y,\zeta)=&  [e^{i(4\zeta_R\zeta_Iy+2\zeta_R x)}-1]\mathfrak c\mathfrak m_{res}(x,y)\\
&+ e^{i(4\zeta_R\zeta_Iy+2\zeta_R x)} (-\overline\zeta-\iota)\mathfrak c\mathfrak m_{\iota,r}(x,y,-\overline\zeta)\\
&+ e^{i(4\zeta_R\zeta_Iy+2\zeta_R x)}\frac 2i(-\overline\zeta-\iota)\hbar_{\iota} (\zeta)\mathfrak m  (x,y) .
  \ea
\eeq Via \eqref{E:define-remainder-m-sf-pm2} and \eqref{E:pogrebkov-iota}, we conclude
\beq\label{E:iota-double}
\ba{c}
|E_{  \iota}\mathcal CE_{  \iota}T\mathfrak m|_{L^\infty}\le C(1+|x|+|y|),\\
|(1- E_{\iota})\mathcal CE_{  \iota}T\mathfrak m|
_{L^\infty}\le C
\ea
\eeq
which yield \eqref{E:iota-h} and \eqref{E:iota}.

\end{proof}

\begin{lemma}\label{L:infinity-spectral} Suppose $
(1+|x|+|y|)^2\partial_x^j\partial_y^k v_0\in {L^1}$, $0\le j,\,k\le 2 $, $
|v|_{L^1\cap L^\infty}\ll 1$. Let   $E_z$, $E_{z,a}$,    $ \iota$, $\mathfrak Z$ be defined by Definition \ref{D:terminology}.    
\beq
 \ba{c}
 | \mathcal C \left[1- \sum_{z\in\mathfrak Z}E_z\right ]T\mathfrak m |_{L^\infty }\le C,\ea\label{E:infinity-lambda-i-spectal} 
\eeq and
\beq\label{E:spec-infinity}
\ba{c}
\mathcal C \left[1- \sum_{z\in\mathfrak Z}E_z\right ]T\mathfrak m(x,y,\lambda)\to 0 \textit{ uniformly as $|\lambda|\to\infty$, $\lambda_R\ne 0$.}
\ea
\eeq
\end{lemma}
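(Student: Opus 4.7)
The plan is to reduce the bound on $\mathcal{C}\bigl[\bigl(1-\sum_{z\in\mathfrak Z}E_z\bigr)T\mathfrak m\bigr]$ to the oscillatory Cauchy estimates developed for localized KPII data in the author's earlier works \cite{W85,W87}. First, I would factor out the bounded part of the integrand: on the support of $1-\sum_{z}E_z$, i.e.\ outside $D_{\pm i\corr{\kappa}}\cup D_{\iota}$, Theorem \ref{T:normal-eigen} (combining the $(1-E_{\iota})\mathfrak m$ bound in \eqref{E:define-remainder-m-sf-pm2} with the resolved form \eqref{E:define-remainder-m-sf-new} at $\pm i\corr{\kappa}$) gives the uniform-in-$(x,y)$ bound $|\mathfrak m(x,y,-\overline\zeta)|\le C$. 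Consequently the integrand factors as $\mathfrak s_c(\zeta)\cdot e^{i(4\zeta_R\zeta_I y+2\zeta_R x)}\cdot(\text{uniformly bounded})$, where $\mathfrak s_c$ is simultaneously controlled in $L^2(|\zeta_R|\,d\overline\zeta\wedge d\zeta)$ and $L^\infty$ by \eqref{E:tilde-dbar-m-sf-new-0}.

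For the $L^\infty$ bound \eqref{E:infinity-lambda-i-spectal} I would split the integration into near and far pieces $\iint_{|\zeta-\lambda|\le 1}+\iint_{|\zeta-\lambda|>1}$. The near piece is controlled uniformly using the $L^\infty$ bounds and $\iint_{|\zeta-\lambda|\le 1}|\zeta-\lambda|^{-1}\,dA<\infty$. The far piece is the delicate one: here I would exploit the oscillatory factor via the change of variables $(p,q)=(2\zeta_R,4\zeta_R\zeta_I)$, valid a.e.\ since $\zeta_R\ne 0$ a.e. Under this map $e^{i\varphi(\zeta)}=e^{i(px+qy)}$ becomes a Fourier phase in $(x,y)$, the Jacobian yields $|\zeta_R|\,d\zeta_R\,d\zeta_I=\tfrac{1}{8}dp\,dq$, so the weighted $L^2$ bound on $\mathfrak s_c$ becomes an unweighted $L^2(dp\,dq)$ bound on its pullback; a Cauchy--Schwarz/Plancherel argument against this pulled-back $L^2$ density then produces the desired uniform-in-$(x,y,\lambda)$ estimate, which is exactly the type of oscillatory Cauchy integral worked out in the localized analysis \cite{W85,W87}.

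The uniform vanishing \eqref{E:spec-infinity} as $|\lambda|\to\infty$ with $\lambda_R\ne 0$ follows from a related argument: for any fixed radius $R$, on the truncated region $|\zeta|\le R$ the Cauchy kernel $(\zeta-\lambda)^{-1}\to 0$ uniformly, producing a contribution that vanishes as $|\lambda|\to\infty$; on the complement $|\zeta|>R$ the same oscillatory/$(p,q)$-Fourier argument produces an $o_R(1)$ contribution uniformly in $\lambda$, boosted by the strengthened decay hypothesis $(1+|x|+|y|)^2\partial_x^j\partial_y^k v_0\in L^1$ (which translates into additional regularity of the pulled-back density and hence decay at infinity in the $(p,q)$ variables). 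Letting $R\to\infty$ concludes.

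The main obstacle is the far-region $L^\infty$ estimate. Neither the weighted $L^2$ bound alone (since $|\zeta-\lambda|^{-1}$ fails $L^2$-integrability on the tail, particularly near $\zeta_R=0$) nor the $L^\infty$ bound alone (which fails $L^1$-integrability against the Cauchy kernel) is sufficient; both the oscillation of $e^{i\varphi}$ and the Fourier structure unlocked by the $(p,q)$ change of variables are essential, and it is precisely at this step that the localized-KPII machinery of \cite{W85,W87} does the work.
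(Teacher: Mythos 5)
Your skeleton matches the paper's proof in its main ingredients: bound $\mathfrak m$ uniformly off the discs $D_z$, feed in the $L^2(|\zeta_R|\,d\overline\zeta\wedge d\zeta)\cap L^\infty$ control of $\mathfrak s_c$ from \eqref{E:tilde-dbar-m-sf-new-0}, and pass to the variables $2\pi\xi=\zeta+\overline\zeta$, $2\pi i\eta=\zeta^2-\overline\zeta^2$ (your $(p,q)$ up to constants), whose Jacobian converts the weighted $L^2$ bound into an unweighted one; the references \cite{W85}, \cite{W87} are exactly the ones the paper cites. But the mechanism you propose for the decisive far-region step is not the one that does the work, and as described it does not close. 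You split by $|\zeta-\lambda|\lessgtr 1$ and claim the far piece is handled by Cauchy--Schwarz against the pulled-back $L^2$ density plus the oscillation of $e^{i(4\zeta_R\zeta_I y+2\zeta_R x)}$ via Plancherel. Cauchy--Schwarz fails outright because the transported kernel $\bigl(|\zeta_R|\,|\zeta-\lambda|\bigr)^{-1}$ is not in $L^2(dp\,dq)$ over the far region (the $|\zeta_R|^{-1}$ from the Jacobian is non-integrable across $\zeta_R=0$, as you yourself note), and Plancherel in $(x,y)$ cannot rescue it: pairing against the phase $e^{i(px+qy)}$ yields pointwise control only through the $L^1$ norm of the density, while an $L^2$ Plancherel argument would produce an $L^2_{x,y}$ bound, not the required $L^\infty_{x,y}$ bound. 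The oscillatory factor is in fact never used in the paper; the integrand is estimated in absolute value throughout.

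The actual mechanism, which is what \cite[Lemma 2.II]{W85}, \cite[Lemma 2.II]{W87} supply, is algebraic: with $p_\lambda(\xi,\eta)=(2\pi\xi)^2-4\pi\xi\lambda+2\pi i\eta$ one has $p_\lambda=2(\zeta+\overline\zeta)(\zeta-\lambda)=4\zeta_R(\zeta-\lambda)$, so the Cauchy kernel times the Jacobian is exactly $C/|p_\lambda|$ --- the apparent singularity at $\zeta_R=0$ is cancelled, not oscillated away. One then splits by $|p_\lambda|\lessgtr 1$ (near/far from the zero variety of $p_\lambda$, a parabola in $(\xi,\eta)$, not near/far from $\lambda$), and pairs $|\mathfrak s_c|_{L^\infty}$ with $|1/p_\lambda|_{L^1(\Omega_\lambda)}\le C(1+|\lambda_R|^2)^{-1/2}$ and $|\mathfrak s_c|_{L^2(d\xi d\eta)}$ with $|1/p_\lambda|_{L^2(\Omega_\lambda^c)}\le C(1+|\lambda_R|^2)^{-1/4}$, as in \eqref{E:wick-infty-spec}--\eqref{E:infty-spec}. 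This H\"older pairing gives both \eqref{E:infinity-lambda-i-spectal} and, because the $1/p_\lambda$ norms decay in $\lambda$, the vanishing \eqref{E:spec-infinity} in one stroke --- no compact truncation, no extra regularity of the pulled-back density, and no use of the weight $(1+|x|+|y|)^2$ is needed at this point. So your proposal identifies the right change of variables and the right citation, but the step you flag as "where the machinery does the work" is left as a black box and the mechanism you attribute to it (oscillation plus Plancherel) is not viable; you need the factorization of $p_\lambda$ and its $L^1(\Omega_\lambda)/L^2(\Omega_\lambda^c)$ bounds to close the argument.
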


\begin{proof} Via a change of variables 
 \beq\label{E:wickhauser-spec}
 \ba{c}
2\pi\xi=\zeta+\overline\zeta,\quad 2\pi i\eta=\zeta^2-\overline\zeta^2,\\
\zeta=\pi\xi+i\frac \eta{2\xi}, \\ d\overline\zeta\wedge d\zeta=\frac{i\pi}{|\xi|}d\xi d\eta,
\ea
\eeq 
and from \eqref{E:pm-i-new},  \cite[Lemma 2.II]{W85}, \cite[Lemma 2.II]{W87}
\begin{equation}\label{E:wick-infty-spec}
\ba{c}
p_\lambda(\xi,\eta)=(2\pi\xi)^ 2-4\pi\xi\lambda+2\pi i \eta ,\\ \Omega_\lambda=\{(\xi,\eta)\in\RR^2\ :\ |p_\lambda(\xi,\eta)|<1\},\\
\left|\frac 1{p_\lambda}\right|_{L^1(\Omega_\lambda, d\xi d\eta)}\le  \frac C{(1+|\lambda_R|^2)^{1/2}},\quad
\left|\frac 1{p_\lambda}\right|_{L^2(\Omega_\lambda^c,d\xi d\eta)}\le \frac C{(1+|\lambda_R|^2)^{1/4}},
\ea
\end{equation}    we obtain
\beq\label{E:infty-spec}
\begin{array}{rl}
&|\mathcal C[1- \sum_{z\in\mathfrak Z}E_z ] T\mathfrak m|\\
\le & C| [1- \sum_{z\in\mathfrak Z}E_z ] \mathfrak m|_{L^\infty} \iint \frac {{|\left[1-E_{+i\corr{\kappa},1/2}(\zeta)-E_{-i\corr{\kappa},1/2}(\zeta)\right]\mathfrak s_c(\zeta)|}}{|\zeta-\lambda|}d\overline\zeta\wedge d\zeta \\
\le & C| [1- \sum_{z\in\mathfrak Z}E_z ]\mathfrak m|_{L^\infty} \iint \frac {{|(-2\pi)\left[1-E_{+i\corr{\kappa},1/2}(\zeta)-E_{-i\corr{\kappa},1/2}(\zeta)\right]\mathfrak s_c(\zeta)|}}{|(2\pi\xi)^ 2-4\pi\xi\lambda+2\pi i \eta|}d\xi d\eta\\
\le &C| [1- \sum_{z\in\mathfrak Z}E_z]\mathfrak m|_{L^\infty} \\
&\times\{|\left[1-E_{+i\corr{\kappa},1/2}(\zeta)-E_{-i\corr{\kappa},1/2}(\zeta)\right]\mathfrak s_c(\zeta)| _{L^2(d\xi d\eta)}\left|\frac 1{p_\lambda}\right|_{L^2(\Omega^c_\lambda,d\xi d\eta)} \\
& +{|\left[1-E_{+i\corr{\kappa},1/2}(\zeta)-E_{-i\corr{\kappa},1/2}(\zeta)\right]\mathfrak s_c(\zeta) |}_{L^\infty(d\xi d\eta)}\left|\frac 1{p_\lambda}\right|_{L^1(\Omega_\lambda, d\xi d\eta)}\}. 

\end{array}
\eeq 
Therefore we obtain \eqref{E:infinity-lambda-i-spectal} and \eqref{E:spec-infinity}.
\end{proof}

\subsection{The Cauchy integral equation}

\begin{theorem}\label{T:cauchy-integral-eq}
If
\[
\ba{c}
u_0(x )=-2\corr{\kappa^2}\textrm{sech}^2\corr{\kappa}x,\, \kappa>0,\\
(1+|x|+|y|)^2\partial_y^j\partial_x^kv _0\in  {L^1\cap L^\infty},\ 0\le j,k\le 4 ,\\
|v_0 |_{{L^1\cap L^\infty}}\ll 1, \ v_0(x,y)\in\RR,
\ea
\] then the eigenfunction $\mathfrak m $  derived from Theorem \ref{T:normal-eigen}  satisfies $\mathfrak m (x,y,\lambda)\in W$ and the Cauchy integral equation
\beq\label{E:cauchy-integral-equation-sf}
\begin{array}{cl}
  \mathfrak{  m}(x,y,\lambda) =1+\frac{\mathfrak m_{ res }(x,y  )}{\lambda-\iota }  +\mathcal CT
 \mathfrak m , & \forall\lambda\ne  \iota,
\end{array}
\eeq   In particular, 
   the residue $\mathfrak m_{ res }(x,y  )$ at $\lambda=\iota$ and leading singularities $\mathfrak m_{\pm i\corr{\kappa},0}$ at $\pm i\kappa$ satisfy the constraints
\beq\label{E:+i-pogrebkov-x-function-new}
\ba{c}
 \frac{\mathfrak m_{ res }(x,y  )}{-i\corr{\kappa}}  =-1+\mathfrak m_{+i\corr{\kappa},0}(x,y,+i\corr{\kappa}+0^+e^{i\alpha})  - \mathcal C_{+i\corr{\kappa}+0^+e^{i\alpha}} T\mathfrak m  , \\
 \frac{\mathfrak m_{ res }(x,y  )}{-3i\corr{\kappa}}  =-1+\mathfrak m_{-i\corr{\kappa},0}(x,y,-i\corr{\kappa}+0^+e^{i\alpha}) - \mathcal C_{-i\corr{\kappa}+0^+e^{i\alpha}}T\mathfrak m ,\\
 \mathfrak m_{+i\corr{\kappa},0}(x,y,+i\corr{\kappa}+0^+e^{i\alpha})=  {\mathfrak s}_de^{-2\corr{\kappa}x}\mathfrak m_{-i\corr{\kappa},0}(x,y,-i\corr{\kappa}+0^+e^{i(\pi+\alpha)})
\ea
\eeq for $\forall -\frac\pi 2<\alpha <\frac {3\pi}2$, with  
 $W$,  $T$, $\mathfrak s_d$, and $\mathcal C$ defined by  Definition \ref{D:quadrature-hat} and \ref{D:spectral}.

\end{theorem}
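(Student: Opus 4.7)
The plan is to verify $\mathfrak{m} \in W$ directly from the structural output of Theorem \ref{T:normal-eigen}, then to establish \eqref{E:cauchy-integral-equation-sf} by applying the generalized Cauchy--Pompeiu formula to $\mathfrak{m}$ on a large disk with small disks around the singularities $\iota$ and $\pm i\kappa$ removed, and finally to extract the constraints \eqref{E:+i-pogrebkov-x-function-new} by taking radial limits in \eqref{E:cauchy-integral-equation-sf}. Membership in $W$ is checked clause by clause: the reality \eqref{E:renormalize-sym} supplies property $(i)$; the first bound in \eqref{E:define-remainder-m-sf-pm2} gives $(ii)$; the pole expansion in the second half of \eqref{E:define-remainder-m-sf-pm2} gives $(iii)$; and \eqref{E:define-remainder-m-sf-new} together with $\mathfrak{s}_d = -3$ gives $(iv)$ with $a(x,y) = -2\Theta_+(x,y)$ or $\frac{2}{3}\Theta_-(x,y)$ as appropriate.

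For the main identity, fix $\lambda \notin \mathfrak{Z}$ and apply the Cauchy--Pompeiu formula on
$\Omega_{R,\epsilon} = \{|\zeta| < R\} \setminus (\overline{D_{\iota,\epsilon}} \cup \overline{D_{+i\kappa,\epsilon}} \cup \overline{D_{-i\kappa,\epsilon}})$,
\[
\mathfrak{m}(x,y,\lambda) = \frac{1}{2\pi i}\oint_{\partial \Omega_{R,\epsilon}} \frac{\mathfrak{m}(x,y,\zeta)}{\zeta-\lambda}\,d\zeta \;-\; \frac{1}{2\pi i}\iint_{\Omega_{R,\epsilon}} \frac{\overline{\partial}_{\zeta} \mathfrak{m}(x,y,\zeta)}{\zeta-\lambda}\,d\overline{\zeta}\wedge d\zeta.
\]
As $R \to \infty$ the large-circle boundary integral converges to $1$ since $\mathfrak{m} \to 1$ at $\lambda = \infty$ (by the expansion \eqref{E:define-remainder-m-sf-0} and the integral equation \eqref{E:1-ode} combined with the $\lambda$-asymptotics of $\widetilde G$). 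As $\epsilon \to 0$, the clockwise boundary around $\iota$ extracts $\mathfrak{m}_{res}(x,y)/(\lambda - \iota)$ from the simple pole in \eqref{E:define-remainder-m-sf-pm2}, while the boundaries around $\pm i\kappa$ contribute $O(\epsilon)$ because $\mathfrak{m}$ is bounded there by \eqref{E:define-remainder-m-sf-new}. In the area integral, substitute $\overline{\partial}_{\zeta}\mathfrak{m} = T\mathfrak{m}$ via \eqref{E:dbar-m-sf} and split according to \eqref{E:decomposition-cauchy-def}; Lemmas \ref{L:pogrebkov}, \ref{L:leading-iota}, and \ref{L:infinity-spectral} provide the absolute integrability and uniform bounds needed to interchange the limits in $R$ and $\epsilon$ with the integral and to identify the resulting expression as $\mathcal{C}T\mathfrak{m}$.

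The central obstacle is the non-integrability of the density $\mathfrak{s}_c(\zeta)/(\zeta-\lambda)$ near each $z \in \mathfrak{Z}$: by \eqref{E:cd-decomposition}, \eqref{E:dbar-m-sf-new-1}, and \eqref{E:dbar-m-sf-new} the integrand behaves like $|\zeta - z|^{-2}$ there, so naive dominated convergence fails. Lemmas \ref{L:pogrebkov} and \ref{L:leading-iota} resolve exactly this obstacle by applying Stokes' theorem to the leading singular piece: the key observation \eqref{E:pogrebkov-proof} identifies the leading density with $\overline{\partial}_{\zeta}\mathfrak{m}_{\pm i\kappa,0}$, converting a divergent area integral into a harmless contour integral on $|\zeta \mp i\kappa| = 1$; the analogous mechanism at $\iota$ is \eqref{E:pogrebkov-iota}. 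The complementary background piece, with $\mathfrak{s}_c$ outside the singular neighbourhoods, is handled by Lemma \ref{L:infinity-spectral} using the change of variables \eqref{E:wickhauser-spec} and the $L^2 \cap L^\infty$ bound \eqref{E:pm-i-new} on $\mathfrak{s}_c$, whose derivation \eqref{E:high-der-lambda} is precisely why $v_0$ must carry four derivatives in $L^1$.

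The three constraints in \eqref{E:+i-pogrebkov-x-function-new} then follow by evaluating \eqref{E:cauchy-integral-equation-sf} at the radial limits $\lambda = \pm i\kappa + 0^+ e^{i\alpha}$. Using the decomposition $\mathfrak{m} = \mathfrak{m}_{\pm i\kappa,0} + \mathfrak{m}_{\pm i\kappa,r}$ from \eqref{E:define-remainder-m-sf-new} with $\mathfrak{m}_{\pm i\kappa,r}(x,y,\pm i\kappa) = 0$, the left-hand side reduces to $\mathfrak{m}_{\pm i\kappa,0}(x,y,\pm i\kappa + 0^+ e^{i\alpha})$; meanwhile $1/(\lambda - \iota)$ evaluates to $-1/(i\kappa)$ and $-1/(3i\kappa)$ respectively, giving the first two identities after rearrangement. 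The third identity is simply transported from \eqref{E:define-remainder-m-sf-new}. The existence of the required radial boundary values of $\mathcal{C}T\mathfrak{m}$ at $\pm i\kappa$ is itself a byproduct of the Stokes-theorem representations \eqref{E:pogrebkov} and the continuity statements established within the proofs of Lemmas \ref{L:pogrebkov} and \ref{L:leading-iota}.
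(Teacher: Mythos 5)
Your route to \eqref{E:cauchy-integral-equation-sf} is genuinely different from the paper's: you apply the Cauchy--Pompeiu (Cauchy--Green) formula on a punctured large disk and read off the constant $1$ from the large-circle boundary integral, whereas the paper shows that $\mathfrak m-\frac{\mathfrak m_{res}}{\lambda-\iota}-\mathcal CT\mathfrak m$ is entire and bounded in $\lambda$ (via \eqref{E:debar-initial}, Vekua's theorems, and Lemmas \ref{L:pogrebkov}--\ref{L:infinity-spectral}), invokes Liouville to get a $\lambda$-independent function $g(x,y)$ in \eqref{E:unique-direct-problem-q}, and only then identifies $g\equiv 1$. Your treatment of the singular area integrals, of the residue circle at $\iota$, and of the radial limits producing \eqref{E:+i-pogrebkov-x-function-new} is consistent with the paper's lemmas.

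However, there is a genuine gap at the step where you evaluate the outer boundary term. Writing $\frac 1{2\pi i}\oint_{|\zeta|=R}\frac{\mathfrak m}{\zeta-\lambda}d\zeta=1+\frac 1{2\pi i}\oint_{|\zeta|=R}\frac{\mathfrak m-1}{\zeta-\lambda}d\zeta$, the correction vanishes as $R\to\infty$ only if $\sup_{|\zeta|=R}|\mathfrak m(x,y,\zeta)-1|\to 0$. Nothing in the paper gives this: Theorem \ref{T:normal-eigen} provides only the bound $|(1-E_\iota)\mathfrak m|\le C$, Lemma \ref{L:eigen-green} gives estimates of $\widetilde G$ uniform in $\lambda$ but no decay in $\lambda$, and the ``$\lambda$-asymptotics of $\widetilde G$'' you invoke are never established. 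The paper is structured precisely to avoid needing this asymptotic: after Liouville it proves $\partial_x g=0$ by substituting \eqref{E:unique-direct-problem-q} into the spectral equation and comparing growth in $\lambda$ in \eqref{E:unique-computation} --- this is where the hypothesis $0\le j,k\le 4$ is actually consumed, to control $(\lambda+\overline\lambda)^2\mathfrak s_c$ and $(\lambda^2-\overline\lambda^2)\mathfrak s_c$ so that $(\partial_y-\partial_x^2+2i\lambda\partial_x)\mathcal CT\mathfrak m\to 0$ --- and then pins down $g\equiv 1$ by sending $x\to\infty$ at a fixed large $\lambda$ and using the spatial boundary condition \eqref{E:kp-line-normal-bdry-1}. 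Your proposal omits this entire mechanism and misattributes the four-derivative hypothesis to \eqref{E:high-der-lambda}, which only requires two derivatives (it underlies \eqref{E:pm-i-new}). To repair your argument you would either have to prove the uniform large-$\lambda$ normalization $\mathfrak m\to 1$ (a nontrivial resolvent estimate for $(1+\widetilde G_\lambda\ast v_0)^{-1}$ on the line-soliton background), or fall back on the paper's PDE-plus-boundary-condition identification of the constant. A secondary overstatement: for fixed $\lambda\notin\mathfrak Z$ the density $\mathfrak s_c(\zeta)/(\zeta-\lambda)$ is only $O(|\zeta-z|^{-1})$ near $z=\pm i\kappa$, hence locally integrable in two dimensions; the $|\zeta-z|^{-2}$ blow-up and the need for the Stokes-theorem regularization arise when $\lambda$ itself approaches $z$, i.e.\ for the uniform bounds and the radial limits, not for the convergence of the area integral at fixed $\lambda$.
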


\begin{proof}   
Theorem \ref{T:normal-eigen}  implies 
\begin{gather}
\ba{c}
\mathfrak m(x,y,\lambda)-\frac{\mathfrak m_{  res }(x,y  )}{\lambda-\iota} \quad\in L^\infty,\ea \label{E:regularize-m-ifty}\\
\ba{c}
  E_{0,n} T\mathfrak m(x,y,\lambda)\in  L^1(d\overline\lambda\wedge d\lambda), 
\ea\label{E:t-regularize-out-sing}
\end{gather} for $\forall   n>0$. Here $E_{z,a}$ is defined by Definition \ref{D:terminology}. 
Exploiting \eqref{E:t-regularize-out-sing} and applying \cite[\S I, Theorem 1.13, Theorem 1.14]{V62}, one 
derives
\beq\label{E:l-1-dbar}
\ba{c}
\partial_{\overline\lambda}\mathcal CE_{0,n} T\mathfrak m(x,y,\lambda) 
 =E_{0,n} T\mathfrak m(x,y,\lambda) \in  L^1(d\overline\lambda\wedge d\lambda).
 \ea
\eeq  Therefore, together with Theorem \ref{T:normal-eigen}, 
\beq\label{E:debar-initial}
\ba{c}
\partial_{\overline\lambda}\left[\mathfrak m(x,y,\lambda)-\frac{\mathfrak m_{ res }(x,y  )}{\lambda-\iota}-\mathcal CT\mathfrak m(x,y,\lambda)\right]=0.
\ea
\eeq
On the other hand, Lemma \ref{L:pogrebkov}, \ref{L:leading-iota}, and \ref{L:infinity-spectral} imply 
 \begin{gather}
 \ba{c}
| \mathcal CT\mathfrak m |\le C(1+|x|+|y|),\label{E:bdd-regularized}\ea\\
\ba{c}\mathcal C T\mathfrak m(x,y,\lambda)\to 0 \textit{ uniformly as $|\lambda|\to\infty$, $\lambda_R\ne 0$.}\label{E:bdd-regularized-inf}
 \ea
\end{gather}Applying \eqref{E:regularize-m-ifty}, \eqref{E:debar-initial}, \eqref{E:bdd-regularized},  and  Liouville's theorem,  one concludes  
\beq\label{E:unique-direct-problem-q}
\ba{c}
\mathfrak m(x,y,\lambda)=g(x,y)+\frac{\mathfrak m_{ res }(x,y)}{\lambda-\iota}  +\mathcal CT\mathfrak m(x,y,\lambda).
\ea
\eeq  
Equation \eqref{E:normal-ef-pde} and a direct computation  yield:
\beq\label{E:unique-computation}
\ba{rl}
&-u(x,y)\mathfrak m(x,y,\lambda)\\
=&\left(\partial_y-\partial_x^2+2i\lambda\partial_x\right)\mathfrak m(x,y,\lambda)\\
=&\left(\partial_y-\partial_x^2+2i\lambda\partial_x\right)
[g(x,y)+\frac    {\mathfrak m_{ res }(x,y)}{\lambda-\iota}    ]\\
&+\left(\partial_y-\partial_x^2+2i\lambda\partial_x\right)\mathcal CT\mathfrak m.
\ea
\eeq

Note that
\begin{eqnarray*}
&&\partial_x\mathcal CT\mathfrak m=\mathcal C[i(\lambda+\overline\lambda)T\mathfrak m+T(\partial_x \mathfrak m)],\\
&&\partial_x^2\mathcal CT\mathfrak m=\mathcal C[-(\lambda+\overline\lambda)^2T\mathfrak m+2i(\lambda+\overline\lambda)T(\partial_x \mathfrak m)+T(\partial_x ^2\mathfrak m)],\\
&&\partial_y\mathcal CT\mathfrak m=\mathcal C[ (\lambda^2-\overline\lambda^2) T\mathfrak m +T(\partial_y\mathfrak m)].
\end{eqnarray*}
 
Applying the Fourier transform theory, namely, \eqref{E:l2-linfty} and \eqref{E:high-der-lambda},  if $v(x,y)$ has $4$ derivatives in  $L^1\cap L^\infty$, then  
 \[
 \ba{l}
 (1-E_{+i\corr{\kappa},1/2}(\lambda)-E_{-i\corr{\kappa},1/2}(\lambda))(\lambda+\overline\lambda)\mathfrak s_c(\lambda),\\
 (1-E_{+i\corr{\kappa},1/2}(\lambda)-E_{-i\corr{\kappa},1/2}(\lambda))(\lambda+\overline\lambda)^2 \mathfrak s_c(\lambda),\\
 (1-E_{+i\corr{\kappa},1/2}(\lambda)-E_{-i\corr{\kappa},1/2}(\lambda))(\lambda^2-\overline\lambda^2)  \mathfrak s_c(\lambda),
 \ea
 \] 
are all bounded in $L^\infty\cap L^2(|\lambda_R|d\overline\lambda\wedge d\lambda)$. Therefore by \eqref{E:wick-infty-spec}, if  $\partial_y^j\partial_x^kv \in  {L^1\cap L^\infty}$, $ 0\le j,\ k\le 4 $,  one can adapt the proof for \eqref{E:bdd-regularized-inf} and  derive, as $|\lambda|\to\infty$, $\lambda_R\ne 0$, 
\[ 
\ba{c}
\left(\partial_y-\partial_x^2+2i\lambda\partial_x\right)\mathcal CT\mathfrak m \to 0.
\ea
\] 
So 
comparing growth in \eqref{E:unique-computation}, we conclude
\[\partial_x g(x,y)=0 
\]which turns \eqref{E:unique-direct-problem-q} into
\beq\label{E:lambda-infty}
\ba{c}
\mathfrak m(x,y,\lambda)-1=g(y)-1+\frac    {\mathfrak m_{ res }(x,y)}{\lambda-\iota}+ \mathcal C T \mathfrak m(x,y,\lambda).
\ea
\eeq
Fix $y$, and let $\epsilon>0$ be given. Let $\lambda\gg 1$, $\lambda_R\ne 0$, be chosen such that
\[
|\frac    {\mathfrak  m_{ res }(x,y)}{\lambda-\iota} +\mathcal C T \mathfrak m(x,y,\lambda)|<\frac \epsilon 2 
\]by \eqref{E:bdd-regularized-inf}. For this $\lambda$, by taking $x\to\infty$, and using the boundary property   \eqref{E:kp-line-normal-bdry-1}, we justify  $g\equiv 1$ and establish \eqref{E:cauchy-integral-equation-sf}.


\end{proof}

One direct corollary from  \eqref{E:cauchy-integral-equation-sf} is the uniform estimate  
\beq\label{E:corollary-cauchy-integral-equation}
\begin{array}{cl}
|\mathcal CT
 \mathfrak m |_{L^\infty}\le C.
\end{array}
\eeq

\begin{example}\label{Ex:cauchy-eq-v-0}
If $v_0(x,y)\equiv 0$, then $\gamma_+\equiv 0$, $\mathfrak s_c\equiv 0$. So \eqref{E:cauchy-integral-equation-sf} and \eqref{E:+i-pogrebkov-x-function-new} reduce to
\begin{gather}
\ba{c}
\mathfrak{  m}(x,y,\lambda) =1+\frac{\mathfrak m_{ res }(x,y  )}{\lambda-\iota } ,\ea\label{E:1}\\
\ba {c}\frac{\mathfrak m_{ res }(x,y  )}{-i\corr{\kappa}}  =-1+\mathfrak m_{+i\corr{\kappa},0}(x,y)    ,\ea\label{E:2} \\
 \ba{c}\frac{\mathfrak m_{ res }(x,y  )}{-3i\corr{\kappa}} =-1+\mathfrak m_{-i\corr{\kappa},0}(x,y) ,\ea\label{E:3}\\
 \ba{c}\mathfrak m_{+i\corr{\kappa},0}(x,y)=  {\mathfrak s}_de^{-2\corr{\kappa}x}\mathfrak m_{-i\corr{\kappa},0}(x,y) 
\ea\label{E:4}
\end{gather}  which yield
\beq\label{E:res-reg}
\ba{c}
 \mathfrak m_{-i\corr{\kappa},0}(x,y)=\frac{\frac 23}{1+e^{-2\corr{\kappa}x}},\ 
 \mathfrak m_{+i\corr{\kappa},0}(x,y)=\frac{-2}{1+e^{2\corr{\kappa}x}}, \\  \mathfrak m(x,y,\lambda)=
1+\frac{3i\kappa}{\lambda-2i\kappa}(1-\frac{\frac 23}{1+e^{-2\kappa x}}).
\ea
\eeq 
\end{example}


\end{document}